\newcommand{\sig}{\sigma}
\newcommand{{\ba}}{\bf a}
\newcommand{\ve}{\varepsilon}
\newcommand{\la}{\lambda}
\newcommand{\ga}{\gamma}
\newcommand{\pa}{\partial}
\newcommand{\ra}{\rightarrow}
\newcommand{\del}{\delta}
\newcommand{\na}{\nabla}
\newcommand{\De}{\Delta}
\newcommand{\al}{\alpha}
\newcommand{\be}{\begin{equation}}
\newcommand{\ee}{\end{equation}}
\newcommand{\om}{\omega}
\newtheorem{lem}{Lemma}{\bf}{\it}
{\it}{\rm}
\newtheorem{rem}{Remark}{\it}{\rm}
{\it}{\rm}
\newtheorem{theorem}{Theorem}
\newtheorem{proposition}{Proposition}
\numberwithin{theorem}{section}
\numberwithin{lem}{section}
\numberwithin{rem}{section}
\numberwithin{equation}{section}
\numberwithin{proposition}{section}
\numberwithin{corollary}{section}
\title{ Extensions of the Brascamp-Lieb Inequality and the Dipole Gas}
\author{Joseph G. Conlon  and Michael Dabkowski}
\address{(Joseph G. Conlon): University of Michigan\\ Department of Mathematics\\ Ann Arbor,
  MI 48109-1109}
\email{conlon@umich.edu}
\address{(Michael Dabkowski): University of Michigan-Dearborn \\ Department of Mathematics and Statistics\\ Dearborn,
  MI 48128}
\email{mgdabkow@umich.edu}
\keywords{Euclidean field theory, pde with random coefficients, homogenization}
\subjclass{35R60, 82B20, 82B28}
\begin{document}

\maketitle

\begin{abstract}
This paper is concerned with $d\ge2$ lattice field models with action $V(\na\phi(\cdot))$, where $V:\mathbb{R}^d\ra\mathbb{R}$ is a uniformly convex function.  The main result Theorem 1.4 proves that charge-charge correlations in the Coulomb dipole gas are close to Gaussian. These results go beyond previous results of Dimock-Hurd and Conlon-Spencer. The approach in the paper is based on the observation that the sine-Gordon probability measure corresponding to  the dipole gas is the invariant measure for a certain stochastic dynamics.  The stochastic dynamics here differs from the stochastic dynamics in previous work used  to study the problem.  
\end{abstract}

\section{Introduction}
Let $X$ be a random variable taking values in $\mathbb{R}^n$. The pdf of $X$ satisfies a Poincar\'{e} inequality if the variance of $C^1$ functions of $X$ are bounded by a constant times the expectation of the square of the Euclidean norm of the gradient of the function, i.e. there exists $m^2>0$ such that
\be \label{A1}
{\rm Var}[F(X)] \ \le \ \frac{1}{m^2} \langle  \|DF(X)\|_2^2    \rangle \quad {\rm for \ all \ } C^1 \ {\rm functions \ } \   \ F:\mathbb{R}^n\ra\mathbb{R} \ ,
\ee
where $DF(\cdot)$ denotes the gradient of $F(\cdot)$. Suppose the probability density function for $X$ is proportional to the function $\phi\ra\exp[-W(\phi)/\ve], \ \phi\in\mathbb{R}^n$, where $W:\mathbb{R}^n\ra\mathbb{R}$ is a $C^2$ convex function and $\ve>0$.  Denoting by $\langle\cdot\rangle_\ve$ expectation with respect to this pdf, the Brascamp-Lieb (BL) inequality \cite{bl} implies that
\be \label{B1}
\ve^{-1}{\rm Var}_\ve[F(X)] \ \le \  \langle  [DF(X), \{D^2W(X)\}^{-1} DF(X)]_n \ \rangle_\ve \  ,
\ee
where $D^2W(\cdot)$ is the Hessian of $W(\cdot)$ and $[\cdot,\cdot]_n$ denotes the Euclidean inner product on $\mathbb{R}^n$.  If $D^2W(\cdot)\ge m^2 I_n>0$ in the quadratic form sense then (\ref{B1}) with $\ve=1$ implies (\ref{A1}). 

There are many proofs of the BL inequality (\ref{B1}).  One that is particularly relevant to this paper is a simple consequence of the Helffer-Sj\"{o}strand (HS) representation \cite{hs} for the variance of $F(X)$. This is a formula for the variance of $F(X)$ as the expectation of a quadratic form in $DF(X)$.  In  \cite{ns1} Naddaf and Spencer  used the HS representation \cite{hs} and techniques of homogenization of elliptic PDE to prove convergence of the large scale behavior of covariances of fields for the lattice Coulomb dipole gas in the sine-Gordon  representation to a Gaussian limit. In addition, in \cite{ns2} they combined the BL inequality with the classical Meyers' theorem \cite{m} to obtain a rate of convergence in homogenization of elliptic PDE with random coefficients.  The use of Meyers' theorem (which is a consequence of the Calder\'{o}n-Zygmund (CZ) theorem \cite{stein}) is now standard in the theory of homogenization \cite{jos,josotto}. The CZ theorem was used in \cite{cf} to obtain rates of convergence to the scaling limit  for the field covariances studied in \cite{ns1}. 

To define the functions $W(\cdot)$ associated with the dipole gas we  consider a periodic lattice $Q_L$  in $\mathbb{Z}^d$ centered at the origin with side of length an even integer $L$.  Hence points on $\mathbb{Z}^d\cap \pa Q_L$ are identified, whence the number of distinct lattice points in $Q_L$ is $|Q_L|=L^d$.    We define the gradient operator $\na$ on fields $\phi:Q_L\ra\mathbb{R}$ by
\be \label{C1}
\na\phi(x) \ =  \ [\na_1\phi(x),\dots,\na\phi_d(x))] \ , \quad \na_i\phi(x) \ = \ \phi(x+\mathbf{e}_i)-\phi(x), \ 
\ee
where the vector $\mathbf{e}_i\in\mathbb{Z}^d$ has $1$ as the $i$th coordinate and $0$ for the other coordinates,  $1\le i\le d$. Here we always consider $\na$ to be  a $d$ dimensional {\it column} operator, with adjoint $\na^*$ which is  a $d$ dimensional {\it row} operator. Let $V:\mathbb{R}^d\ra\mathbb{R}$ be a  $C^2$ uniformly convex function, which satisfies 
\be \label{D1}
0<\la I_d \ \le \ V''(\cdot) \ \le \ I_d \ , \quad V(0)=0, \ \ V(\om)=V(-\om) \ , \ \  \om\in\mathbb{R}^d \ ,
\ee
for some  $\la, \ 0<\la\le 1$. In (\ref{D1}) the inequality on the Hessian $V''(\cdot)$ of $V(\cdot)$ is in the quadratic form sense. The function $W(\cdot)=W_{m,h,L}(\cdot)$ is defined in terms of $V(\cdot)$, depends on a parameter $m>0$, and a function $h:Q_L\ra\mathbb{R}^d$. We set
\be \label{E1}
W_{m,h,L}(\phi) \ = \ \sum_{x\in Q_L} \frac{m^2\phi(x)^2}{2}+ h(x)\cdot\nabla\phi(x)+V(\na \phi(x)) \ .
\ee
The Hessian of the function (\ref{E1}) is a self-adjoint  linear operator on  $\ell_2(Q_L,\mathbb{R})$ and is given by the formula
\be \label{F1}
D^2W_{m,h,L}(\phi)f(x) \ = \  [\na^*V''(\na \phi(x))\na+m^2] f(x) \quad f(\cdot)\in\ell_2(Q_L,\mathbb{R}), \ x\in Q_L \ .
\ee
It follows from (\ref{D1}), (\ref{F1})  that $W_{m,h,L}(\cdot)$ is a convex function on $\mathbb{R}^{L^d}$.   Let $\langle\cdot\rangle_{\ve,m,h,L}$ denote expectation with respect to the probability measure with density proportional to the function $\phi\ra\exp\left[     -W_{m,h,L}(\phi)/\ve         \right], \ \phi\in\mathbb{R}^{L^d}$. We denote the inner product and norm on the Euclidean space 
$\ell_2(Q_L,\mathbb{R}^d)$  by $[\cdot,\cdot]_L$ and $\|\cdot\|_{2,L}$ respectively. The  BL inequality  (\ref{B1}) then  implies that
\begin{multline} \label{G1}
\ve^{-1}{\rm var}_{\ve,m,h,L}\left\{  [a(\cdot),\na\phi(\cdot)]_L \ \right\} \\ 
\le \  [\na^*a(\cdot), \{-\la\De+m^2\}^{-1}\na^*a(\cdot) \ ]_L \ \le \ \la^{-1}\|a\|_{2,L}^2 \ , \quad a\in\ell_2(Q_L,\mathbb{R}^d) \ ,
\end{multline}
where $-\De=\na^*\na$ is the negative discrete Laplacian. 
Note that we may take $L\ra\infty$ and $m\ra0$ in (\ref{G1}) to obtain a bound on functions $a\in\ell_2(\mathbb{Z}^d,\mathbb{R}^d)$.

For $\ve,m>0$ we shall be interested in properties of the function $q_{\ve,m,L}(\cdot)$, with  the space of functions $h:Q_L\ra\mathbb{R}^d$ as domain, defined by
\be \label{H1}
q_{\ve,m,L}(h(\cdot)) \ = \ -\ve\log\int_{\mathbb{R}^{L^d}}d\phi(\cdot) \ \exp\left[   -\frac{W_{m,h,L}(\phi) }{\ve}  \right] \ .
\ee
Evidently we have that
\be \label{I1}
q_{\ve,m,L}(h(\cdot))-q_{\ve,m,L}(0) \ = \ -\ve\log\ \ \left\langle \  \exp\left[   -\frac{ [h(\cdot),\na\phi(\cdot)]_L }{\ve}  \right] \ \ \right\rangle_{\ve,m,0,L} \ .
\ee
In the Gaussian case the function (\ref{I1}) is quadratic in $h(\cdot)$. Let us suppose that $V(\cdot)$ is given by
\be \label{J1}
V(\om) \ = \ \frac{1}{2}\om^*A\om \ , \quad  A \ {\rm symmetric \ and \ \ } 0<\la I_d\le A\le I_d \ .
\ee
Then (\ref{I1}) yields the formula
\be \label{K1}
q_{\ve,m,L}(h(\cdot))-q_{\ve,m,L}(0) \ = \ -\frac{1}{2}\left[ h,\na(\na^*A\na+m^2)^{-1}\na^*h\  \right]_L \ , \quad h\in \ell_2(Q_L,\mathbb{R}^d) \ .
\ee
We easily see from (\ref{J1}), (\ref{K1}) that in the Gaussian case one has the inequality
\be \label{L1}
-\frac{1}{2}\left[ h,\na(-\la\De+m^2)^{-1}\na^*h\  \right]_L \ \le \ q_{\ve,m,L}(h(\cdot))-q_{\ve,m,L}(0) \ \le \ -\frac{1}{2}\left[ h,\na(-\De+m^2)^{-1}\na^*h\  \right]_L \ .
\ee
We may analytically continue the function (\ref{K1}) to complex $h\in \ell_2(Q_L,\mathbb{C}^d)$. Writing $h(\cdot)=\Re h(\cdot)+i\Im h(\cdot)$ with $\Re h(\cdot), \ \Im h(\cdot)\in \ell_2(Q_L,\mathbb{R}^d)$, we have from (\ref{K1}) that
\begin{multline} \label{M1}
\Re \left[ q_{\ve,m,L}(h(\cdot))-q_{\ve,m,L}(0) \ \right] \ = \\
 -\frac{1}{2}\left[ \Re h,\na(\na^*A\na+m^2)^{-1}\na^*\Re h\  \right]_L+ \frac{1}{2}\left[ \Im h,\na(\na^*A\na+m^2)^{-1}\na^*\Im h\  \right]_L \ , \quad h\in \ell_2(Q_L,\mathbb{C}^d) \ .
\end{multline}
It follows from (\ref{J1}), (\ref{M1}) that
\begin{multline} \label{N1}
-\frac{1}{2}\left[\Re h,\na(-\la\De+m^2)^{-1}\na^*\Re h\  \right]_L + \frac{1}{2}\left[\Im h,\na(-\De+m^2)^{-1}\na^*\Im h\  \right]_L \ \le \ \Re[q_{\ve,m,L}(h(\cdot))-q_{\ve,m,L}(0)]  \\
 \le \ -\frac{1}{2}\left[\Re h,\na(-\De+m^2)^{-1}\na^*\Re h\  \right]_L + \frac{1}{2}\left[\Im h,\na(-\la\De+m^2)^{-1}\na^*\Im h\  \right]_L \ .
\end{multline}

In this paper we shall be concerned with showing that the inequalities (\ref{L1}), (\ref{N1}),  which hold for the Gaussian case (\ref{J1}),  extend to  certain non-Gaussian $V(\cdot)$. In the case of the inequality (\ref{L1}) for real $h(\cdot)$ we require that $V(\cdot)$ satisfies (\ref{D1}).  In the case of complex $h(\cdot)$ the inequality (\ref{N1}) may be extended to non-Gaussian $V(\cdot)$ provided $V(\cdot)$ satisfies (\ref{D1}) and  can be analytically continued in a uniform way to a strip parallel to the real axis. More precisely, letting $|\cdot|$ denote the Euclidean norm on $\mathbb{C}^d$, we assume that the function $\om\ra V''(\om)$ is holomorphic and uniformly continuous in a strip, so for every $\eta>0$ there exists $\del(\eta)>0$ such that 
\be \label{O1}
\|V''(\om)-V''(\Re\om)\| \ < \ \eta \ , \quad {\rm for \ } \om=\Re\om+i\Im\om\in\mathbb{C}^d, \ \ |\Im\om|<\del(\eta) \ ,
\ee
where $\|\cdot\|$ in (\ref{O1}) denotes the Euclidean matrix norm of $d\times d$ complex matrices.

Inequalities of the type (\ref{L1}), (\ref{N1}) for non-Gaussian $V(\cdot)$ have already been proven in \cite{bl}, \cite{fs81}, which we summarize as follows:
\begin{theorem}
Assume  $V(\cdot)$ satisfies (\ref{D1}). Then the function $h(\cdot)\ra q_{\ve,m,L}(h(\cdot))$ with domain all real functions $h:Q_L\ra\mathbb{R}^d$, is concave and the  inequality (\ref{L1}) holds.  Suppose in addition that  $V(\cdot)$ satisfies(\ref{O1}). Then the lower bound of (\ref{N1}) for $h:Q_L\ra\mathbb{C}^d$ approximately holds, precisely:
\begin{multline} \label{P1}
\Re[q_{\ve,m,L}(h(\cdot))-q_{\ve,m,L}(0)] \ \ge \\
  -\frac{1}{2}\left[\Re h,\na(-\la\De+m^2)^{-1}\na^*\Re h\  \right]_L + \frac{1}{2}\left[\Im h,\na[-(1+\eta)\De+m^2]^{-1}\na^*\Im h\  \right]_L  \ ,
\end{multline}
provided the norm of $\Im h(\cdot)$ in $\ell_p(Q_L,\mathbb{R}^d)$  satisfies for some $p$ with $1<p<\infty$ the inequality
\be \label{Q1}
\|\Im h(\cdot)\|_{p,L} \ \le  \  \frac{\del(\eta)}{\kappa_p} \ ,
\ee
where $\kappa_p$ is a constant  independent of $L,m$ as $L\ra\infty, \ m\ra0$. Furthermore, one has that $\lim_{p\ra 2}\kappa_p=1, \  \lim_{p\ra\infty}\kappa_p=\infty$. 
\end{theorem}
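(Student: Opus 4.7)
The plan is to handle the real-$h$ part (concavity and (\ref{L1})) and the complex-$h$ part (\ref{P1}) by different methods. For the real case, twice-differentiating $q_{\ve,m,L}(h+tg)=-\ve\log Z(h+tg)$ in $t$ yields
\be
\frac{d^2}{dt^2}q_{\ve,m,L}(h+tg) \;=\; -\ve^{-1}{\rm Var}_{\ve,m,h+tg,L}\!\big([g,\nabla\phi]_L\big)\;\le\; 0,
\ee
giving concavity. Since $V$ is even, $q_{\ve,m,L}$ is even in $h$ with $\nabla q_{\ve,m,L}(0)=0$, and Taylor's integral remainder formula gives
\be
q_{\ve,m,L}(h)-q_{\ve,m,L}(0) \;=\; -\ve^{-1}\int_0^1 (1-t)\,{\rm Var}_{\ve,m,th,L}\!\big([h,\nabla\phi]_L\big)\,dt.
\ee
The lower bound in (\ref{L1}) then follows immediately from the BL inequality (\ref{B1}) applied to the linear functional $F=[h,\nabla\phi]_L$: its gradient $DF=\nabla^*h$ is $\phi$-independent, and the operator bound $D^2W_{m,th,L}\ge-\la\De+m^2$ from (\ref{F1}) together with $V''\ge\la I_d$ gives ${\rm Var}\le\ve[h,\nabla(-\la\De+m^2)^{-1}\nabla^*h]_L$ uniformly in $t$ and $h$.

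For the upper bound in (\ref{L1}), I would decompose $V(\om)=\tfrac12|\om|^2-U(\om)$; since $V''\le I_d$ and $V$ is even, $U$ is convex and even. Writing $W_{m,h,L}=\tilde W_{m,h,L}-U(\nabla\phi)$ with $\tilde W$ the Gaussian action of Hessian $-\De+m^2$ leads to the identity
\be
q_{\ve,m,L}(h)-q_{\ve,m,L}(0) \;=\; \big[q_{\tilde W}(h)-q_{\tilde W}(0)\big] + \ve\log\frac{\langle e^{U(\nabla\phi)/\ve}\rangle_{\tilde W,0}}{\langle e^{U(\nabla\phi)/\ve}\rangle_{\tilde W,h}}.
\ee
The first bracket equals the Gaussian value $-\tfrac12[h,\nabla(-\De+m^2)^{-1}\nabla^*h]_L$. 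Under $\tilde W_h$ the gradient $\nabla\phi$ is distributed as $X+\nabla\bar\phi_h$, with $X$ distributed as under $\tilde W_0$ and $\bar\phi_h=-(-\De+m^2)^{-1}\nabla^*h$. The map $a\mapsto \langle e^{U(X+a)/\ve}\rangle$ is convex in $a$ (convexity is preserved under expectation) with vanishing derivative at $a=0$ by the evenness of $U$ and the symmetry of $X$, hence is minimized at $a=0$. Thus the log contribution is nonpositive, yielding the upper bound.

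For the complex case, factor
\be
q_{\ve,m,L}(h)-q_{\ve,m,L}(0) \;=\; [q_{\ve,m,L}(\Re h)-q_{\ve,m,L}(0)]-\ve\log\big\langle e^{-i[\Im h,\nabla\phi]_L/\ve}\big\rangle_{\ve,m,\Re h,L},
\ee
bound the real contribution by the just-proven (\ref{L1}), and control the oscillatory expectation by a contour shift $\phi\to\phi+i\psi$ with $\psi:Q_L\to\R$. The analyticity of $V''$ on the strip $|\Im\om|<\del(\eta)$ from (\ref{O1}), together with Cauchy's theorem, makes the shift admissible provided $\nabla\psi$ remains in the strip. I would choose $\psi$ to solve (approximately) the elliptic equation $[\nabla^*V''(\nabla\phi)\nabla+m^2]\psi=-\nabla^*\Im h$; this cancels the leading imaginary phase, and the real part of the shifted action acquires a quadratic correction bounded, using the strip continuity $V''\le (1+\eta)I_d$ from (\ref{O1}), by $\tfrac12[\Im h,\nabla(-(1+\eta)\De+m^2)^{-1}\nabla^*\Im h]_L$, which is the term appearing in (\ref{P1}). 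To secure admissibility, I would apply a Meyers/Calder\'on-Zygmund estimate $\|\nabla\psi\|_{p,L}\le\kappa_p\|\Im h\|_{p,L}$ for $p$ near $2$, uniform in $\phi$, $m$, and $L$, with $\kappa_p\to 1$ as $p\to 2$ (sharp Riesz-transform behaviour on $L^2$) and $\kappa_p\to\infty$ at the Meyers endpoints; hypothesis (\ref{Q1}) then pins $\|\nabla\psi\|_{p,L}$ below $\del(\eta)$.

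The main obstacle is this last step. The contour shift requires $\nabla\psi$ to stay \emph{pointwise} in the analyticity strip of $V''$, whereas Meyers delivers only $L^p$ control; bridging this gap while keeping the estimate uniform in the random coefficient $V''(\nabla\phi)$, and preserving the sharp $\kappa_p\to 1$ limit that makes (\ref{Q1}) non-vacuous, is the delicate analytic point of the argument and is the reason for the new stochastic-dynamics formulation flagged in the abstract.
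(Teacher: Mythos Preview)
Your treatment of concavity and the lower bound in (\ref{L1}) matches the paper exactly. Your argument for the upper bound in (\ref{L1}) via the decomposition $V=\tfrac12|\om|^2-U$ and convexity of $a\mapsto\langle e^{U(X+a)/\ve}\rangle$ is correct and slightly different from the paper's, which instead performs a real translation $\phi\mapsto\phi+\psi$, bounds the Taylor remainder by $\tfrac12\|\na\psi\|_{2,L}^2$ using $V''\le I_d$, applies Jensen and the identity $\langle V'(\na\phi)\na\psi\rangle=0$, and then minimizes over $\psi$. Both routes give the same bound; the paper's has the advantage that it reuses verbatim for the imaginary shift below.

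The complex case contains a real gap. You propose to take $\psi$ solving $[\na^*V''(\na\phi)\na+m^2]\psi=-\na^*\Im h$, i.e.\ a $\phi$-dependent shift, and then invoke Meyers for random coefficients. But a $\phi$-dependent $\psi$ makes $\phi\mapsto\phi+i\psi(\phi)$ a genuine deformation rather than a translation: the Jacobian is no longer $1$ and Cauchy's theorem no longer reduces to a shift of contour, so the identity you need collapses. The paper avoids this entirely by choosing $\psi$ \emph{deterministically}, namely the minimizer $\psi_{\min}=-[-(1+\eta)\De+m^2]^{-1}\na^*\Im h$ of the quadratic form that appears after bounding $\Re[V(\na\phi+i\na\psi)-V(\na\phi)]\ge -\tfrac{1+\eta}{2}|\na\psi|^2$ pointwise via (\ref{O1}). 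Because $\psi_{\min}$ is fixed, only the \emph{constant-coefficient} Calder\'on--Zygmund bound $\|\na(-\De+m^2)^{-1}\na^*\|_{\ell_p}\le\kappa_p$ is needed, and on the discrete lattice $\sup_x|\na\psi_{\min}(x)|\le\|\na\psi_{\min}\|_{p,L}$, so (\ref{Q1}) immediately forces $\na\psi_{\min}$ into the analyticity strip. There is no uniformity-in-$\phi$ issue and no appeal to Meyers; in particular, the stochastic dynamics of the abstract is \emph{not} used for Theorem~1.1 at all---it enters only for Theorems~1.2--1.4.
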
  
\begin{rem}
Note that the function $p\ra \|\Im h(\cdot)\|_{p,L}, \ 1< p<\infty, $ is decreasing. The constant $\kappa_p$ appears in the CZ theorem, Theorem 2.1.
\end{rem}

To go beyond the results of Theorem 1.1 we use the fact that the distribution of the random variable $X$ in the BL inequality (\ref{B1}) is the invariant measure for the solution 
to a stochastic differential equation (SDE). The probability measure with density proportional to $\phi\ra\exp[-W(\phi)/\ve], \ \phi\in\mathbb{R}^n,$ is invariant for the stochastic dynamics
\be \label{R1}
d\phi(t) = b(\phi(t)) \ dt+\sqrt{\ve} \  dB(t) \ , \quad {\rm where \ } b(\phi)=-\frac{1}{2}DW(\phi) \ ,
\ee
and $B(\cdot)$ denotes $n$ dimensional Brownian motion.
The condition for  a density $\phi\ra u(\phi), \ \phi\in\mathbb{R}^n,$ to be invariant for the dynamics (\ref{R1}) is that it satisfy the elliptic equation
\be \label{S1}
D^*[-b(\phi)u(\phi)+\frac{\ve}{2} Du(\phi)] \ = 0 \ , \quad \phi\in\mathbb{R}^n \ ,
\ee
where $D^*$ denotes the divergence operator. Note that the function $u(\phi)=\exp[-W(\phi)/\ve]$ is a solution to the equation
\be \label{T1}
-b(\phi)u(\phi)+\frac{\ve}{2} Du(\phi) \ = 0 \ , \quad \phi\in\mathbb{R}^n \ ,
\ee
and hence (\ref{S1}).  It follows from (\ref{T1}) that for any  $n\times n$ symmetric positive definite matrix $A$ this function $\phi\ra u(\phi)$ is also a solution to the equation
\be \label{U1}
D^*[-Ab(\phi)u(\phi)+\frac{\ve}{2} ADu(\phi)] \ = 0 \ , \quad \phi\in\mathbb{R}^n \ .
\ee
Since the function $\phi\ra u(\phi)$ is a solution to (\ref{U1}), it  is also an invariant density for the stochastic dynamics
\be \label{V1}
d\phi(t) = Ab(\phi(t)) \ dt+\sqrt{\ve A} \  dB(t) \ .
\ee

In $\S3$ we shall apply the approach of the previous paragraph to the function $W(\cdot)$ defined by (\ref{E1})  to obtain an alternative proof of Theorem 1.1 and go beyond it.
\begin{theorem}
Assume  $V(\cdot)$ satisfies (\ref{D1}), (\ref{O1}). Then for $\eta$ in the interval $0<\eta<\la$ the function $h(\cdot)\ra q_{\ve,m,L}(h(\cdot))$ with domain all real functions $h:Q_L\ra\mathbb{R}^d$, can be analytically continued to the domain of complex valued functions $h:Q_L\ra\mathbb{C}^d$ in $\ell_2(Q_L,\mathbb{C}^d)$ satisfying $\|\Im h(\cdot)\|_{2,L}< (\la-\eta)\del(\eta)$.  Furthermore, in this domain the function $q_{\ve,m,L}(\cdot)$ satisfies the upper bound
\begin{multline} \label{W1}
 \Re[q_{\ve,m,L}(h(\cdot))-q_{\ve,m,L}(0)] \\
  \le \ -\frac{1}{2}\left[\Re h,\na(-\De+m^2)^{-1}\na^*\Re h\  \right]_L + \frac{1}{2(\la-\eta)}\left[\Im h,\na(-\De+m^2)^{-1}\na^*\Im h\  \right]_L \ ,
\end{multline}
and the lower bound,
\begin{multline} \label{X1}
 \Re[q_{\ve,m,L}(h(\cdot))-q_{\ve,m,L}(0)] \\
  \ge \ -\frac{1}{2\la}\left[\Re h,\na(-\De+m^2)^{-1}\na^*\Re h\  \right]_L + \frac{1}{2}\left[1-\frac{\eta}{(\la-\eta)^2}\right]\left[\Im h,\na(-\De+m^2)^{-1}\na^*\Im h\  \right]_L \ .
\end{multline}
\end{theorem}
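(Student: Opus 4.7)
The plan is to exploit the stochastic-dynamics reformulation of Section~2, which states that $\mu_{\ve,m,h,L}$ is invariant for the SDE (\ref{V1}) with $b=-\tfrac12 DW_{m,h,L}$ for every symmetric positive constant operator $A$ on $\ell_2(Q_L,\mathbb{R})$. I would use this flexibility together with a Helffer--Sj\"ostrand (HS) variance representation for real $h$ as an alternative proof of Theorem~1.1, and then extend to complex $h$ via an analytic contour shift in $\phi$ controlled by the holomorphy hypothesis (\ref{O1}).

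For real $h$, begin with the Taylor identity
\begin{equation*}
q_{\ve,m,L}(h) - q_{\ve,m,L}(0) \;=\; -\ve^{-1}\int_0^1 (1-t)\,{\rm Var}_{\ve,m,th,L}\big([h,\na\phi]_L\big)\,dt,
\end{equation*}
valid because $\langle \na\phi\rangle_{\ve,m,0,L}=0$ by the symmetry $V(\om)=V(-\om)$ in (\ref{D1}). The generator $L_A$ of (\ref{V1}), lifted to the vector-field operator $\mathcal{H}_A = -L_A + \tfrac12 D^2W_{m,h,L}\cdot A$, yields the HS representation ${\rm Var}_{\ve,m,h,L}([a,\na\phi]_L) = \tfrac{\ve}{2}\,[\na^* a,\, A\,\mathcal{H}_A^{-1}\na^* a]_L$. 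Taking $A=I$ with the quadratic-form inequality $D^2W_{m,h,L}\ge -\la\De+m^2$ gives the Brascamp--Lieb upper bound on the variance, hence the lower bound in (\ref{L1}); a dual choice of $A$ absorbing $V''\le I_d$ recovers the upper bound in (\ref{L1}).

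For complex $h$, $Z(h)=\int e^{-W_{m,h,L}/\ve}d\phi$ is entire since $|e^{-W_{m,h,L}/\ve}|=e^{-W_{m,\Re h,L}/\ve}$ is integrable. To analytically continue $q=-\ve\log Z$ into the strip $\|\Im h\|_{2,L}<(\la-\eta)\del(\eta)$ and simultaneously derive (\ref{W1}), (\ref{X1}), I would deform the integration contour via $\phi\mapsto\phi+i\chi$ with $\chi=\alpha(-\De+m^2)^{-1}\na^*\Im h$ for an appropriate constant $\alpha$ depending on whether we are chasing (\ref{W1}) or (\ref{X1}). By Cauchy's theorem this gives
\begin{equation*}
Z(h) \;=\; e^{-G_\alpha(\Im h)/\ve}\int e^{-W_{\rm sh}(\phi;\Re h,\Im h)/\ve}\,d\phi,
\end{equation*}
where $G_\alpha(\Im h)$ is an explicit real quadratic form in $\na(-\De+m^2)^{-1}\na^*$ producing the bilinear forms of (\ref{W1}), (\ref{X1}), and $W_{\rm sh}$ equals $W_{m,\Re h,L}(\phi)$ plus the Taylor remainder from expanding $V(\na\phi+i\na\chi)$ about $\na\phi$. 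The hypothesis on $\|\Im h\|_{2,L}$ together with the operator bound $\|\na(-\De+m^2)^{-1}\na^*\|\le 1$ on $\ell_2(Q_L,\mathbb{R}^d)$ keeps $\na\chi$ inside the holomorphy strip in the appropriate quadratic-form sense, so (\ref{O1}) bounds the Taylor remainder by $\eta\|\na\chi\|_{2,L}^2$; applying the real-$h$ estimate to $W_{\rm sh}$ then yields the stated bounds on $\Re[q_{\ve,m,L}(h)-q_{\ve,m,L}(0)]$.

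The main obstacle is reconciling the pointwise strip condition (\ref{O1}) with the $\ell_2$ hypothesis on $\Im h$, since $\|\na\chi\|_\infty$ is not controlled by $\|\Im h\|_{2,L}$ without a dimension-dependent embedding. I expect this to be resolved by using (\ref{O1}) only in its averaged form $[v,(V''(\om)-V''(\Re\om))v]_L\le \eta\|v\|_{2,L}^2$ for test vectors $v$ along the deformation path, recombined through integration by parts against $\na(-\De+m^2)^{-1}\na^*$. Tracking the $(\la-\eta)^{-1}$ coefficient in (\ref{W1}) versus the $(\la-\eta)^{-2}$ coefficient in (\ref{X1}) then reduces to bookkeeping the perturbation of $W_{\rm sh}$ by the $\eta$-sized Hessian error, with the difference between the two bounds reflecting whether $\alpha$ is chosen to match the upper Gaussian ($V''=I_d$) or the lower Gaussian ($V''=\la I_d$) propagator in the shift $\chi$.
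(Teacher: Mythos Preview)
Your contour-shift argument does recover a lower bound: on the finite lattice one has $\sup_x|\na\chi(x)|\le\|\na\chi\|_{2,L}\le\|\Im h\|_{2,L}$, so the holomorphy strip is respected and the argument of (\ref{F2})--(\ref{L2}) with $p=2$ yields (\ref{P1}), which is in fact stronger than (\ref{X1}). Your stated obstacle about pointwise versus $\ell_2$ control of $\na\chi$ is therefore not the real difficulty.

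The genuine gap is the upper bound (\ref{W1}). After any shift $\phi\mapsto\phi+i\chi$, Cauchy's theorem gives an \emph{identity} for $Z(h)$, but the only way to turn it into a real inequality is $|Z(h)|\le\int e^{-\Re W_{\rm sh}/\ve}\,d\phi$, which bounds $|Z|$ from above and hence $\Re q$ from \emph{below}. No choice of $\alpha$ reverses this; a lower bound on $|Z(h)|$ would require ruling out oscillatory cancellation in the shifted integral, and your outline offers no mechanism for that. Remark 1.2 of the paper flags exactly this limitation: the Theorem 1.1 methodology cannot bound second derivatives of $q_{\ve,m,L}$ at complex $h$. The paper's route to (\ref{W1}) is different in kind: it runs the SDE (\ref{C3}) with complex $h$, shows in Lemma 3.1 that the contraction rate $1-\la+\eta<1$ of the linearized operator $\mathcal{L}_T$ of (\ref{O3}) keeps $\|\Im\om^0_{\ve,m,h,L}(\cdot,t)\|_{2,L}<\del(\eta)$ for all $t$ whenever $\|\Im h\|_{2,L}<(\la-\eta)\del(\eta)$, then represents $D^2_{a,a}q_{\ve,m,L}(h)$ for complex $h$ as a $T\ra\infty$ limit along the flow via (\ref{CZ3}) and bounds its modulus by the Neumann series (\ref{DB3})--(\ref{DD3}). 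Integrating this two-sided second-derivative estimate along the imaginary segment as in (\ref{CX3}) gives (\ref{W1}); the sharper lower bound (\ref{X1}) then comes from a further positivity argument for the operators $\mathcal{L}_{\ve,k,\infty,r}$ in (\ref{BS3}), (\ref{DE3})--(\ref{DR3}). None of this is accessible from a single static contour shift.
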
  
\begin{rem}
The inequality (\ref{X1}) is of the same type as (\ref{P1}) but is weaker since (\ref{X1}) requires  $\|\Im h(\cdot)\|_{2,L}< (\la-\eta)\del(\eta)$ whereas (\ref{P1}) just requires
 $\|\Im h(\cdot)\|_{2,L}< \del(\eta)$. In addition, the term involving $\Im h(\cdot)$ on the RHS of (\ref{X1}) depends on $\la\le 1$ but the corresponding bound in (\ref{P1}) is independent of $\la$. The inequality (\ref{X1}) is obtained as a consequence of the bound (\ref{DR3}) on a second derivative  of the function $h(\cdot)\ra q_{\ve,m,L}(h(\cdot))$. The methodology used to prove Theorem 1.1 can only be applied to obtain bounds on second derivatives  of the function $h(\cdot)\ra q_{\ve,m,L}(h(\cdot))$ when $h(\cdot)$ is real.
\end{rem}

We assume now that the function $V:\mathbb{R}^d\ra\mathbb{R}$, in addition to satisfying (\ref{D1}) is also $C^3$. Defining $\|V'''(\om)\|$ by 
\be \label{Y1}
\|V'''(\om)\| \ = \ \sup\{ |V'''(\om)[\xi_1,\xi_2,\xi_3]|: \ \xi_j\in\mathbb{C}^d, \ |\xi_j|=1, \ j=1,2,3\} \ ,
\ee
let $M>0$ have the property
\be \label{Z1}
\sup_{\om\in\mathbb{R}^d} \|V'''(\om)\|  \ \le \ M \ < \ \infty \ .
\ee
It was shown in \cite{cs} that if (\ref{D1}), (\ref{Z1}) hold then for $h(\cdot)\in\ell_2(Q_L,\mathbb{R}^d)$  one has the inequality
\be \label{AA1}
\left| \ q_{\ve,m,L}(h(\cdot))-q_{\ve,m,L}(0)+\frac{1}{2\ve}\left\langle \ [h(\cdot),\na\phi(\cdot)]_L^2 \ \right\rangle_{\ve,m,0,L} \ \right| \ \le \ C(\la)M\|h\|_{2,L}^3 \ ,
\ee
where $C(\la)=1/6\la^2(\la-1/2)$, provided $\la$ in (\ref{D1}) satisfies the inequality $\la>1/2$.  In $\S4$ we show that the inequality (\ref{AA1}) holds for $0<\la\le 1$ with constant $C(\la)=1/6\la^3$.

We can also extend the inequality (\ref{AA1}) to complex $h(\cdot)$  in $\ell_p(Q_L,\mathbb{C}^d)$ as was the case in Theorem 1.1. For $V(\cdot)$ holomorphic in a strip  and  satisfying  (\ref{D1}), (\ref{O1}), (\ref{Z1}) there exists $M_\eta>0$ such that 
\be \label{AB1}
\sup_{\om\in\mathbb{C}^d: \ \|\Im\om\|<\del(\eta)} \|V'''(\om)\|  \ \le \ M _\eta\ < \ \infty \ .
\ee
\begin{theorem}
Assume  $V(\cdot)$ satisfies (\ref{D1}), (\ref{O1}), (\ref{AB1}) and $p$ is in the interval $1<p\le 3$. Then for $\eta$ satisfying $0<\eta< \la-(1-1/\kappa_p)$ and $h\in\ell_p(Q_L,\mathbb{C}^d)$ satisfying
$\|\Im h(\cdot)\|_{p,L}< [(\la-\eta)-(1-1/\kappa_p)]\del(\eta)$ there is the inequality
\be \label{AC1}
\left| \ q_{\ve,m,L}(h(\cdot))-q_{\ve,m,L}(0)+\frac{1}{2\ve}\left\langle \ [h(\cdot),\na\phi(\cdot)]_L^2 \ \right\rangle_{\ve,m,0,L} \ \right| \ \le \ 
\frac{M_\eta \|h\|_{p,L}^3}{6[(\la-\eta)-(1-1/\kappa_p)]^3} \ .
\ee
\end{theorem}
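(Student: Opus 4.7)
The plan is to obtain (\ref{AC1}) by a Taylor expansion of the functional $s\mapsto q_{\ve,m,L}(sh)$ on the interval $s\in[0,1]$, with the remainder controlled by a uniform bound on the third cumulant of $[h,\na\phi]_L$. Set $g(s)=q_{\ve,m,L}(sh)$. For real $h$ one has, from (\ref{H1}), the explicit derivatives
$$g'(s)=\AV{[h,\na\phi]_L}_{\ve,m,sh,L},\qquad g''(s)=-\ve^{-1}\AV{[h,\na\phi]_L^2}^c_{\ve,m,sh,L},\qquad g'''(s)=\ve^{-2}\AV{[h,\na\phi]_L^3}^c_{\ve,m,sh,L},$$
where $\langle\cdot\rangle^c$ denotes the connected (cumulant) expectation. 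The symmetry $V(\om)=V(-\om)$ gives $g'(0)=0$ and $g''(0)=-\ve^{-1}\AV{[h,\na\phi]_L^2}_{\ve,m,0,L}$, so Taylor's formula with integral remainder
$$g(1)-g(0)-g'(0)-\tfrac12 g''(0)=\int_0^1\tfrac12(1-s)^2g'''(s)\,ds$$
reduces the theorem to a uniform bound on $|g'''(s)|$ for $s\in[0,1]$.

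For complex $h$ in the domain specified by the theorem, one first needs the analytic continuation of $q_{\ve,m,L}$, which I would construct by a contour shift in $\phi$-space analogous to the one used in the proof of (\ref{P1}) for Theorem 1.1. The shift is chosen to convert the imaginary part of $h$ into a small imaginary part of $\na\phi$ which, by the holomorphy hypothesis (\ref{O1}), must stay inside the strip $|\Im\om|<\del(\eta)$ to preserve coercivity. Writing the shift as the fixed point of an equation of the form $\na\psi = \na(\na^* V''(\na\phi+i\na\psi)\na+m^2)^{-1}\na^*\Im h$ and iterating, the $\ell_p\to\ell_p$ norm of the Calder\'on–Zygmund operator $\na(-\La\De+m^2)^{-1}\na^*$ (cf.\ Theorem 2.1) is at most $\kappa_p$, and the real-part coercivity $V''\ge(\la-\eta)I_d$ that survives the shift (by (\ref{O1})) allows the Neumann series to close precisely when $\|\Im h\|_{p,L}<[(\la-\eta)-(1-1/\kappa_p)]\del(\eta)$. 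This is what produces the slightly unusual denominator in the statement.

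The heart of the proof is the uniform bound on $|g'''(s)|$ for $s\in[0,1]$. I would express the third cumulant via the Helffer--Sj\"ostrand representation, exactly as in \cite{cs} for the real case: after two applications of the HS identity, $g'''(s)$ becomes an integral of $V'''(\na\phi+i\na\psi)$ contracted against three copies of the resolvent operator $\na(-\na^*V''(\na\phi+i\na\psi)\na+m^2)^{-1}\na^*$ applied to $h$. In the complex setting the effective Hessian appearing in this resolvent is $V''(\na\phi+i\na\psi)$, which by (\ref{O1}) has real part $\ge(\la-\eta)I_d$, so each resolvent is bounded on $\ell_p(Q_L,\mathbb{C}^d)$ by $\kappa_p/[(\la-\eta)-(1-1/\kappa_p)]$ after accounting for the same geometric loss produced by the contour-shift iteration. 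The supremum of $\|V'''\|$ in the strip is bounded by $M_\eta$ via (\ref{AB1}). Combining the three resolvent factors with $\|h\|_{p,L}$ and H\"older gives
$$|g'''(s)|\ \le\ \frac{M_\eta\,\|h\|_{p,L}^3}{[(\la-\eta)-(1-1/\kappa_p)]^3},\qquad s\in[0,1],$$
and substitution into the Taylor remainder, followed by taking real parts, yields (\ref{AC1}).

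The main obstacle is the last step: one must verify, on a complex shift whose existence is itself obtained by a Neumann series in the same CZ operator, a third-derivative estimate whose constants close self-consistently with the domain assumption on $\|\Im h\|_{p,L}$. In particular, both the contour-shift step and the resolvent bound in the cumulant formula consume the same $(1-1/\kappa_p)$ factor, so care is needed to ensure that the coercivity is not exhausted before the $V'''$ integral is estimated; the restriction $p\le 3$ enters at this stage to make the H\"older pairing of three factors of $h$ compatible with the $\ell_p$ operator bounds on the resolvents. Once these are reconciled, the rest is routine.
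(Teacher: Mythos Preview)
Your Taylor expansion via $g(1)-g(0)-g'(0)-\tfrac12 g''(0)=\int_0^1\tfrac12(1-s)^2 g'''(s)\,ds$ is exactly the paper's identity (\ref{I4}), and the target bound $|g'''(s)|\le M_\eta\|h\|_{p,L}^3/[(\la-\eta)-(1-1/\kappa_p)]^3$ is the right one. But the route you propose to that bound---a $\phi$-space contour shift for analytic continuation, followed by the Helffer--Sj\"ostrand resolvent formula for the third cumulant---is not the paper's route, and the contour-shift step has a real gap.

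The paper works entirely through the alternative stochastic dynamics (\ref{C3}), obtained by applying $A=(-\De+m^2)^{-1}$ to the standard Langevin SDE. Analytic continuation to complex $h$ comes from Lemma 4.1 and Proposition 4.1: the imaginary part of $\om_{\ve,m,h,L}(\cdot,t)$ evolves deterministically by (\ref{BV3}), and the nonlinearity is a contraction on $\ell_p$ with rate $\kappa_p(1-\la+\eta)<1$, so $\sup_t\|\Im\om\|_{p,L}<\del(\eta)$ is propagated globally. The third derivative is then bounded in Proposition 4.2 (inequality (\ref{N4})) by differentiating the first-variation equation (\ref{DA3}) once more to obtain (\ref{D4})--(\ref{E4}); each differentiation contributes one factor of $(I-\mathcal{L}_T)^{-1}$ with $\|\mathcal{L}_T\|_{\mathcal{E}_{T,p}}\le\kappa_p(1-\la+\eta)$, and the $V'''$ contraction is bounded by $M_\eta$ via (\ref{AB1}) and H\"older. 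The condition $1/q_1+1/q_2+1/q_3\ge1$ in (\ref{N4}) with $q_j=p$ is precisely your restriction $p\le3$. No HS formula and no contour shift are used.

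Your approach runs into the obstacle flagged in Remark 1.2: the contour-shift method of Theorem 1.1 yields only a one-sided bound on $\Re q_{\ve,m,L}$, not derivative bounds at complex $h$. The fixed-point equation you write for the shift, $\na\psi=\na(\na^*V''(\na\phi+i\na\psi)\na+m^2)^{-1}\na^*\Im h$, makes $\psi$ depend on $\phi$; this is then not a rigid translation of the integration contour, the Jacobian is nontrivial, and the resulting weight is no longer $\exp[-W/\ve]$ for a real convex $W$, so the HS identity does not apply as stated. If instead $\psi$ is taken deterministic as in (\ref{K2}), the imaginary part of $h$ is not absorbed and one is back to bounding an oscillatory integral rather than an analytic continuation. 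One could conceivably repair this, but it amounts to rebuilding by hand what the SDE framework of \S3--\S4 supplies directly: the dynamics carry complex $h$ without any deformation, and the contraction estimate on $\mathcal{L}_T$ replaces both the coercivity needed for the shift and the HS resolvent bound.
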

We  may interpret Theorems 1.1-1.3 using (\ref{I1}) as results about expectations of certain functions of gradient fields on $Q_L$. These can be extended to results about gradients fields on $\mathbb{Z}^d$.
We define the measure $\langle\cdot\rangle_\ve$ on $\mathbb{Z}^d$ gradient fields $\om(x)=\na\phi(x)\in\mathbb{R}^d, \ x\in\mathbb{Z}^d,$  by  the limits
\be \label{AD1}
\left\langle \ f\left(\om(x_1),\dots,\om(x_N)\right) \ \right\rangle_\ve \ = \ \lim_{m\ra 0}\lim_{L\ra\infty}  \left\langle \    f\left(\na\phi(x_1),\dots,\na\phi(x_N) \right)   \   \right\rangle_{\ve,m,0,L} \ .
\ee
One can see using the BL inequality \cite{conlon1} that the limit (\ref{AD1}) exists for  a wide class of functions $f(\cdot)$, including the function on the RHS of (\ref{I1}).   The measure $\langle \cdot\rangle_\ve$ in (\ref{AD1})  was first constructed in \cite{fs} (see also \cite{arms}). 

We show how the previous theorems can be applied to estimate  expectations and covariances of the variables $\phi(\cdot)\ra\exp\left[\rho\phi(x)/\ve\right], \ x\in\mathbb{Z}^d, \ \rho\in\mathbb{C},$ with respect to the measure $\langle\cdot\rangle_\ve$.  For $\nu>0$ we denote by $G_\nu(\cdot)$ the Green's function on $\mathbb{Z}^d$, which is the solution to the discrete Helmholtz equation
\be \label{AE1}
[\na^*\na+\nu]G_\nu(y)   \ = \ \del(y), \ y\in\mathbb{Z}^d \ ,
\ee
where $\del(\cdot)$ is the Kronecker delta function.  Letting $h_{x,\nu}:\mathbb{Z}^d\ra\mathbb{R}^d$ be the function $h_{x,\nu}(y)=\na G_\nu(y-x), \ y\in\mathbb{Z}^d,$ and denoting by $[\cdot,\cdot]_\infty$ the Euclidean inner product on $\ell_2(\mathbb{Z}^d,\mathbb{R}^d)$, it follows from (\ref{AE1}) that
\be \label{AF1}
[h_{x,\nu}(\cdot),\na\phi(\cdot)]_\infty \ = \  \phi(x)-\nu[G_\nu(\cdot-x),\phi(\cdot)]_\infty \ .
\ee
The identity (\ref{AF1}) holds for all $\na\phi\in\ell_2(\mathbb{Z}^d,\mathbb{R}^d)$ and $\nu>0$.  If $d\ge 3$ we may take the limit as $\nu\ra0$ in (\ref{AF1}) to obtain the identity 
$[h_{x,0}(\cdot),\na\phi(\cdot)]_\infty \ = \  \phi(x)$. We have however for all $d\ge2$ that
\be \label{AG1}
\lim_{\nu\ra0} [h_{x,\nu}(\cdot)-h_{0,\nu}(\cdot),\na\phi(\cdot)]_\infty  \ = \ \phi(x)-\phi(0) \ , \quad x\in\mathbb{Z}^d \ .
\ee
The expectation
\be \label{AH1}
\left\langle     \exp\left[     -\frac{\rho \phi(x)}{\ve}         \right]            \right\rangle_\ve \ = \ \left\langle     \exp\left[     -\frac{\rho \phi(0)}{\ve}         \right]            \right\rangle_\ve  \ , \quad  x\in\mathbb{Z}^d \ ,
\ee
may be estimated using (\ref{I1}) and  theorems 1.1, 1.2.  For $\rho\in\mathbb{R}$ we obtain from theorem 1.1  the inequalities
\be \label{AI1}
\exp\left[    \frac{C_d\rho^2}{2\ve}         \right] \ \le \ \left\langle     \exp\left[     -\frac{\rho \phi(0)}{\ve}         \right]            \right\rangle_\ve  \ \le \ \exp\left[    \frac{C_d\rho^2}{2\la\ve}         \right] \  , \quad \rho\in\mathbb{R} \ .
\ee
where the positive constant $C_d$ depends only on $d$. If $d\ge 3$ then $C_d$ is finite but $C_2=\infty$.  If $\rho$ is pure imaginary the expectation (\ref{AH1}) is real  by the reflection invariance (\ref{D1}). We then have from theorems 1.1, 1.2 the inequalities
\begin{multline} \label{AJ1}
\exp\left[   - \frac{C_d\mu^2}{2(\la-\eta)\ve}         \right] \ \le \ \left\langle     \exp\left[     -\frac{i\mu \phi(0)}{\ve}         \right]            \right\rangle_\ve  \ \le \ \exp\left[   - \frac{C_d\mu^2}{2(1+\eta)\ve}         \right] \  , \\
 \mu\in\mathbb{R} \ , \quad |\mu| \ <c_d(\la-\eta)\del(\eta) \ ,
\end{multline}
where the constants $C_d,c_d$ depend only on $d$ and $C_d$ is the same constant as in (\ref{AI1}).  If $d\ge 3$ the constants $C_d,c_d$ are finite and positive, but for $d=2$ there is a singularity, as has already been observed. To understand the nature of the singularity we replace  the expectation in (\ref{AJ1}) by the expectation in which $\phi(0)$ is replaced by  $[h_{0,\nu}(\cdot),\na\phi(\cdot)]_\infty $ and then take $\nu\ra0$. From theorem 1.1 the upper bound in (\ref{AJ1}) holds  provided $|\mu|\|h_{0,\nu}(\cdot)\|_p\le \del(\eta)/\kappa_p$.  Since $|h_{0,\nu}(y)|\le C/[1+|y|]^{d-1}, \ y\in\mathbb{Z}^d$,  it follows that in $d=2$ the function $h_{0,\nu}(\cdot)$ is uniformly summable in $\ell_p(\mathbb{Z}^d,\mathbb{R}^d)$ as $\nu\ra0$ for any $p>2$.  The upper bound in (\ref{AJ1}) therefore holds in an interval $|\mu|<c_2\del(\eta)$  for some constant $c_2>0$, uniformly as $\nu\ra0$.  Since the corresponding  lower bound on the RHS of (\ref{P1}) converges to $+\infty$ as $\nu\ra0$ we conclude that the upper bound in (\ref{AJ1}) converges to $0$. 

The covariance of the variables $\exp\left[\rho\phi(x)/\ve\right], \ x\in\mathbb{Z}^d,$ with $\exp\left[-\rho\phi(0)/\ve\right]$ may be estimated  for large $|x|$  in terms of the Green's function for the homogenized constant coefficient elliptic PDE
\be \label{AK1}
-\na\mathbf{a}_{\ve,{\rm hom}}\na u_{\ve,{\rm hom}}(x) \ =  \ f(x) \ , \quad x\in\mathbb{R}^d \ ,
\ee
associated with the massless measure $\langle\cdot\rangle_\ve$, which was obtained by Naddaf and Spencer \cite{ns1}.  The $d\times d$ matrix $\mathbf{a}_{\ve,{\rm hom}}$ in (\ref{AK1}) is symmetric positive definite and satisfies the quadratic form inequalities
\be \label{AL1}
0 \ < \ \la I_d \ \le \ \mathbf{a}_{\ve,{\rm hom}} \ \le \ I_d \ .
\ee
The inequality (\ref{AL1}) is a consequence of (\ref{D1}).  Denoting by $G_{\mathbf{a}_{\ve,{\rm hom}}}(x) , \ x\in\mathbb{R}^d,$ the Green's function for (\ref{AK1}) we note that at large $|x|$,
\be \label{AM1}
\begin{array}{ccc}
G_{\mathbf{a}_{\ve,{\rm hom}}}(x)  \ &\simeq&  \ |x|^{-(d-2)}  \ , \quad d\ge 3 \ ,\\
G_{\mathbf{a}_{\ve,{\rm hom}}}(0)-G_{\mathbf{a}_{\ve,{\rm hom}}}(x)  \ &\simeq&  \quad \log |x| \quad d=2 \ .
\end{array}
\ee
\begin{theorem}
Assume  $V(\cdot)$ satisfies (\ref{D1}), (\ref{O1}), (\ref{AB1}). Then there is a constant $c_d>0$, depending only on $d\ge2$ such that for $\rho\in\mathbb{C}$ satisfying $|\Im\rho|<c_d(\la-\eta)\del(\eta)$ the following hold: If $d\ge 3$ and $1-\la,\eta$ are sufficiently small, then one has that
\begin{multline} \label{AN1}
{\rm cov}_\ve\left\{   \exp\left[\rho\phi(x)/\ve\right], \        \exp\left[-\rho\phi(0)/\ve\right]                \right\} \ = \\
\left\langle     \exp\left[     -\frac{\rho \phi(0)}{\ve}         \right]            \right\rangle_\ve^2\left\{\exp\left[  \frac{-\rho^2 G_{\mathbf{a}_{\ve,{\rm hom}}}(x) +|\rho|^3{\rm Error}_\ve(x)}{\ve}  \right]-1\right\} \ ,
\end{multline}
where $|{\rm Error}_\ve(x)|\le C/[|x|^{d-2+\al}+1]$ for some $\al>0$. If $d=2$ and $p$ satisfies $2<p\le 3$, there is a constant $c_p$ such that if  
$|\Im\rho|<c_p[(\la-\eta)-(1-1/\kappa_p)]\del(\eta)$ then
\begin{multline} \label{AO1}
   \left\langle     \exp\left[     \frac{\rho \{\phi(x)-\phi(0)\}}{\ve}         \right]            \right\rangle_\ve    \ = \\
             \exp\left[\frac{\rho^2\{G_{\mathbf{a}_{\ve,{\rm hom}}}(0)-G_{\mathbf{a}_{\ve,{\rm hom}}}(x)\}+|\rho|^3{\rm Error}_\ve(x) }{\ve}\right] \ ,
\end{multline}
where $|{\rm Error}_\ve(x)|\le C$. The constants $C$ in (\ref{AN1}), (\ref{AO1}) are independent of $\ve>0$. 
\end{theorem}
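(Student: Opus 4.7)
The plan is to express each exponential moment via the identity (\ref{I1}) as $\exp[-(q_{\ve,m,L}(h)-q_{\ve,m,L}(0))/\ve]$ for a suitable choice of $h$, apply Theorem 1.3 to split $q(h)-q(0)$ into a quadratic (Gaussian) part plus a cubic-order remainder, and identify the quadratic part with the homogenized Green's function via the Naddaf-Spencer theory \cite{ns1} combined with the quantitative rate from \cite{cf}. Set $h_+=-\rho h_{x,0}$ and $h_-=\rho h_{0,0}$ so that $[h_+,\na\phi]_\infty=-\rho\phi(x)$ and $[h_-,\na\phi]_\infty=\rho\phi(0)$ (for $d\ge 3$; for $d=2$ use instead $h=-\rho(h_{x,\nu}-h_{0,\nu})$ and take $\nu\ra 0$ through (\ref{AG1})). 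Translation invariance together with the $\phi\ra-\phi$ symmetry of $V(\cdot)$ gives $q_{\ve,m,L}(h_+)=q_{\ve,m,L}(h_-)$, so the covariance in (\ref{AN1}) takes the form $\langle e^{-\rho\phi(0)/\ve}\rangle_\ve^2\{\exp(-\De/\ve)-1\}$ with $\De=q(h_++h_-)-2q(h_+)+q(0)$.

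Applying (\ref{AC1}) to $q(h_++h_-)$, $q(h_+)$, and $q(h_-)$ and using bilinearity of the quadratic form in $h$, one obtains
\be
\De \ = \ \frac{\rho^2}{\ve}\langle\phi(x)\phi(0)\rangle_\ve \ + \ \big[R(h_++h_-)-2R(h_+)\big],\ \ R(h)=q(h)-q(0)+\tfrac{1}{2\ve}\langle[h,\na\phi]^2\rangle_{\ve,m,0,L}.
\ee
The quadratic piece is the two-point function of the infinite-volume massless dipole gas (\ref{AD1}); by \cite{ns1} its scaling limit is Gaussian with covariance $G_{\mathbf{a}_{\ve,{\rm hom}}}$, and the Calder\'on-Zygmund/Meyers input from \cite{cf} sharpens this to a pointwise estimate $\langle\phi(x)\phi(0)\rangle_\ve=\ve G_{\mathbf{a}_{\ve,{\rm hom}}}(x)+O(|x|^{-(d-2+\al)})$ for some $\al>0$, provided $1-\la$ and $\eta$ are small enough to keep the Meyers exponent in the favorable range. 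This yields the main term $-\rho^2 G_{\mathbf{a}_{\ve,{\rm hom}}}(x)/\ve$ of (\ref{AN1}).

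The delicate step is controlling $R(h_++h_-)-2R(h_+)$ with $x$-dependent decay. The direct bound from (\ref{AC1}) supplies only an $O(|\rho|^3)$ estimate uniform in $x$, which is sufficient for $d=2$ (where the leading term $G_{\rm hom}(0)-G_{\rm hom}(x)$ diverges like $\log|x|$) but not for $d\ge 3$. To gain decay, one uses the exact representation
\be
\De \ = \ -\frac{1}{\ve}\int_0^1\!\!\int_0^1 {\rm cov}_{sh_++th_-}\big([h_+,\na\phi],[h_-,\na\phi]\big)\,ds\,dt,
\ee
obtained by twice integrating the identity $\pa_{t_1}\pa_{t_2}q(t_1h_1+t_2h_2)=-\ve^{-1}{\rm cov}_{t_1h_1+t_2h_2}([h_1,\na\phi],[h_2,\na\phi])$. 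Differentiating further in $(s,t)$ then expresses $R(h_++h_-)-2R(h_+)$ in terms of fourth and higher connected cumulants of the massless measure evaluated at $x$ and $0$; their $|x|^{-(d-2+\al)}$ decay---again a consequence of the Helffer-Sj\"ostrand framework combined with the quantitative homogenization of \cite{cf}---supplies the desired bound.

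The $d=2$ identity (\ref{AO1}) is treated in the same way with $h=-\rho(h_{x,\nu}-h_{0,\nu})$: $\|h_{x,\nu}-h_{0,\nu}\|_p$ is bounded uniformly in $\nu$ for every $p>2$ (using $|\na G_\nu(y)|\le C(1+|y|)^{-(d-1)}$), so Theorem 1.3 furnishes a bounded $O(|\rho|^3)$ error, and the constraint on $|\Im\rho|$ through $\kappa_p$ is inherited directly from (\ref{AC1}). The \emph{main obstacle} is the $d\ge 3$ case: establishing the $|x|^{-(d-2+\al)}$ decay of the connected four-point cumulants. This is where the smallness of $1-\la$ and $\eta$ is used, ensuring that the Meyers exponent for the associated Helffer-Sj\"ostrand elliptic problem exceeds the threshold required to beat the free Gaussian decay, and thereby produces a positive rate $\al$.
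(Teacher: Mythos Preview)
Your overall architecture is right and your $d=2$ argument is essentially the paper's: apply Theorem~1.3 with $h=-\rho(h_{x,\nu}-h_{0,\nu})$, use the uniform $\ell_p$ bound on $h_{x,\nu}-h_{0,\nu}$ for $p>2$, and invoke \cite{cf} for the quadratic piece.

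The gap is in the $d\ge 3$ remainder. You correctly observe that the raw bound (\ref{AC1}) gives $|R(h_++h_-)-2R(h_+)|\le C|\rho|^3$ with no $x$-decay, and that one must do better. But your proposed fix---``differentiate further in $(s,t)$ and appeal to decay of fourth and higher cumulants via HS plus \cite{cf}''---is not a proof. The reference \cite{cf} supplies the pointwise rate (\ref{AN4}) for the \emph{two}-point function only; decay of higher connected correlations at perturbed measures $\langle\cdot\rangle_{sh_++th_-}$ is not provided there, and the Helffer--Sj\"ostrand representation by itself does not produce it. You have identified where the difficulty lies but not how to resolve it.

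The paper's mechanism is different and quite concrete. It does not pass to higher cumulants. Instead it writes the third derivative via the stochastic dynamics of \S3 as the explicit trilinear expression (\ref{Z4}), namely an integral over $t$ and a sum over $y\in\mathbb{Z}^d$ of $V'''(\om(y,t))$ acting on three vectors: one built from $a_1=-\rho h_{x,\nu}$ through the iterated operator (\ref{T3}), and the other two being $D_{a_j}\om^0(y,t)$ solving the first-variation equation (\ref{DA3}). Pointwise decay in $y$ (and hence in $x$ after summation) of each of these three factors is then obtained by running the contraction-mapping argument for $\mathcal{L}_T$ not on $\ell_p$ but on the \emph{weighted} space $\ell_p(w_{\al,p})$ with $w_{\al,p}(y)=(1+|y|^{d-1-\al})^p$. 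The key input is the Muckenhoupt $A_p$ bound (\ref{AE4}) for $\na(-\De+m^2)^{-1}\na^*$, which makes $\mathcal{L}_T$ a contraction on the weighted space once $\kappa_{p,\al}(1-\la+\eta)<1$; this is precisely where ``$1-\la,\eta$ sufficiently small'' enters. The outcome is (\ref{AL4}), giving the remainder a decay $|x|^{-(d-1-\al)}$, which combined with (\ref{AN4}) yields the stated $|x|^{-(d-2+\al)}$. This weighted-norm step is the heart of the $d\ge3$ proof (cf.\ Remark~1.3), and it is absent from your proposal.
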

\begin{rem}
The proof of  (\ref{AN1}) relies on estimates of  singular integral  operators on weighted $\ell_p$ spaces.  Estimates on weighted Hilbert spaces $\ell_2$ were already used in \cite{cs}.  Following the discussion after (\ref{AJ1}),  the expectation in (\ref{AO1}) when $\rho=i\mu$ with $\mu\in\mathbb{R}$  is also  formally a covariance.
\end{rem}

In the case of the dipole gas the function $V(\cdot)$ is given by the formula
\be \label{AP1}
V(\om) \ = \ \frac{|\om|^2}{2}-a\sum_{j=1}^d\cos\om_j \ , \quad \om=[\om_1,\dots,\om_d]\in\mathbb{R}^d \ .
\ee
If $a\in\mathbb{R}$ satisfies $|a|<1$ then the conditions  (\ref{D1}), (\ref{O1}), (\ref{AB1}) hold for $V(\cdot)$ in (\ref{AP1}), whence the results of Theorem 1.4 apply to the dipole gas. In this case Theorem 1.4 goes beyond similar results  for the dipole gas obtained by Dimock and Hurd- Theorem 3 of  \cite{dh}. However the results of Theorem 3 of \cite{dh} also hold for $a\in\mathbb{C}$, but with $|a|\ll 1$.  The renormalization group method used in \cite{dh} (see also \cite{dimock}) does not allow for reasonable estimates on the smallness of the parameter $a$ in (\ref{AP1})  or of $\rho$ in the statement of Theorem 1.4.  However it is a powerful method and can be applied to some probability measures which are not uniformly convex. The method was first introduced by Gawedzki and Kupiainen \cite{gk} and later refined by Brydges and Yau \cite{bryau} in an influential paper (see also \cite{brydges}). 
\vspace{.1in}
\section{Proof of Theorem 1.1}
We use the BL inequality (\ref{G1})  to obtain the lower bound in (\ref{L1}). Applying the fundamental theorem of calculus (FTC) to (\ref{H1}) and using the symmetry property of (\ref{D1}) we have that
\begin{multline} \label{A2}
q_{\ve,m,L}(h(\cdot))-q_{\ve,m,L}(0) \ = \ \int_0^1d\al \ (1-\al)\frac{d^2}{d\al^2} q_{\ve,m,L}(\al h(\cdot))  \\
= \ -\int_0^1 d\al \ (1-\al) \ \ve^{-1}{\rm var}_{\ve,m,\al h,L}\left\{  [h(\cdot),\na\phi(\cdot)]_L \ \right\} \   .
\end{multline}
To obtain the upper bound in (\ref{L1}) we use contour deformation and Jensen's inequality. 
We can for any $\psi\in \ell_2(Q_L,\mathbb{R})$ make the deformation $\phi(\cdot)\ra\phi(\cdot)+\psi(\cdot)$  in the integration (\ref{H1}), which yields the identity
\begin{multline} \label{B2}
 \left\langle \  \exp\left[   -\frac{ [h(\cdot),\na\phi(\cdot)]_L }{\ve}  \right] \ \ \right\rangle_{\ve,m,0,L} \ = \
   \exp\left[   -\frac{m^2\|\psi\|_{2,L}^2 +2[h(\cdot),\na \psi(\cdot)]_L }{2\ve}  \right]  \ \times \\
   \left\langle \ \exp\left[   -\frac{1}{\ve}\left\{ m^2[\psi(\cdot),\phi(\cdot)]_L+  [h(\cdot),\na\phi(\cdot)]_L  +\sum_{x\in Q_L} [V(\na\phi(x)+\na \psi(x))-V(\na\phi(x))] \ \right\} \right] \ \ \right\rangle_{\ve,m,0,L}  \ .
\end{multline}
It follows from (\ref{D1}) that
\be \label{C2}
\sum_{x\in Q_L} [V(\na\phi(x)+\na \psi(x))-V(\na\phi(x))] \le \sum_{x\in Q_L} V'(\na\phi(x))\na\psi(x) +\frac{1}{2}\|\na\psi(\cdot)\|_{2,L}^2 \ .
\ee
Substituting the inequality (\ref{C2}) into (\ref{B2}) and using Jensen's inequality combined with the symmetry condition of (\ref{D1}) we conclude that
\begin{multline} \label{D2}
q_{\ve,m,L}(h(\cdot))-q_{\ve,m,L}(0) \ \le \\
    \min_{\psi\in\ell_2(Q_L,\mathbb{R})}\left\{\frac{m^2}{2}\|\psi\|_{2,L}^2 +[h(\cdot),\na \psi(\cdot)]_L +        \frac{1}{2}\|\na\psi(\cdot)\|_{2,L}^2  \ \right\}      
    \  = \     -\frac{1}{2}\left[ h,\na(-\De+m^2)^{-1}\na^*h\  \right]_L  \ .                               
\end{multline}

To show concavity of the function  $h(\cdot)\ra q_{\ve,m,L}(h(\cdot))$ we use directional derivatives. The directional derivative of  $q_{\ve,m,L}(\cdot)$ in direction $a\in \ell_2(Q_L,\mathbb{R}^d)$ is given by the formula
\be \label{N2}
D_aq_{\ve,m,L}(h(\cdot)) \ = \ \lim_{\eta\ra 0} \eta^{-1}\left[ q_{\ve,m,L}(h(\cdot)+\eta a(\cdot)) -q_{\ve,m,L}(h(\cdot)) \right] \ = \ \langle [a(\cdot),\na\phi(\cdot)]_L \ \rangle_{\ve,m,h,L} \ .
\ee
Using a similar formula to (\ref{A2}), we have  from FTC and  the BL inequality  that
\be \label{O2}
\left|\langle \ [a(\cdot), \na\phi(\cdot)]_L \ \rangle_{\ve,m,h,L}\right| \ \le \  \|(-\la\De+m^2)^{-1/2}\na^*a\|_{2,L}\|(-\la\De+m^2)^{-1/2}\na^*h\|_{2,L} \ ,
\ee
whence we obtain a bound on $D_a q_{\ve,m,L}(\cdot)$. The second directional derivative of $q_{\ve,m,L}(\cdot)$  in directions $a_1,a_2\in \ell_2(Q_L,\mathbb{R}^d)$  is obtained by differentiating (\ref{N2}) and is given  by the formula
\be \label{P2}
D^2_{a_1,a_2}q_{\ve,m,L}(h(\cdot)) \ =  \ -\ve^{-1}\ {\rm cov}_{\ve,m,h,L}\left\{  [a_1(\cdot),\na\phi(\cdot)]_L, \  [a_2(\cdot),\na\phi(\cdot)]_L                    \right\} \ ,
\ee
where ${\rm cov}_{\ve,m,h,L}(X,Y)$ denotes the covariance of the variables $X,Y$ with respect to the measure $\langle\cdot\rangle_{\ve,m,h,L}$. The concavity of the function $h(\cdot)\ra q_{\ve,m,L}(h(\cdot))$ follows from (\ref{P2}). We also obtain from (\ref{P2}) and the BL inequality the bound
\be \label{Q2}
|D^2_{a_1,a_2}q_{\ve,m,L}(h(\cdot))| \ \le \    \|(-\la\De+m^2)^{-1/2}\na^*a_1\|_{2,L}\|(-\la\De+m^2)^{-1/2}\na^*a_2\|_{2,L} \ .
\ee

To prove (\ref{P1}) we assume that both (\ref{D1}), (\ref{O1}) hold and  write
\begin{multline} \label{E2}
 \left\langle \  \exp\left[   -\frac{ [h(\cdot),\na\phi(\cdot)]_L }{\ve}  \right] \ \ \right\rangle_{\ve,m,0,L} \\ 
 \hspace{.8in} \ = \  \left\langle \  \exp\left[   -\frac{ [\Re h(\cdot),\na\phi(\cdot)]_L }{\ve}  \right] \ \ \right\rangle_{\ve,m,0,L} \  \left\langle \  \exp\left[   -\frac{i [\Im h(\cdot),\na\phi(\cdot)]_L }{\ve}  \right] \ \ \right\rangle_{\ve,m,\Re h,L} \ .
\end{multline}
The first expectation on the RHS of (\ref{E2}) can be bounded above by the BL inequality.  To bound the second expectation in absolute value we
make  for any $\psi\in \ell_2(Q_L,\mathbb{R})$ the imaginary deformation $\phi(\cdot)\ra\phi(\cdot)+i\psi(\cdot)$  in the integration (\ref{H1}), which yields the identity
\begin{multline} \label{F2}
 \left\langle \  \exp\left[   -\frac{ i[\Im h(\cdot),\na\phi(\cdot)]_L }{\ve}  \right] \ \ \right\rangle_{\ve,m,\Re h,L} \ = \
   \exp\left[   \frac{m^2\|\psi\|_{2,L}^2 +2[\Im h(\cdot),\na \psi(\cdot)]_L -2i[\Re h(\cdot),\na \psi(\cdot)]_L}{2\ve}  \right]  \ \times \\
   \left\langle \ \exp\left[   -\frac{1}{\ve}\left\{ m^2i[\psi(\cdot),\phi(\cdot)]_L+ i [\Im h(\cdot),\na\phi(\cdot)]_L  +\sum_{x\in Q_L} [V(\na\phi(x)+i\na \psi(x))-V(\na\phi(x))] \ \right\} \right] \ \ \right\rangle_{\ve,m,\Re h,L}  \ .
\end{multline}
Observe now that
\be \label{G2}
\Im V'(\om) \ = \ \Im[V'(\om)-V'(\Re\om)] \ =  \Re\int_0^1d\al \ V''(\al\om+(1-\al)\Re\om) \  \Im\om \ .
\ee
It follows from (\ref{O1}), (\ref{G2}) that for any $\eta>0$ there exists $\del(\eta)>0$ such that
\be \label{H2}
|\Im V'(\om)-V''(\Re\om)\Im\om| \ \le \  \eta|\Im\om| \quad {\rm for \ }\om\in\mathbb{C}^d, \ |\Im\om|<\del(\eta) \ .
\ee
We have then from (\ref{D1}), (\ref{H2})  that
\begin{multline} \label{I2}
\Re \sum_{x\in Q_L} [V(\na\phi(x)+i\na \psi(x))-V(\na\phi(x))] = \ - \sum_{x\in Q_L} \int_0^1d\al \ \Im V'(\na\phi(x)+i\al\na\psi(x))\na\psi(x) \\
\ge \ -\frac{(1+\eta)}{2}\|\na\psi(\cdot)\|_{2,L}^2 \ , \quad {\rm provided \ } \sup_{x\in Q_L}|\na\psi(x)|<\del(\eta) \  .
\end{multline}
We conclude from (\ref{F2}), (\ref{I2}) that
\begin{multline} \label{J2}
\Re[q_{\ve,m,L}(h(\cdot))-q_{\ve,m,L}(\Re h) \ ] \ \ge \\
    -\min_{\psi:Q_L\ra\mathbb{R}:\ \sup_{x\in Q_L}|\na\psi(x)|<\del(\eta)  }\left\{\frac{m^2}{2}\|\psi\|_{2,L}^2 +[\Im h(\cdot),\na \psi(\cdot)]_L +        \frac{(1+\eta)}{2}\|\na\psi(\cdot)\|_{2,L}^2  \ \right\}       \ .                          
\end{multline}
The minimizer of the quadratic form in (\ref{J2}) is 
\be \label{K2}
\psi_{\rm min}(\cdot) \ = \ -[-(1+\eta)\De+m^2]^{-1}\na^*\Im h(\cdot)   \ .
\ee
We conclude from (\ref{J2}) that
\be \label{L2}
q_{\ve,m,L}(h(\cdot))-q_{\ve,m,L}(\Re h) \ \ge \ \frac{1}{2}\left[\Im h,\na\{-(1+\eta)\De+m^2\}^{-1}\na^*\Im h\  \right]_L \ ,
\ee
provided $\sup_{x\in Q_L}|\na\psi_{\rm min}(x)|<\del(\eta)$.

The proof of Theorem 1.1 follows from (\ref{L2}) and  the following version of the CZ theorem \cite{stein}:
\begin{theorem}
For $1<p<\infty$ the operator $\na(-\De+m^2)\na^*$ is bounded on $\ell_p(Q_L,\mathbb{R}^d)$, with operator norm satisfying an inequality
\be \label{R2}
\left\| \na(-\De+m^2)\na^* \ \right\|_{\ell_p(Q_L,\mathbb{R}^d)} \ \le \  \kappa_p \ ,
\ee
where $\kappa_p\ge1$ is independent of $L,m$ as $L\ra\infty, \ m\ra0$. Furthermore, one has that $\lim_{p\ra 2}\kappa_p=1$. 
\end{theorem}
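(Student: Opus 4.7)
The plan is to combine discrete Fourier analysis with the lattice Calder\'on-Zygmund machinery. I interpret the operator in (\ref{R2}) as $T_{m,L}\DEF \na(-\De+m^2)^{-1}\na^*$, acting on $\ell_p(Q_L,\R^d)$. First, I would establish the $\ell_2$ bound with constant $1$ by Fourier analysis. Since $T_{m,L}$ is translation invariant on $Q_L$, the discrete Fourier transform diagonalizes it. The symbol of $\na_j$ is $e_j(\xi)=e^{2\pi i\xi_j/L}-1$ and that of $-\De$ is $|e(\xi)|^2=\sum_{j=1}^d|e_j(\xi)|^2$, so $T_{m,L}$ has operator-valued symbol $M(\xi)_{ij}=e_i(\xi)\overline{e_j(\xi)}/(|e(\xi)|^2+m^2)$ satisfying $\|M(\xi)\|\le |e(\xi)|^2/(|e(\xi)|^2+m^2)\le 1$. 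Plancherel then gives $\|T_{m,L}\|_{\ell_2\to\ell_2}\le 1$.

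For $p\neq 2$ I would represent $T_{m,L}$ as convolution on $Q_L$ with the periodization of the kernel $K_m(x)=\na\na^*G_{m^2}(x)$ on $\mathbb{Z}^d$, where $G_{m^2}$ is the Helmholtz Green's function solving $(-\De+m^2)G_{m^2}=\del$. Classical second-difference estimates for $G_{m^2}$ yield the Calder\'on-Zygmund conditions $|K_m(x)|\le C(1+|x|)^{-d}$ and $|\na K_m(x)|\le C(1+|x|)^{-d-1}$ together with a H\"ormander cancellation condition, all uniformly in $m>0$ and $L$. The lattice CZ theorem \cite{stein} then upgrades the $\ell_2$ bound to weak type $(1,1)$; Marcinkiewicz interpolation together with duality yields strong $\ell_p$ bounds with constants $\kappa_p$ depending only on $p$ and $d$, hence independent of $L$ and $m$.

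For the asymptotic $\lim_{p\to 2}\kappa_p=1$, I would apply the Riesz-Thorin interpolation theorem between the $\ell_2$ endpoint (operator norm at most $1$) and a fixed $\ell_q$ estimate (operator norm at most $\kappa_q$) for some $q>2$. Writing $1/p=(1-\theta)/2+\theta/q$ gives $\kappa_p\le \kappa_q^\theta$, which tends to $1$ as $p\to 2$ since then $\theta\to 0$. Duality handles the range $1<p<2$.

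The main technical obstacle is verifying the uniformity of the CZ kernel estimates as $m\to 0$ and $L\to\infty$. The second-difference bound $|\na\na^*G_{m^2}(x)|\le C|x|^{-d}$ must hold with $C$ independent of $m$, which requires stationary-phase analysis of the oscillatory Fourier integral representing $G_{m^2}$; on the torus $Q_L$, one must additionally compare the periodic Green's function with its $\mathbb{Z}^d$ counterpart on scales $|x|\ll L$ and absorb the remainder via the $\ell_2$ estimate already obtained. Once the kernel estimates are secured, the remainder of the argument is a direct application of the classical CZ framework.
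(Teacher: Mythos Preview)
The paper does not supply a proof of this statement at all: Theorem~2.1 is simply stated as ``the following version of the CZ theorem \cite{stein}'' and used as a black box. So there is no paper proof to compare your proposal against.

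Your outline is the standard route to such a discrete CZ estimate and is sound. The $\ell_2$ bound with constant $1$ via the Fourier symbol $M(\xi)=e(\xi)\overline{e(\xi)}^*/(|e(\xi)|^2+m^2)$ is exactly right, and Riesz--Thorin interpolation between the $\ell_2$ endpoint (constant $1$) and a fixed $\ell_q$ bound does give $\kappa_p\le\kappa_q^\theta\to1$ as $p\to2$. The one place that requires genuine work, as you note, is the uniform-in-$m$ kernel bound $|\na\na^*G_{m^2}(x)|\le C|x|^{-d}$ and its H\"ormander condition; these follow from integration by parts in the Fourier integral and are classical but not entirely trivial. The passage from $\mathbb{Z}^d$ to the periodic lattice $Q_L$ can be handled either by your periodization-plus-remainder argument or, more directly, by running the CZ decomposition on the torus itself, since the kernel and H\"ormander bounds are local and transfer verbatim for $|x|\le L/2$.
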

Now we just use the fact that $\sup_{x\in Q_L}|\na\psi_{\rm min}(x)|\le \left\|\na\psi_{\rm min}(\cdot)\right\|_{\ell_p(Q_L,\mathbb{R}^d)}$ for all $p, \ 1<p<\infty$, whence (\ref{Q1}) implies that
$\sup_{x\in Q_L}|\na\psi_{\rm min}(x)|<\del(\eta)$.
\begin{rem}
The contour deformation method employed in (\ref{B2}) can also be used to obtain a lower bound on the variance (\ref{G1}):
\be \label{S2}
\ve^{-1}{\rm var}_{\ve,m,h,L}\left\{  [a(\cdot),\na\phi(\cdot)]_L \ \right\} \
\ge \  [\na^*a(\cdot), \{-\De+m^2\}^{-1}\na^*a(\cdot) \ ]_L  \ .
\ee
When $h(\cdot)\equiv 0$ in (\ref{S2})  this follows immediately from (\ref{D2}) upon replacing $h(\cdot)$ by $\nu a(\cdot), \ \nu>0$, then dividing (\ref{D2}) by $\nu^2$ and letting $\nu\ra0$.  For general $h(\cdot)\in\ell_2(Q_L,\mathbb{R}^d)$ one needs to use the identity
\be \label{T2}
 \left\langle \  \frac{\pa W_{m,h,L}(\phi(\cdot))}{\pa\phi(x)} \ \right\rangle_{\ve,m,h,L}  \ = \ 0 , \quad x\in Q_L \ ,
\ee
after application of the Jensen inequality. The upper bound in (\ref{L1}) evidently follows from (\ref{A2}), (\ref{S2}).  The upper and lower bounds (\ref{G1}), (\ref{S2}) are also  easy consequences of the HS representation \cite{hs} for the variance. 
\end{rem}

\vspace{.1in}
\section{Stochastic Dynamics-Hilbert Space Theory}
The dynamics (\ref{R1}) corresponding to  (\ref{E1}) is given by
\begin{multline} \label{A3}
d\phi_{\ve,m,h,L}(x,t) \ = \ -\frac{1}{2}\left\{\na^*h(x)+m^2\phi_{\ve,m,h,L}(x,t)+ \na^*V'(\na\phi_{\ve,m,h,L}(x,t)) \right\} \ dt \\
+\sqrt{\ve}  \ dB(x,t), \quad x\in Q_L \ , t>0,
\end{multline}
where $B(x,\cdot), \ x\in Q_L,$ are independent copies of Brownian motion. 
The SDE (\ref{A3}) has been extensively studied in  \cite{fs,gos}.  Assuming $V(\cdot)$ satisfies (\ref{D1})   we apply the operator $A=[-\De+m^2]^{-1}$ with $-\De=\na^*\na$ to (\ref{A3}). The corresponding SDE (\ref{V1}) is then given by  
\begin{multline} \label{B3}
d\phi_{\ve,m,h,L}(x,t) \ = \ -\frac{1}{2}\Big[\phi_{\ve,m,h,L}(x,t)+[-\De+m^2]^{-1}\na^*h(x) \\
+ [-\De+m^2]^{-1}\na^*\{V'(\na\phi_{\ve,m,h,L}(x,t))- \na\phi_{\ve,m,h,L}(x,t)\} \Big] \ dt \\
+\sqrt{\ve}   \ [-\De+m^2]^{-1/2}W(x,t)  \ dt, \quad x\in Q_L \ , t>0 \ ,
\end{multline}
where $W(x,t) \ dt=dB(x,t),  \ x\in Q_L, \ t>0,$ are independent copies of  white noise. Taking  $\om_{\ve,m,h,L}(\cdot,t)=\na\phi_{\ve,m,h,L}(\cdot,t)\in\mathbb{R}^d$ we have from (\ref{B3}) that
\begin{multline} \label{C3}
d\om_{\ve,m,h,L}(\cdot,t) \ = \ -\frac{1}{2}\Big[\om_{\ve,m,h,L}(\cdot,t)+\na[-\De+m^2]^{-1}\na^*h(\cdot)\\
+ \na[-\De+m^2]^{-1}\na^*\{V'(\om_{\ve,m,h,L}(\cdot,t))- \om_{\ve,m,h,L}(\cdot,t)\} \Big] \ dt \\
+\sqrt{\ve}  \tilde{\om}_L(\cdot,t) \ dt, \quad \tilde{\om}_L(\cdot,t)  \ = \   \ \na[-\De+m^2]^{-1/2}W(\cdot,t),  \  t>0 \ .
\end{multline}
We shall show for certain functions $F$ on $\ell_2(Q_L,\mathbb{R}^d)$ that
\be \label{D3}
\lim_{T\ra\infty} \langle \ F(\om_{\ve,m,h,L}(\cdot,T)) \ \rangle \ = \  \langle \ F(\na\phi(\cdot)) \ \rangle_{\ve,m,h,L} \ .
\ee
 In the $\ve\ra 0$ deterministic case and $V''(\cdot)$ constant Gaussian case the limit is uniform as $L\ra\infty$ and $m\ra0$.  

The lower bound (\ref{L1}) has a more general form, which we may write as
\begin{multline} \label{E3}
\left\langle \ \exp\left[\frac{1}{\sqrt{\ve}}\left\{[a(\cdot),\na\phi(\cdot)]_L-\langle \ [a(\cdot),\na\phi(\cdot)]_L \ \rangle_{\ve,m,h,L} \right\} \ \right] \ \right\rangle_{\ve,m,h,L} \\
 \le \  \exp\left[  \frac{\|(-\la\De+m^2)^{-1/2}\na^*a\|_{2,L}^2}{2}                \right] \ , \quad a\in \ell_2(Q_L,\mathbb{R}^d) \ .
\end{multline}
We shall generalize the inequalities (\ref{G1}), (\ref{L1}) in the  form (\ref{E3}), and (\ref{O2})  to solutions $t\ra \om_{\ve,m,h,L}(\cdot,t)$ of (\ref{C3}).  To do this we use the Poincar\'{e} inequality for functions of white noise $W(x,t), \ x\in Q_L, \ 0<t<T$:
\be \label{F3}
{\rm Var} \left[ F(W(\cdot,t): \ 0<t<T)\right] \ \le \  \langle  \|D_{{\rm Mal}}F(W(\cdot,\cdot))\|^2_2         \rangle \ ,
\ee
where $D_{{\rm Mal}}F(W(\cdot,\cdot))\in L^2([0,T]\times Q_L,\mathbb{R})$ denotes the Malliavin derivative of $F$ at $W(\cdot,\cdot)$. The $L^2$ norm of  $D_{{\rm Mal}}F(W(\cdot,\cdot))$ may be written as
\be \label{G3}
 \|D_{{\rm Mal}}F(W(\cdot,\cdot))\|^2_2  \ = \  \int_0^T \|D_{{\rm Mal},t}F(W(\cdot,\cdot))\|_{2,L}^2 \ dt \ ,
\ee
where $D_{{\rm Mal},t}F(W(\cdot,\cdot))$ is in $\ell_2(Q_L,\mathbb{R})$.
We may obtain from (\ref{F3}) a  Poincar\'{e} inequality for functions of  $\tilde{\om}_L(\cdot,t), \ 0<t<T$. The derivative of a function $F(\tilde{\om}_L (\cdot,t): \ 0<t<T)$ with respect to 
$\tilde{\om}_L$ is defined, as in the case of the Malliavin derivative, by directional derivatives.  Thus for $g\in L^2([0,T]\times Q_L,\mathbb{R}^d)$ one has
\begin{multline} \label{H3}
D_{\tilde{\om}_L,g}F(\tilde{\om}_L) \ = \ \lim_{\eta\ra0} \eta^{-1}\left[   F(\tilde{\om}_L (\cdot,\cdot) +\eta g(\cdot,\cdot) )  -     F(\tilde{\om}_L (\cdot,\cdot)  )               \right] \\
 = \ 
\left[ D_{     \tilde{\om}_L}F(\tilde{\om}_L (\cdot,\cdot)) ,g              \right] \
= \ \int_0^T\left[ D_{     \tilde{\om}_L,t}F(\tilde{\om}_L (\cdot,\cdot)), g(\cdot,t)              \right]_L \ dt \ . 
\end{multline}
We see from (\ref{C3}), (\ref{H3})  that
\be \label{I3}
D_{{\rm Mal}, t} F(\tilde{\om}_L (\cdot,\cdot)) \ = \  (-\De+m^2)^{-1/2}\na^*D_{\tilde{\om}_L,t} F(\tilde{\om}(\cdot,\cdot)) \ , \quad 0<t<T \ .
\ee
Since $(-\De+m^2)^{-1/2}\na^*: \ell_2(Q_L,\mathbb{R}^d)\ra \ell_2(Q_L,\mathbb{R})$ is bounded with norm $\|(-\De+m^2)^{-1/2}\na^*\|_{2,L}\le 1$,   it follows from (\ref{F3}), (\ref{I3}) that  
\begin{multline} \label{J3}
{\rm var} \left[ F(\tilde{\om}_L(\cdot,t): \ 0<t<T)\right] \\
 \le \  \langle  \| (-\De+m^2)^{-1/2}\na^*D_{\tilde{\om}_L}F(\tilde{\om}_L (\cdot,\cdot))\|^2_2          \rangle \  \le \   \langle  \| D_{\tilde{\om}_L}F(\tilde{\om}_L (\cdot,\cdot))\|^2_2 \  \rangle  \ .
\end{multline}
\begin{proposition}
Let  $a(\cdot), h(\cdot),\xi(\cdot)$ be in $\ell_2(Q_L,\mathbb{R}^d)$ and $\om^\xi_{\ve,m,h,L}(\cdot,t) , \ t>0,$ the solution to (\ref{C3}) with initial condition  $\om^\xi_{\ve,m,h,L}(\cdot,t)=\xi$. 
The following inequalities hold:
\be \label{K3}
\ve^{-1}{\rm var}\left\{  [a(\cdot),\om^\xi_{\ve,m,h,L}(\cdot,T)]_L \ \right\} \ \le \  \la^{-1}\|(-\De+m^2)^{-1/2}\na^*a\|_{2,L}^2 \ , \quad T\ge 0 \ ,
\ee
\be \label{X3}
\left|\left\langle \ [a(\cdot), \om^0_{\ve,m,h,L}(\cdot,T)]_L \ \right\rangle  \ \right| \ \le \ \la^{-1} \|(-\De+m^2)^{-1/2}\na^*a\|_{2,L}\|(-\De+m^2)^{-1/2}\na^*h\|_{2,L} \ , \quad T\ge 0 \ ,
\ee
\begin{multline} \label{Z3}
\left\langle \ \exp\left[\frac{1}{\sqrt{\ve}}\left\{[a(\cdot),\om^\xi_{\ve,m,h,L}(\cdot,T)]_L-\langle \ [a(\cdot),\om^\xi_{\ve,m,h,L}(\cdot,T) ]_L \ \rangle \right\} \ \right] \ \right\rangle \\ \le \ 
\exp\left[  \frac{\|(-\De+m^2)^{-1/2}\na^*a\|_{2,L}^2}{2\la}                \right] \ , \quad T\ge 0.
\end{multline}
\end{proposition}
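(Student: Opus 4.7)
The plan is to rewrite $[a,\om^\xi_{\ve,m,h,L}(\cdot,T)]_L = [\na^*a,\phi^\xi(T)]_L$, where $\phi^\xi(\cdot)$ is the solution of the $\phi$-SDE (\ref{B3}) with initial datum $\phi^\xi(0)$ satisfying $\na\phi^\xi(0)=\xi$, and then to compute the Malliavin derivative of $\phi^\xi(T)$ with respect to the driving Brownian motion $W$. With $A := (-\De+m^2)^{-1}$, Duhamel on (\ref{B3}) yields
\[
D^W_s\phi^\xi(T) \ = \ \sqrt{\ve}\,\Gamma(T,s)A^{1/2}, \quad \pa_t\Gamma(t,s) = -\frac{1}{2} A\, D^2W_{m,h,L}(\phi(t))\,\Gamma(t,s), \ \ \Gamma(s,s)=I,
\]
so that $D^W_s[a,\om^\xi(T)]_L = \sqrt{\ve}\,A^{1/2}\Gamma(T,s)^*\na^*a$.

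\textbf{Key contraction.} The main technical step will be the pathwise quadratic-form estimate
\[
\Gamma(T,s)\,A\,\Gamma(T,s)^* \ \le \ e^{-\la(T-s)}\,A \qquad(\mathrm{K}).
\]
To establish $(\mathrm{K})$ I fix $u\in\ell_2(Q_L,\mathbb{R})$, set $z(t) := A\Gamma(t,s)^*u$, and compute $\frac{d}{dt}[z(t),A^{-1}z(t)]_L = -[z(t),D^2W_{m,h,L}(\phi(t))z(t)]_L$. By (\ref{D1}) and (\ref{F1}), the operator inequality $D^2W(\phi)=\na^*V''(\na\phi)\na + m^2 \ge \la(-\De+m^2) = \la A^{-1}$ holds, so $\frac{d}{dt}[z,A^{-1}z]_L \le -\la[z,A^{-1}z]_L$ and Gronwall delivers $(\mathrm{K})$. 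The key feature is that $(\mathrm{K})$ is pathwise: the only input is the pointwise bound $V''(\cdot)\ge\la I_d$, independent of the realization of $W$.

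\textbf{Deriving the three bounds.} Given $(\mathrm{K})$, each inequality follows. For (\ref{K3}), applying the Poincar\'e inequality (\ref{F3}) to $F=[a,\om^\xi(T)]_L$ gives ${\rm var}(F) \le \langle \int_0^T\|D^W_sF\|_{2,L}^2\,ds\rangle$, and $(\mathrm{K})$ bounds the integrand pointwise by $\ve e^{-\la(T-s)}\|(-\De+m^2)^{-1/2}\na^*a\|_{2,L}^2$; integrating in $s$ yields (\ref{K3}). For (\ref{X3}), the reflection symmetry in (\ref{D1}) combined with $\xi=0$ forces $\langle[a,\om^0_{\ve,m,0,L}(T)]\rangle = 0$, so I write $\langle[a,\om^0_{\ve,m,h,L}(T)]\rangle = \int_0^1 \frac{d}{d\al}\langle[a,\om^0_{\ve,m,\al h,L}(T)]\rangle\,d\al$; Duhamel for the linear-in-$h$ piece of the drift in (\ref{B3}) identifies the $\al$-derivative as $-\frac{1}{2}\int_0^T\langle[\na^*a,\Gamma^{\al h}(T,t)A\na^*h]_L\rangle\,dt$, and Cauchy--Schwarz in the $A$-inner product together with $(\mathrm{K})$ yields the bound. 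For (\ref{Z3}), the pathwise estimate $\|D^WF\|_{L^2([0,T]\times Q_L)}^2 \le (\ve/\la)\|(-\De+m^2)^{-1/2}\na^*a\|_{2,L}^2$ places $F$ in the class of Lipschitz functions on the Wiener space of $W$ with $L^2$-Lipschitz constant $\sqrt{\ve/\la}\,\|(-\De+m^2)^{-1/2}\na^*a\|_{2,L}$, and Borell's Gaussian concentration inequality on abstract Wiener space then delivers the exponential moment bound.

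\textbf{Main obstacle.} The linchpin is $(\mathrm{K})$: it succeeds precisely because the weight $A=(-\De+m^2)^{-1}$ from the modified dynamics (\ref{V1})--(\ref{B3}) is algebraically compatible with $D^2W$, both anchored at $-\De+m^2$. The hardest of the three conclusions is (\ref{Z3}): the variance bound (\ref{K3}) alone does not imply exponential concentration, and the upgrade relies on the pathwise (rather than just in-expectation) character of the Malliavin gradient bound combined with Borell's theorem.
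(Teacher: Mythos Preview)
Your overall strategy is correct and is morally the same as the paper's, just repackaged: where the paper writes out the Dyson series for the linearized flow (the operator $\mathcal{L}_T$ of (\ref{O3}) with $\|\mathcal{L}_T\|\le 1-\la$, leading to the bound (\ref{U3})), you compress this into the single Gronwall estimate $(\mathrm{K})$; and where the paper proves (\ref{Z3}) via Clark--Ocone and the exponential martingale identity (\ref{AA3})--(\ref{AD3}), you invoke Gaussian concentration for Lipschitz functionals. Both routes rest on the same pathwise bound on the Malliavin derivative, so the content is equivalent; your packaging is arguably cleaner.

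There is, however, a genuine slip in your derivation of $(\mathrm{K})$. With $\Gamma$ solving $\partial_t\Gamma=-\tfrac12 A\,D^2W(\phi)\,\Gamma$ and $z(t)=A\Gamma(t,s)^*u$, one does \emph{not} get $\tfrac{d}{dt}[z,A^{-1}z]=-[z,D^2W\,z]$: differentiating gives $-\,[u,\Gamma A\Gamma^* D^2W A\,u]$, which is not $-[u,\Gamma A D^2W A\Gamma^* u]$ unless $\Gamma^*$ and $D^2W A$ commute. The correct object to track is $w(t)=\Gamma(t,s)y$; then
\[
\tfrac{d}{dt}[w,A^{-1}w]_L=2[\dot w,A^{-1}w]_L=-[A D^2W\,w,A^{-1}w]_L=-[w,D^2W\,w]_L\le -\la\,[w,A^{-1}w]_L,
\]
so $\Gamma(T,s)^*A^{-1}\Gamma(T,s)\le e^{-\la(T-s)}A^{-1}$, which is equivalent to $(\mathrm{K})$ since both say $\|A^{-1/2}\Gamma A^{1/2}\|\le e^{-\la(T-s)/2}$. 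With this fix your arguments for (\ref{K3}), (\ref{X3}), (\ref{Z3}) go through.

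A minor caveat: your rewriting $[a,\om^\xi(T)]_L=[\nabla^*a,\phi^\xi(T)]_L$ presupposes $\xi=\nabla\phi^\xi(0)$, whereas the proposition allows arbitrary $\xi\in\ell_2(Q_L,\mathbb{R}^d)$. This is harmless for the proof, since the Malliavin derivative of $\om^\xi(T)$ can be computed directly from (\ref{C3})--(\ref{M3}) (as the paper does) and the only input from the path is the bound $0\le\mathbf{b}(\om^\xi(\cdot,s))\le(1-\la)I_d$, which holds regardless; but you should either work directly at the level of $\om$ or note that the non-gradient part of $\xi$ decays as $e^{-t/2}\xi$ and does not affect the Malliavin gradient.
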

\begin{proof}
To prove (\ref{K3}) we apply the Poincar\'{e} inequality (\ref{J3}).  An equation for the derivatives
$D_{\tilde{\om}_L}\om^\xi_{\ve,m,h,L}(x,t), \ x\in Q_L,t>0, $ may be obtained from first variation analysis of the SDE (\ref{C3}).  Integrating (\ref{C3}) we have that
\begin{multline} \label{L3}
\om^\xi_{\ve,m,h,L}(\cdot,t) \ = \ e^{-t/2}\xi -\left\{1-e^{-t/2}\right\}\na[-\De+m^2]^{-1}\na^*h(\cdot)\\
-\frac{1}{2} \int_0^te^{-(t-s)/2} \ ds \ \na[-\De+m^2]^{-1}\na^*\left[ V'( \om^\xi_{\ve,m,h,L}(\cdot,s)) - \om^\xi_{\ve,m,h,L}(\cdot,s)\right] \\
+\sqrt{\ve}  \int_0^te^{-(t-s)/2}  \tilde{\om}_L(\cdot,s) \ ds   \ .
\end{multline}
Then we have on differentiating (\ref{L3}) in direction $g$  that
\begin{multline} \label{M3}
D_{     \tilde{\om}_L,g}\om^\xi_{\ve,m,h,L}(\cdot,t) \ = \ 
\frac{1}{2} \int_0^te^{-(t-s)/2} \ ds \  \na[-\De+m^2]^{-1}\na^*\mathbf{b}( \om^\xi_{\ve,m,h,L}(\cdot,s))D_{     \tilde{\om}_L,g}\om^\xi_{\ve,m,h,L}(\cdot,s)  \\
+\sqrt{\ve}  \int_0^te^{-(t-s)/2} g(\cdot,s) \ ds  \ .
\end{multline}
The function $\mathbf{b}(\cdot)$ on $\mathbb{R}^d$ with range in the symmetric $d\times d$ matrices is defined by $V''(\cdot)=I_d-\mathbf{b}(\cdot)$, and from (\ref{D1}) satisfies the quadratic form inequality
\be \label{N3}
0 \ \le \mathbf{b}(\cdot) \ \le \ (1-\la)I_d \ .
\ee

We may solve (\ref{M3}) for $D_{     \tilde{\om}_L,g}\om^\xi_{\ve,m,h,L}(\cdot,t), \ 0\le t\le T$,  in the Banach space $\mathcal{E}_T$ of continuous functions $f:Q_L\times [0,T]\ra\mathbb{R}^d$ with norm
$\|f(\cdot,\cdot)\|_{\mathcal{E}_T}=\sup_{0\le t\le T} \|f(\cdot,t\|_{2,L}$.  Let $\mathcal{L}_T$ be a linear operator on $\mathcal{E}_T$  defined by 
\be \label{O3}
\mathcal{L}_Tf(\cdot,t) \ = \ \frac{1}{2} \int_0^te^{-(t-s)/2} \ ds \  \na[-\De+m^2]^{-1}\na^*\mathbf{b}( \om^\xi_{\ve,m,h,L}(\cdot,s))f(\cdot,s) \ , \quad 0\le t\le T \ .
\ee
We see from (\ref{N3}) that $\|\mathcal{L}_Tf(\cdot,\cdot)\|_{\mathcal{E}_T}\le (1-\la)\|f(\cdot,\cdot)\|_{\mathcal{E}_T}$, whence  
$\|\mathcal{L}_T\|_{\mathcal{E}_T}\le (1-\la)$.
Then $f(\cdot,t)=\ve^{-1/2}D_{     \tilde{\om}_L,g}\om^\xi_{\ve,m,h,L}(\cdot,t), \ 0\le t\le T,$ is the solution to the affine fixed point equation
\be \label{P3}
f(\cdot,\cdot) \ = \ \mathcal{A}_Tf(\cdot,\cdot) \ = \  \mathcal{L}_Tf(\cdot,\cdot)+ k(\cdot,\cdot), \quad k(\cdot,t) \ = \  \int_0^te^{-(t-s)/2}  g(\cdot,s) \ ds   \ , \ \ 0\le t\le T.
\ee
By the contraction mapping theorem it follows that for $0\in\mathcal{E}_T$  the null vector,  one has
$\lim_{n\ra\infty} \| \mathcal{A}_T^n0-\ve^{-1/2}D_{     \tilde{\om}_L,g}\om^\xi_{\ve,m,h,L}(\cdot,\cdot)\|_{\mathcal{E}_T}=0$. We conclude that
\be \label{Q3}
\lim_{n\ra\infty}[a(\cdot), \mathcal{A}_T^n0(\cdot,T)]_L \ = \ \ve^{-1/2}[a(\cdot), D_{     \tilde{\om}_L,g}\om^\xi_{\ve,m,h,L}(\cdot,T)]_L  \ .
\ee

We may obtain from (\ref{Q3}) a representation for the  derivative  $ D_{     \tilde{\om}_L,t}F(\tilde{\om}_L(\cdot,\cdot)), \ 0\le t\le T,$ of the function 
\be \label{R3}
 F(\tilde{\om}_L(\cdot,t):  \ 0<t<T) \ = \ \ve^{-1/2}[a(\cdot),\om^\xi_{\ve,m,h,L}(\cdot,T)]_L \ .
\ee
To do this we write
\be \label{S3}
[a(\cdot), \mathcal{A}_T^n0(\cdot,T)]_L \ = \  \int_0^Te^{-(T-t)/2} [a_n(\cdot,t,T),g(\cdot,t)]_L  \ dt \ .
\ee
It is easy to see from (\ref{S3}) that $a_1(\cdot,t,T)=  a(\cdot), \ 0\le t\le T$.  More generally we have for $n=1,2,\dots,$  that
\begin{multline} \label{T3}
a_{n+1}(\cdot,t,T) \ = \ a(\cdot) \\
+\sum_{k=1}^n\int_{t<s_1<s_2<\cdots <s_k<T}ds_1\cdots ds_k \ \prod_{k'=1}^k\left[\frac{1}{2} \mathbf{b}( \om^\xi_{\ve,m,h,L}(\cdot,s_{k'})) \na[-\De+m^2]^{-1}\na^*\right] a(\cdot) \ .
\end{multline}
The $k$th term in (\ref{T3}) has norm in $\ell_2(Q_L,\mathbb{R}^d)$ bounded by $\|a\|_{2,L}[(1-\la)(T-t)/2]^k/k!$.  Hence $a_n(\cdot,t,T)$ converges in $\ell_2(Q_L,\mathbb{R}^d)$ 
 as $n\ra\infty$ to a function   $a_\infty(\cdot,t,T)$ and 
 \be \label{U3}
 \|(-\De+m^2)^{-1/2}\na^*a_\infty(\cdot,t,T)\|_{2,L} \ \le \  \exp[(1-\la)(T-t)/2]\|(-\De+m^2)^{-1/2}\na^*a\|_{2,L} \ .
 \ee
  It also follows from (\ref{R3})-(\ref{T3}) that
 \be \label{V3}
 D_{     \tilde{\om}_L,t}F(\tilde{\om}_L) \ = \ \exp[-(T-t)/2]a_\infty(\cdot,t,T), \ 0\le t\le T \ .
 \ee
  We conclude then from (\ref{J3}), (\ref{U3}), (\ref{V3}) that
 \begin{multline} \label{W3}
 {\rm var} \left[ F(\tilde{\om}_L(\cdot,\cdot))\right] \ \le \ \int_0^T e^{-\la(T-t)} \ dt  \ \|(-\De+m^2)^{-1/2}\na^*a\|^2_{2,L}  \\
  = \ \frac{1-e^{-\la T}}{\la} \|(-\De+m^2)^{-1/2}\na^*a\|^2_{2,L}  \ ,
 \end{multline} 
whence (\ref{K3}) follows. 

To prove (\ref{X3}) we replace $h(\cdot)$ in (\ref{C3}) by $\al h(\cdot)$ with $0\le \al\le 1$ and set $g(\al)=\left\langle \ [a(\cdot), \om^0_{\ve,m,\al h,L}(\cdot,T)]_L \ \right\rangle $.
Since $V(\cdot)$ is even we have that $g(0)=0$.  We take the $\al$ derivative of (\ref{L3}) with $h(\cdot)$ replaced by  $\al h(\cdot)$. Letting $ \mathcal{L}_{\al,T}$ be the operator (\ref{O3}) with    $ \om^\xi_{\ve,m,h,L}(\cdot,t)$ replaced by $ \om^0_{\ve,m,\al h,L}(\cdot,t)$, we have that
\begin{multline} \label{Y3}
f_\al(\cdot,t) \ = \ \mathcal{L}_{\al,T}f_\al(\cdot,t)-\left\{1-e^{-t/2}\right\}\na[-\De+m^2]^{-1}\na^*h(\cdot) \ , \\
{\rm where \ } f_\al(\cdot,t) \ = \ \frac{d}{d\al} \om^0_{\ve,m,\al h,L}(\cdot,t)  \ , \quad 0\le t\le T \ .
\end{multline}
By the contraction mapping theorem there is a unique solution $f_\al$  in $\mathcal{E}_T$  to (\ref{Y3})  and $\|f_\al\|_{\mathcal{E}_T}\le \la^{-1}\|h(\cdot)\|_{2,L}$. Hence $|g'(\al)|=
\left|\left\langle \ [a(\cdot), f_\al(\cdot,T)]_L \ \right\rangle \ \right| \ \le \ \la^{-1}\|a\|_{2,L}\|h(\cdot)\|_{2,L}$. This shows via the mean value theorem that the LHS of (\ref{X3}) is bounded by
$ \la^{-1}\|a\|_{2,L}\|h(\cdot)\|_{2,L}$, which is weaker than (\ref{X3})  The actual  inequality (\ref{X3}) follows in a similar way. 

To prove (\ref{Z3}) we use the Clark-Okone formula \cite{ct,nu}. For $t>0$ let $\mathcal{F}_t$ be the $\sig$-field generated by Brownian motion $B(x,s), \ x\in Q_L, \ 0\le s\le t$.  
Let $F(\tilde{\om}_L)$ be a function of $\tilde{\om}_L(x,t),  \ x\in Q_L, \ 0\le t\le T$.  The Martingale representation theorem  yields a formula
\be \label{AA3}
F(\tilde{\om}_L)-\langle \ F(\tilde{\om}_L)  \ \rangle   \ = \ \int_0^T[\sig_L(\cdot,t,T),dB(\cdot,t)]_L \  ,
\ee
where  $\sig_L(t,T))\in\ell_2(Q_L,\mathbb{R})$ is measurable  with respect to $\mathcal{F}_t, \ 0\le t\le T$. We then have that
\be \label{AB3}
\left\langle \ \exp\left[        \  F(\tilde{\om}_L)-\langle \ F(\tilde{\om}_L)  \ \rangle  \ \right] \ \right\rangle \
= \  \left\langle \ \exp\left[ \frac{1}{2} \int_0^T dt \   \left\|  \sig_L(t,T) \right\|_{2,L}^2   \       \right]  \ \right\rangle \ .
\ee
The Clark-Okone formula tells us that
\be \label{AC3}
\sig_L(t,T) \ = \  E\left[     D_{{\rm Mal},t}F(\tilde{\om}_L) \ \big| \ \mathcal{F}_t                      \right]  \ , \quad 0\le t\le T\  .
\ee
Taking $F(\tilde{\om}_L)$ to be the function (\ref{R3}), it follows from (\ref{I3}), (\ref{V3}) that
\be \label{AD3}
D_{{\rm Mal},t}F(\tilde{\om}_L)  \ = \ \exp[-(T-t)/2] (-\De+m^2)^{-1/2}\na^*a_\infty(\cdot,t,T), \ 0\le t\le T \ .
\ee
The inequality (\ref{Z3}) now follows from (\ref{U3}), (\ref{AB3})-(\ref{AD3}). 
\end{proof}
\begin{proposition}
With the notation in the statement of Proposition 3.1, the following inequality holds:
\be \label{AE3}
\left\|\left\langle \  \om^0_{\ve,m,h,L}(\cdot,T) \ \right\rangle-\left\langle \  \na\phi(\cdot) \ \right\rangle_{\ve,m,h,L} \ \right\|_{2,L} \ \le \ \frac{e^{-\la T/2}}{\la}\sqrt{ \|h\|^2_{2,L}+\ve\la dL^d} \ , \quad T\ge 0  \ .
\ee
\end{proposition}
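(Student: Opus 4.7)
The plan is to use a synchronous coupling argument in the spirit of the contraction estimate already established in the proof of Proposition 3.1. Alongside the trajectory $\om^0_{\ve,m,h,L}(\cdot,t)$ started from $0$, I would run a second solution $\om^{\rm inv}_{\ve,m,h,L}(\cdot,t)$ of (\ref{C3}) driven by the \emph{same} white noise $\tilde{\om}_L(\cdot,t)$, but with a random initial condition $\om^{\rm inv}(\cdot,0)=\na\phi(\cdot)$ distributed according to $\langle\cdot\rangle_{\ve,m,h,L}$. Since that distribution is invariant for (\ref{C3}), one has $\langle \om^{\rm inv}(\cdot,T)\rangle=\langle\na\phi(\cdot)\rangle_{\ve,m,h,L}$ for all $T\ge 0$, so it suffices to control the difference $\De\om(\cdot,t)=\om^0(\cdot,t)-\om^{\rm inv}(\cdot,t)$ in mean.

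The noise cancels in $\De\om$, and writing $V'(\om^0)-V'(\om^{\rm inv})=M(t)\De\om$ with $M(t)=\int_0^1 V''(\al\om^0+(1-\al)\om^{\rm inv})\,d\al$, setting $\tilde{\mathbf{b}}(t)=I_d-M(t)$ (which satisfies $0\le\tilde{\mathbf{b}}(t)\le(1-\la)I_d$ by (\ref{D1})), the subtraction of (\ref{C3}) yields
\begin{equation*}
\frac{d}{dt}\De\om(\cdot,t) \ = \ -\frac{1}{2}\De\om(\cdot,t) \ + \ \frac{1}{2}\na[-\De+m^2]^{-1}\na^*\tilde{\mathbf{b}}(t)\De\om(\cdot,t) \ .
\end{equation*}
Since $\na[-\De+m^2]^{-1}\na^*$ has operator norm at most $1$ on $\ell_2(Q_L,\mathbb{R}^d)$ (immediate in Fourier), the cross term is bounded by $(1-\la)\|\De\om\|_{2,L}^2$, so a direct computation gives the pathwise estimate $\frac{d}{dt}\|\De\om(\cdot,t)\|_{2,L}^2\le -\la\|\De\om(\cdot,t)\|_{2,L}^2$, hence $\|\De\om(\cdot,T)\|_{2,L}\le e^{-\la T/2}\|\na\phi(\cdot)\|_{2,L}$ almost surely. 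Taking expectations and applying Jensen, the LHS of (\ref{AE3}) is dominated by $e^{-\la T/2}\big(\langle\|\na\phi(\cdot)\|_{2,L}^2\rangle_{\ve,m,h,L}\big)^{1/2}$.

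It remains to bound $\langle\|\na\phi\|_{2,L}^2\rangle_{\ve,m,h,L}=\sum_{x,i}[{\rm var}(\na_i\phi(x))+\langle\na_i\phi(x)\rangle^2]$. The variance part is handled by the BL inequality (\ref{G1}) with $a=\mathbf{e}_i\mathbf{1}_x$, giving ${\rm var}(\na_i\phi(x))\le\ve/\la$ and hence $\ve dL^d/\la$ after summing. For the mean, I apply (\ref{O2}) with $a=\langle\na\phi\rangle_{\ve,m,h,L}$: the resulting inequality reads
\begin{equation*}
\|\langle\na\phi\rangle\|_{2,L}^2 \ \le \ \|(-\la\De+m^2)^{-1/2}\na^*\langle\na\phi\rangle\|_{2,L}\,\|(-\la\De+m^2)^{-1/2}\na^*h\|_{2,L} \ \le \ \la^{-1}\|\langle\na\phi\rangle\|_{2,L}\|h\|_{2,L},
\end{equation*}
where the last step uses the Fourier bound $\|\na(-\la\De+m^2)^{-1}\na^*\|_{\rm op}\le \la^{-1}$. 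This yields $\|\langle\na\phi\rangle\|_{2,L}\le \|h\|_{2,L}/\la$, so $\langle\|\na\phi\|_{2,L}^2\rangle\le \la^{-2}(\|h\|_{2,L}^2+\ve\la dL^d)$; combined with the coupling estimate this is exactly (\ref{AE3}).

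The main subtlety I anticipate is the justification of the coupling itself, namely running (\ref{C3}) from a random initial condition drawn from the invariant measure and appealing to invariance to identify $\langle\om^{\rm inv}(\cdot,T)\rangle$ with $\langle\na\phi(\cdot)\rangle_{\ve,m,h,L}$; given the global Lipschitz character of the drift in (\ref{C3}) (a consequence of (\ref{D1}) and the operator norm bound $\|\na[-\De+m^2]^{-1}\na^*\|_{\rm op}\le 1$) strong existence, uniqueness, and a common filtration version driving both trajectories pose no real difficulty, and the contraction computation carries through pathwise. The remaining algebraic steps -- the mean bound via (\ref{O2}) and the variance bound via (\ref{G1}) -- are routine applications of results already in hand.
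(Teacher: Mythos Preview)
Your proposal is correct and follows essentially the same route as the paper: couple $\om^0$ with a solution started from the invariant measure, establish a pathwise contraction $\|\om^\xi(\cdot,T)-\om^0(\cdot,T)\|_{2,L}\le e^{-\la T/2}\|\xi\|_{2,L}$, and then bound $\langle\|\na\phi\|_{2,L}^2\rangle_{\ve,m,h,L}$ by splitting into mean (via (\ref{O2})) and variance (via (\ref{G1})). The only technical difference is in how the contraction is obtained: the paper linearizes with respect to the initial condition (the first variation equation (\ref{AG3})) and reuses the Dyson series machinery from Proposition 3.1 to get (\ref{AK3}), then integrates via the fundamental theorem of calculus, whereas you derive the same bound by a direct Gronwall/energy estimate on the nonlinear difference---a slightly more elementary but entirely equivalent argument.
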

\begin{proof}
Let $\zeta(\cdot)\in\ell_2(Q_L,\mathbb{R}^d)$ and define the directional derivative $D_\zeta\om^\xi_{\ve,m,h,L}(\cdot,T)$ by
\be \label{AF3}
D_\zeta\om^\xi_{\ve,m,h,L}(\cdot,T) \ = \  \lim_{\eta\ra0} \eta^{-1} \left[    \om^{\xi+\eta \zeta}_{\ve,m,h,L}(\cdot,T) -\om^\xi_{\ve,m,h,L}(\cdot,T)                              \right] \ .
\ee
We see  that $f(\cdot,t)=D_\zeta\om^\xi_{\ve,m,h,L}(\cdot,t), \ 0\le t\le T,$ is the solution to an equation similar to (\ref{P3}), 
\be \label{AG3}
f(\cdot,\cdot) \ = \ \mathcal{A}_Tf(\cdot,\cdot) \ = \  \mathcal{L}_Tf(\cdot,\cdot)+ k(\cdot,\cdot), \quad k(\cdot,t) \ = \  e^{-t/2}\zeta(\cdot)  \ , \ \ 0\le t\le T.
\ee
Since $\mathcal{A}_T$ is a contraction on $\mathcal{E}_T$ we have similarly to (\ref{Q3}) that
\be \label{AH3}
\lim_{n\ra\infty}[a(\cdot), \mathcal{A}_T^n0(\cdot,T)]_L \ = \ [a(\cdot), D_\zeta\om^\xi_{\ve,m,h,L}(\cdot,T)]_L  \ .
\ee
Next we observe that
\be \label{AI3}
[a(\cdot), \mathcal{A}_T^n0(\cdot,T)]_L \ = \ e^{-T/2}[a_n(\cdot,0,T), \zeta(\cdot)]_L \ , \quad n=1,2,\dots, \  T>0 \ ,
\ee
where $a_n(\cdot,t,T)$ is defined by (\ref{T3}).  We conclude from (\ref{AH3}), (\ref{AI3}) that
\be \label{AJ3}
 [a(\cdot), D_\zeta\om^\xi_{\ve,m,h,L}(\cdot,T)]_L  \ = \ e^{-T/2}[a_\infty(\cdot,0,T),\zeta(\cdot)]_L \ .
\ee
Similarly to  (\ref{U3}) we see that $\|a_\infty(0,T)\|_{2,L}\le e^{(1-\la)T/2}\|a\|_{2,L}$. It follows from this and  (\ref{AJ3}) that
\be \label{AK3}
\|D_\zeta\om^\xi_{\ve,m,h,L}(\cdot,T)\|_{2,L} \ \le \ e^{-\la T/2}\|\zeta\|_{2,L} \ .
\ee
Using the identity
\be \label{AL3}
\om^\xi_{\ve,m,h,L}(\cdot,T)-\om^0_{\ve,m,h,L}(\cdot,T) \ = \ \int_0^1 D_\xi\om^{\al\xi}_{\ve,m,h,L}(\cdot,T)  \ d\al \ ,
\ee
we conclude from (\ref{AK3}) that
\be \label{AM3}
\|\om^\xi_{\ve,m,h,L}(\cdot,T)-\om^0_{\ve,m,h,L}(\cdot,T)  \|_{2,L} \ \le \  e^{-\la T/2}\|\xi\|_{2,L} \ .
\ee

To prove (\ref{AE3}) we first observe from (\ref{O2}), (\ref{X3}) that
\be \label{AN3}
\left\|\langle \  \na\phi(\cdot) \ \rangle_{\ve,m,h,L}\right\|_{2,L}, \quad \left\|\left\langle \  \om^0_{\ve,m,h,L}(\cdot,T) \ \right\rangle  \ \right\|_{2,L} \ \le \ \la^{-1} \|h\|_{2,L} \ , \quad T>0 \ .
\ee
Next we observe  since the measure $\langle\cdot\rangle_{\ve,m,h,L}$ is invariant for the stochastic dynamics (\ref{B3}) that
\be \label{AO3}
\left\langle \ [a(\cdot), \na\phi(\cdot)]_L \ \right\rangle_{\ve,m,h,L}  \ = \ \left\langle \ [a(\cdot), \om^\xi_{\ve,m,h,L}(\cdot,T)]_L \ \right\rangle_{\ve,m,h,L} \  , \quad T\ge 0 \ ,
\ee
where $\xi(\cdot)=\na\phi(\cdot)\in\ell_2(Q_L,\mathbb{R}^d)$ is a random variable, independent of the dynamics (\ref{B3}),  and the distribution of $\phi(\cdot)$ is determined by the measure
$\langle\cdot\rangle_{\ve,m,h,L}$.
It follows from (\ref{AM3}), (\ref{AO3}) that the square of the LHS of (\ref{AM3}) is bounded by
\begin{multline} \label{AP3}
\left\langle \  \|\om^\xi_{\ve,m,h,L}(\cdot,T)-\om^0_{\ve,m,h,L}(\cdot,T)  \|^2_{2,L} \ \right\rangle_{\ve,m,h,L} \ \le  \ e^{-\la T}\left\langle \  \|\na\phi(\cdot)\|_{2,L}^2 \ \right\rangle_{\ve,m,h,L} 
\\  = \  e^{-\la T}\left\| \ \langle \na\phi(\cdot) \ \rangle_{\ve,m,h,L} \ \right\|_{2,L}^2+e^{-\la T}\sum_{k=1}^\infty{\rm var}_{\ve,m,h,L}\left\{[a_k(\cdot),\na\phi(\cdot)] \right\} \ ,
\end{multline}
where $a_k(\cdot), \ k=1,2,\dots,$ is an orthonormal basis for $\ell_2(Q_L,\mathbb{R}^d)$. The inequality (\ref{AE3}) follows from (\ref{G1}), (\ref{AN3}), (\ref{AP3}). 
\end{proof}
\begin{rem}
Note that the inequality (\ref{AE3}) depends on the dimension $dL^d$ except in the case $\ve=0$.  The  inequality (\ref{AE3}) with $\ve=0$ also holds when $\ve>0$   in the SDE (\ref{B3}) provided  $V''(\cdot)$ is assumed constant  i.e. the Gaussian case since the mean $\left\langle \  \om^\xi_{\ve,m,h,L}(\cdot,T) \ \right\rangle $ evolves with $T$ as in the deterministic case $\ve=0$. It follows from (\ref{AN3}) that in the general case  there is a  bound  uniform in $T$ on the LHS of (\ref{AE3}) which is independent of $L$.  However we are unable to show that  the bound converges to $0$ as $T\ra\infty$. 
\end{rem}

The upper bound (\ref{N2}), (\ref{O2}) on $|D_aq_{\ve,m,L}(h)|$ follows from (\ref{X3})  and Proposition 3.2. The inequality is actually slightly weaker than (\ref{O2}) since $m^2$ in (\ref{O2}) is replaced by $\la m^2\le m^2$  in (\ref{X3}).  However the inequalities become identical in the limit $L\ra\infty, \ m\ra0$.  We may also derive (\ref{G1})  from (\ref{K3}) by  using the identity for two random variables $X,Y$:
\be \label{AQ3}
{\rm Var}[X] \ = \  E\big[ {\rm Var}[X|Y]\big]+ {\rm Var}\big[E[X|Y]\big] \ . 
\ee
Let $\xi(\cdot)\in\ell_2(Q_L,\mathbb{R}^n)$ in (\ref{K3}) be a random variable, independent of the dynamics (\ref{B3}), which satisfies $E[\|\xi\|_{2,L}^2]<\infty$. We set $Y=\xi(\cdot)$ and $X=\ve^{-1/2}[a(\cdot),\om^\xi_{\ve,m,h,L}(\cdot,T)]_L $.
From (\ref{K3}) we have that
\be \label{AR3}
 E\big[ {\rm Var}[X|Y]\big] \ \le \ \la^{-1}\|(-\De+m^2)^{-1/2}\na^*a\|_{2,L}^2 \ .
\ee
We also have that
\be \label{AS3}
E[X|Y] \ = \ E\left[    \ve^{-1/2}[a(\cdot),\om^\xi_{\ve,m,h,L}(\cdot,T)-\om^0_{\ve,m,h,L}(\cdot,T)]_L \ \Big|  \ \xi(\cdot)  \right] + E\left[    \ve^{-1/2}[a(\cdot),\om^0_{\ve,m,h,L}(\cdot,T)]_L  \right]  \ .
\ee
Since the second term on the RHS of (\ref{AS3}) is independent of $\xi$ we have using (\ref{AM3}) that
\be \label{AT3}
{\rm Var}\big[E[X|Y]\big]  \ \le \  Ce^{-\la T} \quad {\rm for \ some \ constant \ } C \ .
\ee
The inequality (\ref{G1}) follows from (\ref{AQ3})-(\ref{AT3})  by choosing $\xi(\cdot)=\na\phi(\cdot)$, with the distribution of $\phi(\cdot)$ determined by the measure
$\langle\cdot\rangle_{\ve,m,h,L}$, and letting $T\ra\infty$.  The exponential inequality (\ref{E3}) can similarly be derived from the exponential inequality (\ref{Z3}) by writing
\be \label{AU3}
\langle \ [a(\cdot),\om^\xi_{\ve,m,h,L}(\cdot,T) ]_L  \ \rangle \ = \ \langle \ [a(\cdot),\om^0_{\ve,m,h,L}(\cdot,T) ]_L \ \rangle 
+\langle \ [a(\cdot),\om^\xi_{\ve,m,h,L}(\cdot,T)-\om^0_{\ve,m,h,L}(\cdot,T) ]_L \ \rangle  \ ,
\ee
using (\ref{AE3}), (\ref{AM3}) and letting $T\ra\infty$. 

The lower bound (\ref{S2}), which implies the upper bound in (\ref{L1}), may also be derived from the properties of solutions to (\ref{B3}).  To see this we consider the general situation (\ref{R1})-(\ref{V1}). 
Let $W:\mathbb{R}^n\ra\mathbb{R}$ be a $C^2$ convex function with Hessian $D^2W(\cdot)$ satisfying the quadratic form  inequalities
\be \label{AV3}
0 \ < \ D^2W(\cdot) \ \le \ A^{-1} \ , \quad {\rm where \ }A \ {\rm is \ symmetric \ positive \ definite.}
\ee
For any $k\in\mathbb{R}^n$ we denote by $W_k(\cdot)$  the function $W_k(\phi)=W(\phi)+[k,\phi]_n, \ \phi\in\mathbb{R}^n$, where $[\cdot,\cdot]_n$ is the Euclidean inner product on $\mathbb{R}^n$. Taking $b(\cdot)$ in (\ref{V1}) to be $b(\phi)=-\frac{1}{2}DW_k(\phi), \ \phi\in\mathbb{R}^n$,  we have already observed that the probability measure with density proportional to the function $\phi\ra\exp[-W_k(\phi)/\ve], \ \phi\in\mathbb{R}^n,$ is invariant for the stochastic dynamics (\ref{V1}). We define directional derivatives of functions $F(k), \ k\in\mathbb{R}^n,$ by
\be \label{AW3}
D_bF(k) \ = \ \lim_{\eta\ra0}\eta^{-1}[F(k+\eta b)-F(k)] \  , \quad b\in\mathbb{R}^n \ .
\ee
Letting $\langle \cdot\rangle_{\ve,k}$ denote expectation with respect to the probability measure with density proportional to $\phi\ra\exp[-W_k(\phi)/\ve], \ \phi\in\mathbb{R}^n,$ we define a linear operator
$\mathcal{L}_{\ve,k,\infty}$ on $\mathbb{R}^n$ by
\be \label{AX3}
\mathcal{L}_{\ve,k,\infty} b \ = \ -D_b \langle \ \phi \ \rangle_{\ve,k} \ = \ \ve^{-1}\left\langle \ \left\{[b,\phi]_n-\langle\  [b,\phi]_n \ \rangle_{\ve,k} \ \right\}\phi \ \right\rangle_{\ve,k} \ , \quad b\in\mathbb{R}^n \ .
\ee
We see from (\ref{AX3}) that $\mathcal{L}_{\ve,k,\infty}$ is self-adjoint and 
\be \label{AY3}
[b_1, \mathcal{L}_{\ve,k,\infty}b_2]_n \ = \ \ve^{-1}{\rm cov}_{\ve,k}\left\{  [b_1,\phi]_n, \ [b_2,\phi]_n                 \right\} \ , \quad b_1,b_2\in\mathbb{R}^n \ .
\ee
It follows from (\ref{AY3}) that  $\mathcal{L}_{\ve,k,\infty}$ is positive definite.

The SDE (\ref{V1}) with $b(\cdot)=-\frac{1}{2}DW_k(\cdot)$ is given by
 \be \label{AZ3}
d\phi_{\ve,k}(t)  \ = \  -\frac{1}{2}A\left[DW(\phi_{\ve,k}(t))+k  \right] \ dt+\sqrt{\ve A} \  dB(t) \ , \quad t>0 \ .
\ee
We consider solutions of (\ref{AZ3}) with initial condition $\phi_{\ve,k}(0) =0$ and
take the directional derivative of (\ref{AZ3}) with respect to $k$ as in (\ref{AW3}). We see that $D_b\phi_{\ve,k}(t), \ t>0,$ is  the solution to the the first variation linear  initial value problem 
\be \label{BA3}
d [D_b\phi_{\ve,k}(t)]  \ = \  -\frac{1}{2}A\left[D^2W(\phi_{\ve,k}(t))D_b\phi_{\ve,k}(t)+b  \right] \ dt \ , \quad D_b\phi_{\ve,k}(0)=0 \ .
\ee
In view of (\ref{AV3}) we may write (\ref{BA3}) as
\be \label{BB3}
d [D_b\phi_{\ve,k}(t)]  \ = \  -\frac{1}{2}\left[D_b\phi_{\ve,k}(t) -AK(\phi_{\ve,k}(t))D_b\phi_{\ve,k}(t) +Ab\right] \ dt \ , \quad D_b\phi_{\ve,k}(0)=0 \ ,
\ee
where $K(\phi)=A^{-1}-D^2W(\phi), \ \phi\in\mathbb{R}^n,$ is a self-adjoint operator on $\mathbb{R}^n$ and satisfies the quadratic form inequalities
\be \label{BC3}
0 \ \le \ K(\cdot)\ \ < \ A^{-1} \ .
\ee
We may integrate (\ref{BB3}) to obtain the integral equation
\be \label{BD3}
D_b\phi_{\ve,k}(t) \ = \ \frac{1}{2}\int_0^t e^{-(t-s)/2}AK(\phi_{\ve,k}(s))D_b\phi_{\ve,k}(s) \ ds -\left[1-e^{-t/2}\right]Ab \ , \quad t>0.
\ee
By iterating the integral equation (\ref{BD3}) we may expand $D_b\phi_{\ve,k}(T)$ in a power series in $K(\cdot)$,
\begin{multline} \label{BE3}
D_b\phi_{\ve,k}(T) \ = \ \sum_{r=0}^\infty D_b\phi_{\ve,k,r}(T)  \ , \quad D_b\phi_{\ve,k,0}(T)= -\left[1-e^{-T/2}\right]Ab  \ , \\
 D_b\phi_{\ve,k,r}(T) \ = \  -\frac{1}{2^{r+1}}\int_{0<s_1<s_2<\cdots<s_{r+1}<T} ds_1\cdots ds_{r+1} \ e^{-(T-s_1)/2}AK(\phi_{\ve,k}(s_{r+1}))\cdots AK(\phi_{\ve,k}(s_2))Ab \ .
\end{multline} 
It follows from (\ref{BC3}) that the Euclidean matrix norm  of $A^{1/2}K(\cdot)A^{1/2}$ is strictly less than $1$, whence the power series (\ref{BE3}) converges. 

One can easily see that
\be \label{BF3}
\lim_{T\ra\infty} \langle \  \phi_{\ve,k}(T) \ \rangle \ = \ \langle \ \phi \ \rangle_{\ve,k} \ , \quad \lim_{T\ra\infty} \langle \ D_b \phi_{\ve,k}(T) \ \rangle \ = \ D_b\langle \ \phi \ \rangle_{\ve,k}  \ ,
\ee
where the function $t\ra D_b\phi_{\ve,k}(t), \ t>0,$ is defined as the solution to (\ref{BD3}).  
The limits (\ref{BF3}) hold not only for initial data $\phi_{\ve,k}(0)=0$, but for general initial data of $\phi_{\ve,k}(t), \ t>0$, in particular when $\phi_{\ve,k}(0)$ has the stationary distribution
$\phi\ra\exp[-W_k(\phi)/\ve], \ \phi\in\mathbb{R}^n$. In that case $t\ra \phi_{\ve,k}(t), \ t\ge 0,$ may be extended to $t\in\mathbb{R}$ and is time translation and  time reversal invariant.  The time reversal invariance follows from the self-adjointness of the infinitesimal generator $\mathcal{A}_k$ of the diffusion (\ref{AZ3}).  We have for $C^2$ functions $u:\mathbb{R}^n\ra\mathbb{R}$  that
\be \label{BG3}
\mathcal{A}_ku(\phi) \ = \ -\frac{1}{2} [ADW_k(\phi), Du(\phi)]_n+\frac{\ve}{2}{\rm Tr}[AD^2u(\phi)] \ , \quad \phi\in\mathbb{R}^n \ .
\ee
The self-adjointness and negative definiteness of the operator (\ref{BG3}) follows from the identity
\be \label{BH3}
[v,\mathcal{A}_ku]_{\ve,k} \ = \ \langle \ v(\cdot),\mathcal{A}_ku(\cdot) \ \rangle_{\ve,k} \ = \  -\frac{\ve}{2}\langle \ [Dv(\cdot), ADu(\cdot)]_n \ \rangle_{\ve,k} \ ,
\ee
for $C^2$ functions $u,v:\mathbb{R}^n\ra\mathbb{R}$. 

We define  linear operators $\mathcal{L}_{\ve,k,T}, \ T>0,$ on $\mathbb{R}^n$ by
\be \label{BI3}
\mathcal{L}_{\ve,k,T} b \ = \  -\langle \ D_b\phi_{\ve,k}(T) \ \rangle \  , \quad b\in\mathbb{R}^n \ ,
\ee
where $t\ra D_b\phi_{\ve,k}(t)$ is the solution to (\ref{BD3}) and $t\ra\phi_{\ve,k}(t), \ t\in\mathbb{R},$ is the stationary process for (\ref{AZ3}) with invariant measure $\phi\ra\exp[-W_k(\phi)/\ve], \ \phi\in\mathbb{R}^n$. We have from (\ref{BE3}) that $\mathcal{L}_{\ve,k,T}$ may be written in a power series expansion in $K(\cdot)$,
\be \label{BJ3}
\mathcal{L}_{\ve,k,T} \ = \ \sum_{r=0}^\infty \mathcal{L}_{\ve,k,T,r} \ ,  \quad  \mathcal{L}_{\ve,k,T,r}b  \ = \  -\langle \ D_b\phi_{\ve,k,r}(T) \ \rangle \ , \quad b\in\mathbb{R}^n, \ r=0,1,\dots
\ee
Evidently $ \mathcal{L}_{\ve,k,T,0}=[1-e^{-T/2}]A$ is self-adjoint positive definite.  The operators $ \mathcal{L}_{\ve,k,T,r}, \ r=1,2,\dots,$ are also self-adjoint.  We can see this from (\ref{BE3}) by using the time translation and time reversal invariance of $t\ra\phi_{\ve,k}(t), \ t\in\mathbb{R}$.  Thus we have from (\ref{BE3}) that
\begin{multline} \label{BK3}
 \mathcal{L}_{\ve,k,T,r} \ = \\
   \frac{1}{2^{r+1}}\int_{0<s_1<s_2<\cdots<s_{r+1}<T} ds_1\cdots ds_{r+1} \ e^{-(T-s_1)/2}\left\langle \ AK(\phi_{\ve,k}(T+s_1-s_{r+1}))\cdots AK(\phi_{\ve,k}(T+s_1-s_2))A \ \right\rangle  \\
   =  \ \frac{1}{2^{r+1}}\int_{0<s_1<s_2<\cdots<s_{r+1}<T} ds_1\cdots ds_{r+1} \ e^{-(T-s_1)/2}\left\langle \ AK(\phi_{\ve,k}(s_2))\cdots AK(\phi_{\ve,k}(s_{r+1}))A  \ \right\rangle \\
   = \  \mathcal{L}^*_{\ve,k,T,r} \ , \quad {\rm the \ adjoint \  of  \ }  \mathcal{L}_{\ve,k,T,r} \ .
\end{multline}

It appears that the operators $ \mathcal{L}_{\ve,k,T,r}, \  r\ge 3,$ are not positive definite in general for finite $T$. However the operators $ \mathcal{L}_{\ve,k,\infty,r}= \lim_{T\ra\infty}\mathcal{L}_{\ve,k,T,r}$ are positive definite for all $r=0,1,\dots$.  To see this we consider the function $u:\mathbb{R}^n\times\mathbb{R}^+\ra\mathbb{R}^n$ defined by
\be \label{BL3}
u(\phi,t) \ = \ -E[D_b\phi_{\ve,k}(t) \ | \ \phi(0)=\phi] \ , \quad \phi\in\mathbb{R}^n, \ t>0 \ ,
\ee
where $\phi_{\ve,k}(t), \ t>0,$ is the solution to the SDE (\ref{AZ3}) and $D_b\phi_{\ve,k}(t), \ t>0,$ is the solution to (\ref{BB3}). 
Then $u$ is the solution to the initial value problem
\be \label{BM3}
\frac{\pa u(\phi,t)}{\pa t} \ = \ \mathcal{A}_ku(\phi,t)-\frac{1}{2}[I_n-AK(\phi)]u(\phi,t)+\frac{1}{2}Ab \  , \ t>0, \quad u(\cdot,0)\equiv 0 \ .
\ee
We denote by $K$ the operator on functions $u:\mathbb{R}^n\ra\mathbb{R}^n$ defined by
\be \label{BN3}
Ku(\phi) \ = \ K(\phi)u(\phi) \ , \quad \phi\in\mathbb{R}^n \ .
\ee
We conclude then from (\ref{BI3}), (\ref{BL3})-(\ref{BN3}) that
\be \label{BO3}
\mathcal{L}_{\ve,k,T} b \ = \  \frac{1}{2}  \int_0^T dt \  \left\langle \  \exp\left[ -\frac{1}{2}\{-2\mathcal{A}_k+I_n-AK\}t\right]Ab \ \right\rangle_{\ve,k} \ .
\ee
Letting $T\ra\infty$ in (\ref{BO3}) we obtain the formula
\be \label{BP3}
\mathcal{L}_{\ve,k,\infty} b \ = \   \left\langle \  \{-2\mathcal{A}_k+I_n-AK\}^{-1}Ab \ \right\rangle_{\ve,k} \ .
\ee
Expanding (\ref{BP3}) out in powers of $K$ we see that
\be \label{BQ3}
\mathcal{L}_{\ve,k,\infty,r} \ = \  \left\langle \ \{ (-2\mathcal{A}_k+I_n)^{-1}AK\}^r(-2\mathcal{A}_k+I_n)^{-1}A \ \right\rangle_{\ve,k} \ , \quad r=0,1,\dots
\ee
To see that  $\mathcal{L}_{\ve,k,\infty,r} $ is positive definite we note that
\be \label{BR3}
\left[ b, \mathcal{L}_{\ve,k,\infty,r}b\right]_n \ = \ \left[b, \{(-2\mathcal{A}_k+I_n)^{-1}AK\}^r(-2\mathcal{A}_k+I_n)^{-1}A b\right]_{\ve,k} \ ,
\ee
where $[\cdot,\cdot]_{\ve,k}$ is the inner product  on the Hilbert space of functions  $u:\mathbb{R}^n\ra\mathbb{R}^n$ implied by (\ref{BH3}).  The operators $-\mathcal{A}_k,A,K$ are all self-adjoint  positive-definite on  this Hilbert space and $A$ commutes with $\mathcal{A}_k$. Using the fact that $\mathcal{A}_k$ annihilates the constant function, we have from (\ref{BR3}) that
\be \label{BS3}
\begin{array}{lcl}
\left[ b, \mathcal{L}_{\ve,k,\infty,r}b\right]_n \ &=& \  \left\|                     A^{1/2}(-2\mathcal{A}_k+I_n)^{-1/2}KA\{(-2\mathcal{A}_k+I_n)^{-1}KA\}^{m-1} b                \right\|_{\ve,k}^2 \ , \quad r=2m \ , \\
\left[ b, \mathcal{L}_{\ve,k,\infty,r}b\right]_n \ &=& \  \left\|                     K^{1/2}A\{(-2\mathcal{A}_k+I_n)^{-1}KA\}^m b                \right\|_{\ve,k}^2 \ , \quad r=2m+1 \ ,
\end{array}
\ee
whence we conclude that $\mathcal{L}_{\ve,k,\infty,r}$ is positive definite.

Letting $T\ra\infty$ in (\ref{BJ3}), we obtain from (\ref{BS3})  the quadratic form inequality
\be \label{BT3}
\mathcal{L}_{\ve,k,\infty} \ = \ \sum_{r=0}^\infty \mathcal{L}_{\ve,k,\infty,r} \ \ge \   \mathcal{L}_{\ve,0,\infty,0} \ = \ A \ .
\ee
It follows then from (\ref{AY3}), (\ref{BT3}) that
\be \label{BU3}
\ve^{-1}{\rm var}_{\ve,k}\{ \ [b,\phi]_n \ \} \ \ge \ [b,Ab]_n \ , \quad b\in\mathbb{R}^n \ .
\ee
The inequality (\ref{S2}) follows from (\ref{BU3}) applied to the SDE (\ref{B3}), taking $A=[-\De+m^2]^{-1}$ and $k=\na^*h, \ b=\na^* a$. 

Next we assume the function $V(\cdot)$ is holomorphic in a strip parallel to the real axis  and satisfies the inequality (\ref{O1}).  In that case the SDE (\ref{C3})  may be solved globally in time for some {\it complex} valued $h$. To see this we assume now that $h\in\ell_2(Q_L,\mathbb{C}^d)$ and let  $t\ra \om^\xi_{\ve,m,h,L}(\cdot,t)$ be the solution to (\ref{C3}) with initial condition
$\om^\xi_{\ve,m,h,L}(\cdot,0)=\xi\in \ell_2(Q_L,\mathbb{C}^d)$. The imaginary part  $t\ra \Im  \om^\xi_{\ve,m,h,L}(\cdot,t)$ is evidently  a solution to the initial value problem
\begin{multline} \label{BV3}
d\Im\om^\xi_{\ve,m,h,L}(\cdot,t) \ = \ -\frac{1}{2}\Big[\Im\om^\xi_{\ve,m,h,L}(\cdot,t)+\na[-\De+m^2]^{-1}\na^* \Im h(\cdot) \\
+ \na[-\De+m^2]^{-1}\na^*\{\Im V'(\om_{\ve,m,h,L}(\cdot,t))- \Im\om_{\ve,m,h,L}(\cdot,t)\} \Big] \ dt \ , \quad \Im\om^\xi_{\ve,m,h,L}(\cdot,0) \ = \ \Im\xi \ . 
\end{multline}
By using (\ref{H2}) we may show that the nonlinear term in the evolution equation (\ref{BV3}) is a small perturbation of the linear term provided $\eta>0$ in (\ref{O1}) is sufficiently small. 
\begin{lem} Assume the function $V(\cdot)$ is holomorphic in a strip parallel to the real axis and satisfies the inequalities (\ref{D1}),  (\ref{O1}).  Then
for all   $\eta$ satisfying $0<\eta<\la$ and  $h,\xi\in \ell_2(Q_L,\mathbb{C}^d)$  with $\|\Im h\|_{2,L}, \ \|\Im \xi\|_{2,L}<(\la-\eta)\del(\eta)$, the SDE (\ref{C3}) with initial condition $\xi$  has a  unique strong solution $t\ra \om^\xi_{\ve,m,h,L}(\cdot,t), \ t>0,$  globally in time. Furthermore, the inequality
$\sup_{t>0}\|\Im \om^\xi_{\ve,m,h,L}(\cdot,t)\|_{2,L}< \del(\eta)$ holds.
\end{lem}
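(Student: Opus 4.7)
The plan is to exploit the fact that the driving noise $\sqrt{\ve}\tilde{\om}_L$ in (\ref{C3}) is real-valued, so the imaginary part $\psi(\cdot,t):=\Im\om^\xi_{\ve,m,h,L}(\cdot,t)$ is governed by the \emph{deterministic} ODE (\ref{BV3}).  My strategy is to derive a closed a priori bound on $\|\psi(\cdot,t)\|_{2,L}$ that keeps the trajectory strictly inside the strip where $V'(\cdot)$ is holomorphic, and combine this with standard local SDE theory to extend the solution to all $t>0$.

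First I would establish local existence and uniqueness on a maximal interval $[0,T_{\max})$.  Since $V''(\cdot)$ is holomorphic and uniformly continuous on the strip $\{\om\in\mathbb{C}^d:|\Im\om|<\del(\eta)\}$, the drift in (\ref{C3}) is locally Lipschitz in $\om$ as long as $\sup_{x\in Q_L}|\Im\om(x,t)|<\del(\eta)$, and finite-time blow-up on the finite-dimensional space $\ell_2(Q_L,\mathbb{C}^d)$ is ruled out by the linear growth of the drift (the nonlinearity $V'(\om)-\om$ has uniformly bounded derivative by (\ref{D1})).  Thus the only possible obstruction to global existence is the imaginary part reaching the boundary of the strip.

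Next I would derive the key energy estimate.  Differentiating $\|\psi(\cdot,t)\|_{2,L}^2$ along (\ref{BV3}) gives
\[
\tfrac12\tfrac{d}{dt}\|\psi\|_{2,L}^2 \ = \ -\tfrac12\|\psi\|_{2,L}^2-\tfrac12[\psi,P\Im h]_L-\tfrac12\bigl[\psi,P\{\Im V'(\om)-\Im\om\}\bigr]_L \ ,
\]
where $P:=\na[-\De+m^2]^{-1}\na^*$ has operator norm at most $1$ on $\ell_2(Q_L,\mathbb{R}^d)$.  To bound the nonlinear term I would decompose $\Im V'(\om)-\Im\om=\bigl[\Im V'(\om)-V''(\Re\om)\Im\om\bigr]+[V''(\Re\om)-I_d]\Im\om$: the first bracket is bounded pointwise by $\eta|\Im\om|$ via (\ref{H2}), the second by $(1-\la)|\Im\om|$ via (\ref{D1}).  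Applying $\|P\|\le 1$ and Cauchy--Schwarz to the source term then yields the scalar differential inequality
\[
\tfrac{d}{dt}\|\psi\|_{2,L}^2 \ \le \ -(\la-\eta)\|\psi\|_{2,L}^2+\|\psi\|_{2,L}\,\|\Im h\|_{2,L} \ .
\]

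Finally, a Gronwall-type comparison (integrating factor $e^{(\la-\eta)t/2}$ applied to $\|\psi\|_{2,L}$) gives $\|\psi(\cdot,t)\|_{2,L}\le\max\bigl(\|\Im\xi\|_{2,L},\,\|\Im h\|_{2,L}/(\la-\eta)\bigr)$ for all $t\in[0,T_{\max})$.  Under the hypothesis $\|\Im\xi\|_{2,L},\|\Im h\|_{2,L}<(\la-\eta)\del(\eta)$ and using $\la\le 1$, both quantities are strictly less than $\del(\eta)$.  Since on a finite lattice $|\psi(x,t)|\le\|\psi(\cdot,t)\|_{2,L}$ for each $x\in Q_L$, the trajectory stays in a compact subset of the strip bounded away from its boundary, and the blow-up criterion then forces $T_{\max}=\infty$.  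The main subtlety is that the bound must be \emph{strict} (not merely $\le\del(\eta)$), which is precisely why the hypothesis involves the factor $\la-\eta<1$ rather than $\del(\eta)$ alone; the $1-\la+\eta$ from the nonlinear estimate must beat the $1$ from the dissipative linear term with room to spare.
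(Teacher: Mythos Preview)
Your proposal is correct and rests on the same key ingredients as the paper's proof: the pointwise bound (\ref{H2}) on $\Im V'(\om)-V''(\Re\om)\Im\om$, the bound $\|V''(\Re\om)-I_d\|\le 1-\la$ from (\ref{D1}), the operator bound $\|\na[-\De+m^2]^{-1}\na^*\|_{2,L}\le 1$, and the observation that on the finite lattice $|\psi(x,t)|\le\|\psi(\cdot,t)\|_{2,L}$.  The packaging, however, is different.  The paper works with the integral (mild) formulation (\ref{L3})--(\ref{BX3}), defines the Picard map $\mathcal{K}_T$ on a space of paths whose imaginary part stays in the ball of radius $\del<\del(\eta)$, and verifies via the integral inequality (\ref{BY3}) that $\mathcal{K}_T$ maps this set to itself; existence then follows from the contraction-mapping construction as in \cite{ks}.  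You instead differentiate $\|\psi(\cdot,t)\|_{2,L}^2$ along the imaginary-part ODE (\ref{BV3}) and close a Gronwall loop, then invoke a continuation argument.  Your route is marginally more direct (no need to set up the fixed-point space explicitly), while the paper's integral formulation dovetails with the later perturbation expansions (\ref{O3})--(\ref{T3}) used in Propositions~3.1--3.4.  One small imprecision: when you assert linear growth of the drift you cite (\ref{D1}), but on the complex strip you also need (\ref{O1}) to bound $\|V''(\om)\|\le 1+\eta$; this is implicit in your argument but worth stating.
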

\begin{proof}
We follow the argument of \cite{ks}, Chapter 5 to show existence and uniqueness of strong solutions to (\ref{C3}).  Thus we write (\ref{C3}) as the fixed point equation (\ref{L3}). For $T,\del>0$ and 
$\xi\in\ell_2(Q_L,\mathbb{C}^d)$ let   $\mathcal{S}_{\xi,\del,T}$ be the space of continuous functions $f:Q\times[0,T]\ra\mathbb{C}^d$ such that $f(\cdot,0)=\xi(\cdot), \  \sup_{0<t<T}\| f(\cdot,t)\|_{2,L}<\infty$ and $\sup_{0<t<T}\| \Im f(\cdot,t)\|_{2,L}\le \del$. We define a nonlinear operator $\mathcal{K}_T$ on  $\mathcal{S}_{\xi,\del,T}$ by  
\begin{multline} \label{BW3}
\mathcal{K}_Tf(\cdot,t) \ = \ e^{-t/2}\xi -\left\{1-e^{-t/2}\right\}\na[-\De+m^2]^{-1}\na^*h(\cdot)\ \\
-\frac{1}{2} \int_0^te^{-(t-s)/2} \ ds \  \na[-\De+m^2]^{-1}\na^*\left\{V'(f(\cdot,s))-f(\cdot,s)\right\} \ , \quad 0\le t\le T \ .
\end{multline}
Then (\ref{L3}) is given by 
\be \label{BX3}
\om^\xi_{\ve,m,h,L}(\cdot,\cdot) \ = \ \mathcal{K}_T\om^\xi_{\ve,m,h,L}(\cdot,\cdot)+ \sqrt{\ve}k(\cdot,\cdot) \ , \quad  k(\cdot,t) \ = \ \int_0^te^{-(t-s)/2}  \tilde{\om}_L(\cdot,s) \ ds   \ .
\ee
It is clear from (\ref{BW3})  that if $f\in \mathcal{S}_{\xi,\del,T}$ then  $\mathcal{K}_Tf(\cdot,\cdot)$  is continuous, $\mathcal{K}_Tf(\cdot,0)=\xi$ and  $\sup_{0<t<T}\|\mathcal{K}_T f(\cdot,t)\|_{2,L}<\infty$.
 Noting that $|\Im f(x,t)|\le \|\Im f(\cdot,t)\|_{2,L}, \ x\in Q_L, \ 0\le t\le T,$  we further see using (\ref{D1}), (\ref{H2})   that
\begin{multline} \label{BY3}
\|\Im \mathcal{K}_Tf(\cdot,t)\|_{2,L} \ \le \ e^{-t/2}\|\Im\xi\|_{2,L} +\left\{1-e^{-t/2}\right\}\|\Im h\|_{2,L} \\
+ \ \frac{1-\la+\eta}{2} \int_0^te^{-(t-s)/2}\|\Im f(\cdot,s\|_{2,L} \ ds  \ , \quad 0<t<T \ ,
\end{multline}
provided $\del<\del(\eta)$.
We conclude from (\ref{BY3}) that $ \mathcal{K}_T$ maps $\mathcal{S}_{\xi,\del,T}$ to itself if $\eta<\la$ and $\del$ is sufficiently close to $\del(\eta)$. Since the function $k(\cdot,\cdot)$ in (\ref{BX3}) is continuous and real valued,  we may proceed as in \cite{ks}, Chapter 5 to establish  the existence of a  unique strong solution   globally in time to the SDE (\ref{C3}). 
\end{proof}
Next we extend the results of Proposition 3.1 to complex $h(\cdot), \ \xi(\cdot)$ and $a(\cdot)$. 
\begin{proposition}
Assume the function $V(\cdot)$ and  $h(\cdot),\xi(\cdot)\in\ell_2(Q_L,\mathbb{C}^d)$ satisfy the conditions of Lemma 3.1, and $\om^\xi_{\ve,m,h,L}(\cdot,t) , \ t>0,$  is the solution to (\ref{C3}) with initial condition  $\om^\xi_{\ve,m,h,L}(\cdot,t)=\xi$.  Then  for all $a\in\ell_2(Q_L,\mathbb{C}^d)$ the following inequalities hold:
\be \label{BZ3}
\ve^{-1}{\rm var}\left\{  [a(\cdot),\om^\xi_{\ve,m,h,L}(\cdot,T)]_L \ \right\} \ \le \  \frac{1}{\la-\eta}\|(-\De+m^2)^{-1/2}\na^*a\|_{2,L}^2 \ , \quad T\ge 0 \ , 
\ee
\be \label{CA3}
\left|\left\langle \ [a(\cdot), \om^0_{\ve,m,h,L}(\cdot,T)]_L \ \right\rangle  \ \right| \ \le \  \frac{1}{\la-\eta}\|(-\De+m^2)^{-1/2}\na^*a\|_{2,L}\|(-\De+m^2)^{-1/2}\na^*h\|_{2,L} \ , \quad T\ge 0 \ ,
\ee
\begin{multline} \label{CB3}
\left\langle \ \exp\left[\frac{1}{\sqrt{\ve}}\Re\left\{[a(\cdot),\om^\xi_{\ve,m,h,L}(\cdot,T)]_L-\langle \ [a(\cdot),\om^\xi_{\ve,m,,h,L}(\cdot,T) ]_L \ \rangle \right\} \ \right] \ \right\rangle \\
 \le \  \exp\left[  \frac{\|(-\De+m^2)^{-1/2}\na^*a\|_{2,L}^2}{2(\la-\eta)}                \right] \ , \ T\ge 0.
\end{multline}
\end{proposition}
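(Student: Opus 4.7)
The plan is to retrace the proof of Proposition 3.1, replacing the real-case bound $0\le\mathbf{b}(\cdot)\le(1-\la)I_d$ (valid as a quadratic form for real arguments) by the complex matrix-norm bound $\|\mathbf{b}(\om)\|\le 1-\la+\eta$ for $\om\in\mathbb{C}^d$ with $|\Im\om|<\del(\eta)$, which follows immediately from (\ref{D1}) together with (\ref{H2}). By Lemma 3.1 the trajectory $t\mapsto\om^\xi_{\ve,m,h,L}(\cdot,t)$ satisfies $\sup_t\|\Im\om^\xi(\cdot,t)\|_{2,L}<\del(\eta)$; since on the finite lattice $Q_L$ the pointwise value is dominated by the $\ell_2$-norm, we obtain $|\Im\om^\xi(x,t)|<\del(\eta)$ uniformly in $(x,t)$, so the bound on $\mathbf{b}(\om^\xi(\cdot,\cdot))$ holds pathwise. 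The algebraic key is the factorization $\na[-\De+m^2]^{-1}\na^*=Q^*Q$ where $Q=(-\De+m^2)^{-1/2}\na^*$ has operator norm $\|Q\|\le 1$ on the complexified $\ell_2$; consequently $\|Q\mathbf{b}(\om^\xi(\cdot,s))Q^*\|\le 1-\la+\eta$ pathwise.

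For (\ref{BZ3}), I apply the Poincar\'{e} inequality (\ref{J3}) to the real and imaginary parts of $F=\ve^{-1/2}[a,\om^\xi_{\ve,m,h,L}(\cdot,T)]_L$ and sum. The first variation analysis (\ref{M3})--(\ref{V3}) proceeds verbatim with complex-valued $\mathbf{b}(\om^\xi)$; left-multiplication of the $k$-th term of the series (\ref{T3}) by $Q$ recasts it as $(1/2)^k(Q\mathbf{b}(s_k)Q^*)\cdots(Q\mathbf{b}(s_1)Q^*)(Qa)$, producing the complex analog of (\ref{U3}),
\[
\|Qa_\infty(\cdot,t,T)\|_{2,L}\ \le\ e^{(1-\la+\eta)(T-t)/2}\|Qa\|_{2,L}.
\]
Inserting this into the Poincar\'{e} bound, the $t$-integral yields $\int_0^T e^{-(\la-\eta)(T-t)}dt\le(\la-\eta)^{-1}$, which gives (\ref{BZ3}).

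For (\ref{CA3}), I follow the argument for (\ref{X3}). Set $g(\al)=\langle[a,\om^0_{\ve,m,\al h,L}(\cdot,T)]_L\rangle$ for $\al\in[0,1]$; Lemma 3.1 applies uniformly in $\al$ since $\|\Im(\al h)\|_{2,L}\le\|\Im h\|_{2,L}$, and the symmetry of $V$ gives $g(0)=0$. Expand the derivative $f_\al$ of (\ref{Y3}) in a Neumann series in $\mathcal{L}_{\al,T}$; each term after using $P=Q^*Q$ takes the form $Q^*(Q\mathbf{b}_kQ^*)\cdots(Q\mathbf{b}_1Q^*)Qh$. Telescoping the time exponentials to $e^{-(T-s_1)/2}\{1-e^{-s_1/2}\}$, followed by simplex integration using $\int_0^\infty e^{-u/2}u^{k-1}/(k-1)!\,du=2^k$, bounds the $k$-th contribution to $|[a,f_\al(\cdot,T)]_L|$ by $(1-\la+\eta)^k\|Qa\|_{2,L}\|Qh\|_{2,L}$. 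Summing the geometric series and applying the mean value theorem yields (\ref{CA3}). For (\ref{CB3}), apply the Clark-Okone formula (\ref{AA3})--(\ref{AC3}) to $\Re F$; the martingale integrand satisfies pathwise $|\sig_L(t,T)|\le\|\Re D_{\mathrm{Mal},t}F\|_{2,L}\le e^{-(T-t)/2}\|Qa_\infty(\cdot,t,T)\|_{2,L}\le e^{-(\la-\eta)(T-t)/2}\|Qa\|_{2,L}$, and inserting into (\ref{AB3}) produces (\ref{CB3}).

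The main obstacle is the operator estimate $\|Q\mathbf{b}(\om)Q^*\|\le 1-\la+\eta$: in the real case, $\mathbf{b}$ is symmetric positive semidefinite and this follows from the quadratic-form inequality $B^*\mathbf{b} B\le(1-\la)B^*B$, but for complex $\om$ the matrix $\mathbf{b}(\om)$ is neither real nor self-adjoint, so we must rely on the pointwise matrix operator norm, which in turn requires the $\ell_\infty$ control on $\Im\om^\xi$ afforded by Lemma 3.1. The extension of the Poincar\'{e} and Clark-Okone inequalities to complex-valued functionals is routine via decomposition into real and imaginary parts.
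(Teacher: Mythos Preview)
Your proposal is correct and follows essentially the same route as the paper's own proof, which simply observes that the operator $\mathcal{L}_T$ of (\ref{O3}) extends to the complex Banach space $\mathcal{E}_T$ with norm $\|\mathcal{L}_T\|_{\mathcal{E}_T}\le 1-\la+\eta$ (via Lemma 3.1 and (\ref{O1})) and then repeats the arguments for (\ref{K3}), (\ref{X3}), (\ref{Z3}) verbatim with $\la$ replaced by $\la-\eta$. Your $Q^*Q$ factorization makes explicit the passage from (\ref{T3}) to the analogue of (\ref{U3}), which the paper leaves implicit; one small slip is that the matrix-norm bound $\|\mathbf{b}(\om)\|\le 1-\la+\eta$ comes directly from (\ref{D1}) and (\ref{O1}) (which controls $V''$), not from (\ref{H2}) (which controls $\Im V'$).
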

\begin{proof}
We proceed as in the proof of Proposition 3.1, merely extending to the complex case.  The linear operator $\mathcal{L}_T$ defined by  (\ref{O3}) now acts on the complex Banach space $\mathcal{E}_T$ of
continuous functions $f:Q_L\times[0,T]\ra\mathbb{C}^d$.   It follows from Lemma 3.1 and (\ref{O1}) that  $\mathcal{L}_T$  is a bounded operator on  $\mathcal{E}_T$  with norm  $\|\mathcal{L}_T\|_{\mathcal{E}_T}\le (1-\la+\eta)$. The proof of (\ref{BZ3}) follows now as in the proof of (\ref{K3}) of Proposition 3.1. The proofs of (\ref{CA3}), (\ref{CB3})  also follow in a similar way to the proofs of (\ref{X3}), (\ref{Z3}). 
\end{proof}
To extend Proposition 3.2 to complex $h(\cdot)$ we first note that the function $h(\cdot)\ra\left\langle \  \na\phi(\cdot) \ \right\rangle_{\ve,h,L}$ which occurs in (\ref{AE3})   may be analytically continued from functions $h\in\ell_2(Q_L,\mathbb{R}^d)$ to functions $h\in\ell_2(Q_L,\mathbb{C}^d)$ satisfying $\|\Im h(\cdot)\|_{2,L}\le \del_1$ for {\it some} $\del_1>0$. The main point of the next proposition is that 
$\del_1$ may be chosen {\it independent} of $L,m$ as $L\ra\infty, \ m\ra0$. 
\begin{proposition}
Assume the function $V(\cdot)$ satisfies the conditions of Lemma 3.1, and let $g_{\ve,m,L}:\ell_2(Q_L,\mathbb{R}^d)\ra \ell_2(Q_L,\mathbb{R}^d)$ be defined by $g_{\ve,m,L}(h(\cdot))= \left\langle \  \na\phi(\cdot) \ \right\rangle_{\ve,m,h,L}$. Then $g_{\ve,m,L}$ extends analytically  to the strip $\{h\in \ell_2(Q_L,\mathbb{C}^d): \ \|\Im h(\cdot)\|_{2,L}<(\la-\eta)\del(\eta)\}$ with the holomorphic function  
$g_{\ve,m,L}(\cdot)$ taking values in  $\ell_2(Q_L,\mathbb{C}^d)$.  Furthermore, one has that
\be \label{CC3}
\lim_{T\ra\infty}\left\|\left\langle \  \om^0_{\ve,m,h,L}(\cdot,T) \ \right\rangle-g_{\ve,m,L}(h(\cdot)) \ \right\|_{2,L} \ =  0  , \ 
\ee
with uniform convergence in any region $\{h\in \ell_2(Q_L,\mathbb{C}^d): \ \|\Re h(\cdot)\|_{2,L}\le M, \ \|\Im h(\cdot)\|_{2,L}\le (\la-\eta)\del\}, \ M>0, \ 0<\del<\del(\eta)$.
\end{proposition}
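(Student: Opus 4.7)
The plan is to define, for $T>0$ and $h$ in the strip $\Sigma=\{h\in\ell_2(Q_L,\mathbb{C}^d):\|\Im h\|_{2,L}<(\la-\eta)\del(\eta)\}$, the function $F_T(h)=\left\langle\om^0_{\ve,m,h,L}(\cdot,T)\right\rangle$, and to show: (a) for each $T>0$, the map $h\mapsto F_T(h)$ is holomorphic on $\Sigma$; and (b) $F_T$ converges as $T\to\infty$ in $\|\cdot\|_{2,L}$ uniformly on every set $\{\|\Re h\|_{2,L}\le M,\ \|\Im h\|_{2,L}\le(\la-\eta)\del\}$ with $0<\del<\del(\eta)$. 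The uniform limit will then be holomorphic on $\Sigma$; it coincides with $g_{\ve,m,L}$ on $\Sigma\cap\ell_2(Q_L,\mathbb{R}^d)$ by Proposition 3.2, so by uniqueness of analytic continuation it is the required extension, and (\ref{CC3}) is just (b) restricted to the real case.

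For (a), I would differentiate the integral equation (\ref{L3}) with respect to $h$ in a complex direction $b\in\ell_2(Q_L,\mathbb{C}^d)$, exactly as (\ref{Y3}) does for real directions. The derivative $f_b(\cdot,\cdot)=D_b\om^0_{\ve,m,h,L}(\cdot,\cdot)$ solves an affine fixed-point equation on the complex Banach space $\mathcal{E}_T$ whose linear part is the complex version of $\mathcal{L}_T$. Lemma 3.1 keeps $\om^0_{\ve,m,h,L}(\cdot,t)$ inside the region where (\ref{O1}) applies, so $\|\mathcal{L}_T\|_{\mathcal{E}_T}\le 1-\la+\eta<1$, and the fixed point exists, is complex-linear in $b$, and is bounded in the spirit of (\ref{Y3}) by $(\la-\eta)^{-1}\|b\|_{2,L}$. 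Taking expectations shows $F_T$ has bounded complex directional derivative at every $h\in\Sigma$; together with local boundedness of $F_T$ on $\Sigma$ (from the complex analogue of (\ref{CA3}) in Proposition 3.1), this yields holomorphicity of $F_T$.

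For (b), I would use the Markov property of (\ref{C3}). For $0<T'\le T$, by time-homogeneity and pathwise uniqueness, $\om^0_{\ve,m,h,L}(\cdot,T)$ equals in distribution $\om^\xi_{\ve,m,h,L}(\cdot,T')\big|_{\xi=\om^0(\cdot,T-T')}$ computed with an independent Brownian path. Hence
\[
F_T(h)-F_{T'}(h)=\Big\langle\,\om^{\om^0(\cdot,T-T')}_{\ve,m,h,L}(\cdot,T')-\om^0_{\ve,m,h,L}(\cdot,T')\,\Big\rangle,
\]
where the outer expectation is over $\om^0(\cdot,T-T')$. Bounding the integrand as in (\ref{AL3})-(\ref{AM3}), but with the complex analogue of (\ref{AK3}),
\[
\|D_\zeta\om^\xi_{\ve,m,h,L}(\cdot,T')\|_{2,L}\le e^{-(\la-\eta)T'/2}\|\zeta\|_{2,L},
\]
produced by the contraction constant $1-\la+\eta$ from Lemma 3.1, yields
\[
\|F_T(h)-F_{T'}(h)\|_{2,L}\le e^{-(\la-\eta)T'/2}\left\langle\,\|\om^0_{\ve,m,h,L}(\cdot,T-T')\|_{2,L}\,\right\rangle.
\]
The expectation on the right is controlled uniformly in $T-T'\ge 0$ and in $h$ on the bounded set of (b) by combining the mean bound (\ref{CA3}) with the variance bound (\ref{BZ3}) summed over an orthonormal basis of $\ell_2(Q_L,\mathbb{R}^d)$. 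Thus $F_T$ is uniformly Cauchy, proving (b).

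The main obstacle I anticipate is justifying in (a) that the directional derivative commutes with expectation over the Brownian motion uniformly in $h\in\Sigma$, so that $F_T$ is genuinely holomorphic rather than merely G\^ateaux-differentiable pathwise. The key enabling input is Lemma 3.1's uniform-in-$t$ confinement of $\Im\om_{\ve,m,h,L}$, which both makes $\mathcal{L}_T$ a contraction in the complex $\mathcal{E}_T$ and validates the moment bounds inherited from Proposition 3.1. Once this is in place, all remaining estimates parallel the real-variable analysis of Propositions 3.1 and 3.2.
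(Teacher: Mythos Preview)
Your argument is correct and takes a genuinely different route from the paper's.

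The paper proves Proposition 3.4 by first \emph{constructing an invariant measure} $\mu$ for the complex-valued dynamics (\ref{C3}) via the Krylov--Bogoliubov procedure: it forms the time averages (\ref{CD3}), extracts a subsequential limit (\ref{CE3}), and then spends most of the proof establishing tightness of these averages, using the maximal ergodic theorem applied to the stationary auxiliary process $Y(\cdot)$ of (\ref{CI3}) together with the estimate (\ref{CH3}). Only after $\mu$ is shown to be an invariant probability measure does the paper transfer the real-variable inequality (\ref{AE3}) to the complex case as (\ref{CT3}), which gives the convergence and hence the holomorphicity.

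Your approach bypasses the invariant measure entirely. By the Markov property you write $F_T-F_{T'}$ as the expectation of $\om^{\xi}_{\ve,m,h,L}(\cdot,T')-\om^0_{\ve,m,h,L}(\cdot,T')$ with $\xi=\om^0_{\ve,m,h,L}(\cdot,T-T')$, and then apply the complex analogue of the contraction estimate (\ref{AM3}) directly to obtain an explicit exponential Cauchy bound. The only input you need beyond Lemma 3.1 and Proposition 3.3 is a uniform-in-time second moment for $\om^0_{\ve,m,h,L}(\cdot,s)$, which you correctly obtain from (\ref{CA3}) and (\ref{BZ3}) summed over a basis (this bound depends on $dL^d$, exactly as (\ref{AE3}) and (\ref{CT3}) do, but $L$ is fixed here). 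One small point worth making explicit: to interpolate between initial data $0$ and $\xi=\om^0(\cdot,T-T')$ you need Lemma 3.1 to apply with $\|\Im(\alpha\xi)\|_{2,L}<\del(\eta)$ rather than the stated hypothesis $\|\Im\xi\|_{2,L}<(\la-\eta)\del(\eta)$; the proof of Lemma 3.1 (specifically (\ref{BY3})) actually gives this stronger version. Your method is more elementary, yields an explicit rate, and avoids the ergodic-theoretic machinery; the paper's approach has the side benefit of producing the invariant measure itself, which could be useful for other purposes but is not needed for Proposition 3.4.
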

\begin{proof}
We choose $\eta,\del$ as in the proof of Lemma 3.1 and show there  is an invariant measure for the solution to (\ref{C3}) when $\|\Im h(\cdot)\|_{2,L}\le (\la-\eta)\del$.  The measure may be constructed in the standard way (see Theorem 4.6.1 of \cite{bs}). Thus for $F: \ell_2(Q_L,\mathbb{C}^d)\ra\mathbb{R}$ a continuous function of compact support we consider the sequence $a_n(F), \ n=1,2,\dots,$ defined by
\be \label{CD3} 
 a_N(F) \ = \ \frac{1}{N}\int_0^N dt \ E\left[     F\left(      \om^0_{\ve,m,h,L}(\cdot,t)          \right)                     \right] \ , \quad N=1,2,\dots
\ee
By using the fact that the Banach space $C_0\left(   \ell_2(Q_L,\mathbb{C}^d)\right)$  of continuous functions on $\ell_2(Q_L,\mathbb{C}^d)$ which decay to $0$ at $\infty$ is separable, and the Riesz representation theorem, we see  there exists a positive Borel measure  $\mu$  on $\ell_2(Q_L,\mathbb{C}^d)$  and  a sequence $N_k, \ k=1,2,\dots,$ such that
\be \label{CE3}
\lim_{k\ra\infty}  a_{N_k}(F) \ = \ \int_{\ell_2(Q_L,\mathbb{C}^d) } F(\cdot) \ d\mu(\cdot) \ , \quad F\in C_c\left(   \ell_2(Q_L,\mathbb{C}^d)         \right) \ .
\ee

To show that $\mu$ is a probability measure we  observe from (\ref{L3}) that
\begin{multline} \label{CF3}
\| \om^0_{\ve,m,h,L}(\cdot,t)   \|_{2,L} \ \le \ \|h(\cdot)\|_{2,L}+\frac{(1-\la+\eta)}{2}\int_0^t e^{-(t-s)/2} \| \om^0_{\ve,m,h,L}(\cdot,s)   \|_{2,L}  \ ds \\
+\sqrt{\ve}\|X(t)\|_{2,L},    \quad {\rm where \ } X(t)=  \int_0^te^{-(t-s)/2}  \tilde{\om}_L(\cdot,s) \ ds   \ .
\end{multline}
Integrating (\ref{CF3}) over the interval $0<t<T$, we obtain the inequality
\begin{multline} \label{CG3}
\frac{1}{T}\int_0^Tdt \ \|\om^0_{\ve,m,h,L}(\cdot,t)   \|_{2,L} \ \le \ \|h(\cdot)\|_{2,L}+\frac{\sqrt{\ve}}{T}\int_0^T dt \ \|X(t)\|_{2,L} \\
+\frac{(1-\la+\eta)}{T}\int_0^T ds \  \| \om^0_{\ve,m,h,L}(\cdot,s)   \|_{2,L}  \  ,
\end{multline}
whence we conclude that
\be \label{CH3}
\frac{1}{T}\int_0^Tdt \ \|\om^0_{\ve,m,h,L}(\cdot,t)   \|_{2,L} \ \le \ \frac{1}{\la-\eta}\left[        \|h(\cdot)\|_{2,L}+\frac{\sqrt{\ve}}{T}\int_0^T dt \ \|X(t)\|_{2,L}                          \right] \ , \quad T>0 \ .
\ee
To estimate the RHS of (\ref{CH3}) we introduce the stationary process $Y(t), \ t\in\mathbb{R},$ defined by
\be \label{CI3}
Y(t) \ = \  \int_{-\infty}^te^{-(t-s)/2}  \tilde{\om}_L(\cdot,s) \ ds   \ , \quad t\in\mathbb{R} \ ,
\ee
where $ \tilde{\om}_L(\cdot,s), \ s<0$, is defined as in (\ref{C3}) with the white noise process $W(\cdot,t), \ t>0,$ being extended to $t\in\mathbb{R}$.   We then have that
\be \label{CJ3}
E\left[         \|X(t)-Y(t)\|^2_{2,L}          \right] \ \le \  L^d e^{-t} \ , \quad t>0 \ .
\ee
It follows from (\ref{CJ3}) that
\be \label{CK3}
E\left[ \left(\frac{1}{T}\int_0^T dt \ \|X(t)-Y(t)\|_{2,L}   \right)^2 \ \right] \ \le \     \frac{4L^d}{T^2} \ .
\ee

We may use (\ref{CK3}) and the maximal ergodic theorem \cite{parry} to estimate the probability that the LHS of (\ref{CH3}) is large. Since the process $t\ra Y(t)$ is stationary the maximal ergodic theorem implies that 
\be \label{CL3}
P\left(      \sup_{N\ge 1} \frac{1}{N}\int_0^N \|Y(t)\|_{2,L}  \ dt > \al                       \right) \ \le \ \frac{1}{\al}E\left[       \int_0^1 \|Y(t)\|_{2,L}  \ dt                  \right] \ , \quad \al>0 \ .
\ee
We can easily estimate the RHS of (\ref{CL3}) by using the Schwarz inequality, whence we have that
\be \label{CM3}
E\left[       \int_0^1 \|Y(t)\|_{2,L}  \ dt                  \right]  \ \le \  \left(\int_0^1 E[\|Y(t)\|_{2,L}^2] \ dt\right)^{1/2} \ \le \  L^{d/2} \ .
\ee
From (\ref{CK3}) and the Chebyshev inequality we have that
\be \label{CN3}
P\left(      \sup_{N\ge 1} \frac{1}{N}\int_0^N \|X(t)-Y(t)\|_{2,L}  \ dt > \al                       \right) \ \le \  \frac{2\pi^2 L^d}{3\al^2} \ .
\ee
We conclude from (\ref{CH3}), (\ref{CL3})-(\ref{CN3}) that
\be \label{CO3}
P\left(      \sup_{N\ge 1} \frac{1}{N}\int_0^N \|\om^0_{\ve,m,h,L}(\cdot,t)   \|_{2,L}   \ dt > \frac{\|h\|_{2,L}+2\sqrt{\ve}\al}{\la-\eta}                       \right) \ \le \  \frac{L^{d/2}}{\al}+ \frac{2\pi^2 L^d}{3\al^2}  \ .
\ee
It follows from (\ref{CO3}) that for any $\ga>1$,
\be \label{CP3}
\sup_{N\ge 1}\frac{1}{N}m\left\{ t\in[0,N]: \      \|\om^0_{\ve,m,h,L}(\cdot,t)   \|_{2,L}   > \frac{\ga[\|h\|_{2,L}+2\sqrt{\ve}\al]}{\la-\eta}                           \right\} \ \ge \frac{1}{\ga} 
\ee
with probability less than the RHS of (\ref{CO3}). We conclude from (\ref{CE3}), (\ref{CP3})  that $\mu$ is a probability measure. 

It is easy to see from the definition (\ref{CE3}) that $\mu$ is an invariant measure for the stochastic dynamics (\ref{C3}). Let $\xi\in \ell_2(Q_L,\mathbb{C}^d)$ be the random variable with distribution given by  $\mu$. Then we need to show that
\be \label{CQ3}
E[F(\xi)] \ = \ E\left[     F\left(      \om^\xi_{\ve,m,h,L}(\cdot,\tau)          \right)                     \right] , \  \tau>0,  \quad F\in C_c\left(   \ell_2(Q_L,\mathbb{C}^d)         \right) \ ,
\ee
assuming $\xi$ and the dynamics $\tau\ra  \om^\xi_{\ve,m,h,L}(\cdot,\tau), \ \tau>0, $  are independent.  This may be proved by conditioning on the Brownian motion (BM) which drives the diffusion in (\ref{CQ3}), which we may  take to be independent of the BM driving the diffusion in (\ref{CD3}). From (\ref{CE3}) we have that
\begin{multline} \label{CR3}
\lim_{k\ra\infty}\frac{1}{N_k}\int_0^{N_k} dt \ E\left[     F\left(      \om^0_{\ve,m,h,L}(\cdot,t+\tau)          \right)          \bigg| \ \tilde{\om}(\cdot,s),  \ 0<s<\tau             \right] \\
= \  E\left[     F\left(      \om^\xi_{\ve,m,h,L}(\cdot,\tau)          \right)    \ \bigg| \ \tilde{\om}(\cdot,s),  \ 0<s<\tau                  \right]  \ , \quad F\in C_c\left(   \ell_2(Q_L,\mathbb{C}^d)         \right) \ .
\end{multline}
 On taking the expectation of the LHS of (\ref{CR3}) with respect to  $\tilde{\om}(\cdot,s),  \ 0<s<\tau$,  we see from  the dominated convergence theorem and (\ref{CE3}) that the limit as $k\ra\infty$ is the LHS of (\ref{CQ3}).  Since the expectation of the RHS of (\ref{CR3}) is equal to the RHS of (\ref{CQ3}), the identity (\ref{CQ3}) follows.
 
The proof of (\ref{CC3}) now follows along the same lines as the proof of Proposition 3.2. We first observe from the definition (\ref{CE3}) and Proposition 3.3 that the invariant variable $\xi$ with distribution measure $\mu$  has finite second moment and satisfies the inequalities
\be \label{CS3}
\| \ \langle \  \xi(\cdot) \ \rangle_{\ve,m,h,L}\|_{2,L} \ \le \ \frac{\|h\|_{2,L}}{\la-\eta} \ , \quad \ve^{-1}{\rm var}_{\ve,h,L}\{[a(\cdot),\xi(\cdot)]_L\} \ \le \  \frac{\|a\|_{2,L}^2}{\la-\eta} \ ,
\ee
for all $a(\cdot)\in\ell_2(Q_L,\mathbb{C}^d)$.  Proceeding as in the proof of Proposition 3.2 we then obtain an inequality similar to (\ref{AE3}), 
\be \label{CT3}
\left\|\left\langle \  \om^0_{\ve,m,h,L}(\cdot,T) \ \right\rangle-\left\langle \  \xi(\cdot) \ \right\rangle_{\ve,m,h,L} \ \right\|_{2,L} \ \le \ \frac{e^{-(\la-\eta) T/2}}{\la-\eta}\sqrt{ \|h\|^2_{2,L}+\ve(\la-\eta) dL^d} \ , \quad T\ge 0  \ .
\ee
It is easy to see from the construction of the function $h(\cdot)\ra \left\langle \  \om^0_{\ve,m,h,L}(\cdot,T) \ \right\rangle$ in the proof of Lemma 3.1 that it is holomorphic in $h(\cdot)$. We conclude then from (\ref{CT3}) that the function $h(\cdot)\ra \left\langle \  \xi(\cdot) \ \right\rangle_{\ve,m,h,L}=g_{\ve,m,L}(h(\cdot))$ is also holomorphic and (\ref{CC3}) holds. 
\end{proof}
\begin{proof}[Proof of Theorem 1.2] As in (\ref{A2}) we apply the FTC to (\ref{H1}). Using (\ref{N2})  we  obtain the formula
\begin{multline} \label{CU3}
q_{\ve,m,L}(h(\cdot))- q_{\ve,m,L}(0) \ = \ \int_0^1 d\al \ \frac{d}{d\al}q_{\ve,m,L}(\al h(\cdot)) \\
 = \  \int_0^1 d\al  \ D_h q_{\ve,m,L}(\al h(\cdot)) \ = \ \int_0^1 d\al  \ \left\langle \      [h(\cdot),\na\phi(\cdot)]_L      \          \right\rangle _{\ve,m,\al h,L}  \ = \ 
 \int_0^1 d\al  \ [h(\cdot),g_{\ve,m,L}(\al h(\cdot))]_L  \ ,
\end{multline}
where $g_{\ve,m,L}(\cdot)$ is defined in the statement of Proposition 3.4.  We conclude from Proposition 3.4 that  the function  $h(\cdot)\ra q_{\ve,m,L}(h(\cdot))$  defined for real $h\in\ell_2(Q_L,\mathbb{R}^d)$ extends analytically to complex $h\in \ell_2(Q_L,\mathbb{C}^d)$ satisfying $\|\Im h(\cdot)\|_{2,L}<(\la-\eta)\del(\eta)$.  

To prove the bounds (\ref{W1}), (\ref{X1}) we write
\be \label{CV3}
\Re[q_{\ve,m,L}(h(\cdot))- q_{\ve,m,L}(0)] \ = \ \Re[q_{\ve,m,L}(h(\cdot))- q_{\ve,m,L}(\Re h(\cdot))] +[q_{\ve,m,L}(\Re h(\cdot))- q_{\ve,m,L}(0)]  \ .
\ee
We have already established the bounds (\ref{L1}) for the second term on the RHS of (\ref{CV3}), so we focus on the first term. In view of the analyticity of the function $h(\cdot)\ra q_{\ve,m,L}(h(\cdot))$ 
we have that
\begin{multline} \label{CW3}
q_{\ve,m,L}(h(\cdot))-q_{\ve,m,L}(\Re h(\cdot)) \\ = \ \int_0^1d\al \ (1-\al)\frac{d^2}{d\al^2} q_{\ve,m,L}(\Re h(\cdot)+i\al \Im h(\cdot))  \
+ \frac{d}{d\al} q_{\ve,m,L}(\Re h(\cdot)+i\al \Im h(\cdot)) \Big|_{\al=0} \ , \\
\frac{d}{d\al} q_{\ve,m,L}(\Re h(\cdot)+i\al \Im h(\cdot)) \Big|_{\al=0} \ = \ iD_{\Im h}q_{\ve,m,L}(\Re h(\cdot)) \ , \\
\frac{d^2}{d\al^2} q_{\ve,m,L}(\Re h(\cdot)+i\al \Im h(\cdot)) \ = \ -D^2_{\Im h,  \Im h} q_{\ve,m,L}(\Re h(\cdot)+i\al \Im h(\cdot)) \  .
\end{multline}
It follows from (\ref{CW3}) that
\be \label{CX3}
\Re[q_{\ve,m,L}(h(\cdot))-q_{\ve,m,L}(\Re h(\cdot))] \ = \  -\int_0^1d\al \ (1-\al)\Re [D^2_{\Im h,  \Im h} q_{\ve,m,L}(\Re h(\cdot)+i\al \Im h(\cdot))] \  .
\ee
It is easy to see that for $h\in \ell_2(Q_L,\mathbb{C}^d)$ satisfying $\|\Im h(\cdot)\|_{2,L}<(\la-\eta)\del(\eta)$ the function $[a_1,a_2]\ra -\Re[D^2_{a_1,a_2}q _{\ve,m,L}( h(\cdot))], \ a_1,a_2\in\ell_2(Q_L,\mathbb{R}^d),$ is a quadratic form.  In the case of real $h(\cdot)$ we see from (\ref{P2}) that this quadratic form is positive definite, and from the BL inequality (\ref{G1}) that it is bounded above.  The inequality (\ref{W1}) will follow if we can  extend this upper bound to complex $h(\cdot)$. 

We consider $h(\cdot)\in  \ell_2(Q_L,\mathbb{C}^d)$ satisfying $\|\Im h(\cdot)\|_{2,L}<(\la-\eta)\del(\eta)$, and observe from (\ref{N2}),  Proposition 3.4 that
\be \label{CY3}
D_aq_{\ve,m,L}(h(\cdot)) \ = \  \lim_{T\ra\infty} \langle \ [a,\om^0_{\ve,m,h,L}(\cdot,T)]_L \rangle \ , \quad a\in\ell_2(Q_L,\mathbb{R}^d) \ .
\ee
We show that
\be \label{CZ3}
D^2_{a_1,a_2}q_{\ve,m,L}(h(\cdot)) \ = \  \lim_{T\ra\infty} \langle \ [a_1, D_{a_2}\om^0_{\ve,m,h,L}(\cdot,T)]_L \rangle \ , \quad a_1,a_2\in\ell_2(Q_L,\mathbb{R}^d) \ .
\ee
Differentiating with respect to $a\in \ell_2(Q_L,\mathbb{R}^d)$ the identity (\ref{BX3}) for $\om^0_{\ve,m,h,L}(\cdot,T)$, yields the formula
\begin{multline} \label{DA3}
D_a\om^0_{\ve,m,h,L}(\cdot,\cdot) \ = \ \mathcal{L}_TD_a\om^0_{\ve,m,h,L}(\cdot,\cdot)-k(\cdot,\cdot), \\
 k(\cdot,t)=\left\{1-e^{-t/2}\right\}\na[-\De+m^2]^{-1}\na^*a(\cdot) \ , \quad 0\le t\le T \  ,
\end{multline}
where $\mathcal{L}_T$ is defined by (\ref{O3}) with $\xi(\cdot)\equiv0$.  We have observed in the proof of Proposition 3.3 that $\|\mathcal{L}_T\|_{\mathcal{E}_T}\le (1-\la+\eta)<1$. Now (\ref{CZ3}) follows from (\ref{CY3}), (\ref{DA3}) by applying FTC to (\ref{BX3}). Since  $\|\mathcal{L}_T\|_{\mathcal{E}_T}<1$ the series expansion derived from (\ref{DA3}), 
\be \label{DB3}
D_a\om^0_{\ve,m,h,L}(\cdot,\cdot) \ = \  -\sum_{r=0}^\infty  \mathcal{L}_T^rk(\cdot,\cdot) \ ,
\ee
converges in the Banach space $\mathcal{E}_T$.  We also see that 
\be \label{DC3}
\left|  [a,      \mathcal{L}_T^rk(\cdot,T)]_L           \right| \ \le \ (1-\la+\eta)^r \| [-\De+m^2]^{-1/2}\na^*a\|^2_{2,L} \ ,  \quad r=0,1,\dots ,
\ee
whence we conclude that
\be \label{DD3}
\left|  \ [a, D_a\om^0_{\ve,m,h,L}(\cdot,T) ]_L \ \right| \ \le \ \frac{1}{\la-\eta}  \| [-\De+m^2]^{-1/2}\na^*a\|^2_{2,L} \ .
\ee
The inequality (\ref{W1})  now follows from (\ref{CV3}), (\ref{L1}),  (\ref{CX3}), (\ref{CZ3}), (\ref{DD3}). 

The proof of the lower bound (\ref{X1}) proceeds similarly, using (\ref{CX3}), (\ref{CZ3}).  Thus we need to establish a lower bound on $-\Re [D^2_{\Im h,  \Im h} q_{\ve,m,L}(\Re h(\cdot)+i\al \Im h(\cdot))]$
for $0<\al<1$,  which extends the lower bound (\ref{BT3}), (\ref{BU3}) applied to the SDE (\ref{B3}).  To do this we use the representation (\ref{BS3}), which applies for pure imaginary $\al$ with $|\al|<1$, and then analytically continue it to real $\al\in(0,1)$.  We first write (\ref{BS3}) in the case of odd $r=2m+1$ as
\begin{multline} \label{DE3}
[b,\mathcal{L}_{\ve,k,\infty,r}b]_n \ = \ 
    \left\|                    K^{1/2} A \left\{    \frac{1}{2} \int_0^\infty    dt \  \exp\left[  -t/2 +\mathcal{A}_kt\right]          KA  \right\}^m     b    \ \right\|_{\ve,k}^2    \ = \  \\
 \Bigg\langle \   \Bigg\|  \frac{1}{2^{m}}\int_{0<s_1<s_2<\cdots<s_{m}<\infty} ds_1\cdots ds_{m} \ e^{-s_m/2} \ \times \\
   K^{1/2}(\phi_{\ve,k}(T))E\left[AK(\phi_{\ve,k}(T+s_1))\cdots AK(\phi_{\ve,k}(T+s_{m}))Ab \ \big| \  \phi_{\ve,k}(T) \ \right] \ \Bigg\|_n^2 \ \Bigg\rangle_{\ve,k}  \ ,
\end{multline}
where $T\in\mathbb{R}$ is arbitrary. We may rewrite the final expression in (\ref{DE3}) as
\begin{multline} \label{DF3}
\left\langle \ \left[  F(\phi_{\ve,k}(T)), \ K(\phi_{\ve,k}(T)) F(\phi_{\ve,k}(T))\right]_n \ \right\rangle_{\ve,k} \ , \quad F_{\ve,k}(\phi) \ = \\
 \frac{1}{2^{m}}\int_{0<s_1<s_2<\cdots<s_{m}<\infty} ds_1\cdots ds_{m} \ e^{-s_m/2} \ E\left[AK(\phi_{\ve,k}(s_1))\cdots AK(\phi_{\ve,k}(s_{m}))Ab \ \big| \  \phi_{\ve,k}(0)=\phi \ \right]  \ .
\end{multline}
The analytic continuation of (\ref{DE3}) to complex $k$ may then be carried out by using the identity
\begin{multline} \label{DG3}
\lim_{T\ra\infty}\left\langle \ \left[  F_{\ve,k}(\phi_{\ve,k}(T)), \ K(\phi_{\ve,k}(T)) F_{\ve,k}(\phi_{\ve,k}(T))\right]_n \ \right\rangle_{\ve,k} \\
 = \ \lim_{T\ra\infty}E\left\{ \  \left[  F_{\ve,k}(\phi_{\ve,k}(T)), \ K(\phi_{\ve,k}(T)) F_{\ve,k}(\phi_{\ve,k}(T))\right]_n  \ \Big| \ \phi_{\ve,k}(0)=0 \ \right\} \  .
\end{multline}

In the case of the SDE (\ref{B3}) we have that $n=L^d$ and
\be \label{DH3}
A=[-\De+m^2]^{-1}, \quad k=\na^*[\Re h(\cdot)+i\al\Im h(\cdot)], \quad b=\na^* \Im h(\cdot) \ , \quad K(\phi)=\na^*\mathbf{b}(\na\phi(\cdot))\na \ .
\ee
Furthermore we have that
\be \label{DI3}
\|\na F_{\ve,k}(\cdot)\|_{2,L} \ \le \ (1-\la+\eta)^m\left\|  (-\De+m^2)^{-1/2}\na^*\Im h(\cdot)           \right\|_{2,L} \quad {\rm  for \ } |\al|<1 \  .  
\ee
We define a function $\phi\ra G_{\ve,k}(\phi)$ similar to $F_{\ve,k}(\cdot)$ by
\begin{multline} \label{DJ3}
G_{\ve,k}(\phi) \ = \\
 \frac{1}{2^{m}}\int_{0<s_1<s_2<\cdots<s_{m}<\infty} ds_1\cdots ds_{m} \ e^{-s_m/2} \ E\left[AK(\Re \phi_{\ve,k}(s_1))\cdots AK(\Re\phi_{\ve,k}(s_{m}))Ab \ \big| \  \phi_{\ve,k}(0)=\phi \ \right]  \ .
\end{multline}
It follows from (\ref{O1}) that
\be \label{DK3}
\|\na F_{\ve,k}(\cdot)-\na G_{\ve,k}(\cdot)\|_{2,L} \ \le \ \eta m(1-\la+\eta)^{m-1}\left\|  (-\De+m^2)^{-1/2}\na^*\Im h(\cdot)           \right\|_{2,L} \quad {\rm  for \ } |\al|<1 \  .
\ee
Writing
\begin{multline} \label{DL3}
  \left[  F_{\ve,k}(\phi_{\ve,k}(T)), \ K(\phi_{\ve,k}(T)) F_{\ve,k}(\phi_{\ve,k}(T))\right]_n  \ = \  \left[  F_{\ve,k}(\phi_{\ve,k}(T)), \ \{K(\phi_{\ve,k}(T))- K(\Re\phi_{\ve,k}(T))\}F_{\ve,k}(\phi_{\ve,k}(T))\right]_n \\
  +  \left[  F_{\ve,k}(\phi_{\ve,k}(T))-G_{\ve,k}(\phi_{\ve,k}(T)), \ K(\Re\phi_{\ve,k}(T)) F_{\ve,k}(\phi_{\ve,k}(T))\right]_n \\
   +  \left[  F_{\ve,k}(\phi_{\ve,k}(T))-G_{\ve,k}(\phi_{\ve,k}(T)), \ K(\Re\phi_{\ve,k}(T)) G_{\ve,k}(\phi_{\ve,k}(T))\right]_n 
   + \left[  G_{\ve,k}(\phi_{\ve,k}(T)), \ K(\Re\phi_{\ve,k}(T)) G_{\ve,k}(\phi_{\ve,k}(T))\right]_n  \ ,
\end{multline}
and noting that since $G_{\ve,k}(\cdot)$ is real valued the final term in (\ref{DL3}) is non-negative, we conclude from  (\ref{DI3}), (\ref{DK3}), (\ref{DL3}) that
\be \label{DM3}
\Re  \left[  F_{\ve,k}(\phi_{\ve,k}(T)), \ K(\phi_{\ve,k}(T)) F_{\ve,k}(\phi_{\ve,k}(T))\right]_n   \ \ge \ 
-\eta(2m+1)(1-\la+\eta)^{2m} \left\|  (-\De+m^2)^{-1/2}\na^*\Im h(\cdot)           \right\|_{2,L}^2 \ .
\ee

A similar argument can be made for (\ref{BS3}) in the case of even $r=2m$. Observing that
\be \label{DN3}
(-2\mathcal{A}_k+I_n)^{-1/2} \ = \ \frac{1}{\sqrt{2\pi}}\int_0^\infty dt \  t^{-1/2} \exp[-t/2+\mathcal{A}_kt] \ ,
\ee
Then we have that
\begin{multline} \label{DO3}
[b,\mathcal{L}_{\ve,k,\infty,r}b]_n \ = \ \left\langle \ \left[  F_{\ve,k}(\phi_{\ve,k}(T)), \ F_{\ve,k}(\phi_{\ve,k}(T))\right]_n  \ \right\rangle_{\ve,k}  \ , \quad F_{\ve,k}(\phi) \ = \\
\frac{1}{2^{m-1}\sqrt{2\pi}}\int_{0<s_1<s_2<\cdots<s_{m}<\infty} ds_1\cdots ds_{m} \ s_1^{-1/2}e^{-s_m/2} \times \\
 E\left[A^{1/2}K(\phi_{\ve,k}(s_1))AK(\phi_{\ve,k}(s_{2}))\cdots AK(\phi_{\ve,k}(s_{m}))Ab \ \big| \  \phi_{\ve,k}(0)=\phi \ \right]  \ .
\end{multline}
Letting $G_{\ve,k}(\cdot)$ be defined as $F_{\ve,k}(\cdot)$ in (\ref{DO3}), but with $\phi_{\ve,k}(\cdot)$ replaced by  $\Re\phi_{\ve,k}(\cdot)$, we see that
\be \label{DP3}
\begin{array}{lcl}
\| F_{\ve,k}(\cdot)\|_{2,L} \ &\le& \ (1-\la+\eta)^m\left\|  (-\De+m^2)^{-1/2}\na^*\Im h(\cdot)           \right\|_{2,L} \ ,  \\
\|F_{\ve,k}(\cdot)- G_{\ve,k}(\cdot)\|_{2,L} \ &\le& \ \eta m(1-\la+\eta)^{m-1}\left\|  (-\De+m^2)^{-1/2}\na^*\Im h(\cdot)           \right\|_{2,L}   .
\end{array}
\ee
Similarly to (\ref{DM3}) we conclude from (\ref{DP3}) that
\be \label{DQ3}
\Re  \left[  F_{\ve,k}(\phi_{\ve,k}(T)), \  F_{\ve,k}(\phi_{\ve,k}(T))\right]_n   \ \ge \ 
-2m\eta(1-\la+\eta)^{2m-1} \left\|  (-\De+m^2)^{-1/2}\na^*\Im h(\cdot)           \right\|_{2,L}^2 \ .
\ee

We have now from (\ref{BI3}), (\ref{BT3}), (\ref{CZ3}), (\ref{DM3}), (\ref{DQ3}) that
\begin{multline} \label{DR3}
\Re [D^2_{\Im h,  \Im h} q_{\ve,m,L}(\Re h(\cdot)+i\al \Im h(\cdot))] \ \ge \\
\left[1-   \eta\sum_{r=1}^\infty r(1-\la+\eta)^{r-1}                \right] \left\|  (-\De+m^2)^{1/2}\na^*\Im h(\cdot)           \right\|_{2,L}^2 \\
= \ \left[1-\frac{\eta}{(\la-\eta)^2}\right] \left\|  (-\De+m^2)^{1/2}\na^*\Im h(\cdot)           \right\|_{2,L}^2   \ .
\end{multline}
The inequality (\ref{X1})  follows from (\ref{CX3}), (\ref{DR3}). 
\end{proof}

\vspace{.1in}
\section{Higher order derivatives and $\ell_p$ theory}
In $\S3$ we have seen how the inequalities (\ref{N2})-(\ref{Q2}) on the derivatives $D_a q_{\ve,m,L}(h(\cdot))$, $D^2_{a_1,a_2} q_{\ve,m,L}(h(\cdot))$ may be extended to complex $h(\cdot)$ for $V(\cdot)$ satisfying (\ref{D1}), (\ref{O1}).  Thus we have from (\ref{CA3}), (\ref{CY3}) that
\be \label{A4}
|D_a q_{\ve,m,L}(h(\cdot))| \ \le \     \frac{1}{\la-\eta}\|(-\De+m^2)^{-1/2}\na^*a\|_{2,L}\|(-\De+m^2)^{-1/2}\na^*h\|_{2,L} \ ,
\ee
for $a,h\in\ell_2(Q_L,\mathbb{C}^d)$  and $h(\cdot)$ satisfying the inequality $\|\Im h(\cdot)\|_{2,L}< (\la-\eta)\del(\eta)$.
Similarly we have from (\ref{CZ3}) and the argument following it that
\be \label{B4}
|D^2_{a_1,a_2} q_{\ve,m,L}(h(\cdot))| \ \le \     \frac{1}{\la-\eta}\|(-\De+m^2)^{-1/2}\na^*a_1\|_{2,L}\|(-\De+m^2)^{-1/2}\na^*a_2\|_{2,L} \ ,
\ee
$a_1,a_2,h\in\ell_2(Q_L,\mathbb{C}^d)$  and $h(\cdot)$ satisfying the inequality $\|\Im h(\cdot)\|_{2,L}< (\la-\eta)\del(\eta)$.
We can use the same methodology to obtain bounds on higher order directional derivatives of the function $h(\cdot)\ra q_{\ve,m,L}(h(\cdot))$.

To obtain a bound on the third directional derivative  of $h(\cdot)\ra q_{\ve,m,L}(h(\cdot))$ when $h(\cdot)$ is real we differentiate (\ref{CZ3}), yielding  the formula
\be \label{C4}
D^3_{a_1,a_2,a_3} q_{\ve,m,L}(h(\cdot)) \ = \ \lim_{T\ra\infty} \left\langle \ [a_1(\cdot), D^2_{a_2,a_3}\om^0_{\ve,m,h,L}(\cdot,T)]_L \ \right\rangle \ .
\ee
We may obtain an  equation for $g(\cdot,t)= D^2_{a_2,a_3}\om^0_{\ve,m,h,L}(\cdot,t)$  by differentiating (\ref{DA3}). This yields the equation 
\be \label{D4}
g(\cdot,t) \ = \  \mathcal{L}_Tg(\cdot,t) -k(\cdot,t) \ , \\
\ee
where $k(\cdot,t)\in\ell_2(Q_L,\mathbb{C}^d)$ is given by the formula
\begin{multline} \label{E4}
[a(\cdot),k(\cdot,t)]_L \ = \ \frac{1}{2} \int_0^te^{-(t-s)/2} \ ds \ \times  \ \sum_{x\in Q_L} \\
 V'''\left(         \om^0_{\ve,m,h,L}(x,s)                 \right)\left[ \na[-\De+m^2]^{-1}\na^*a(x),    D_{a_2}\om^0_{\ve,m,h,L}(x,s),          D_{a_3}\om^0_{\ve,m,h,L}(x,s)       \right] \ , \\
  {\rm for \ all \  }  a(\cdot)\in\ell_2(Q_L,\mathbb{R}^d) \ .
\end{multline}
We see on applying the H\"{o}lder inequality in (\ref{E4}), using (\ref{Y1}), (\ref{Z1}) and the bound on the solution to (\ref{DA3}), that
\begin{multline} \label{F4}
|[a(\cdot),k(\cdot,t)]_L| \ \le  \  M\|a(\cdot)\|_{2,L}  \sup_{s>0}\|D_{a_2}\om^0_{\ve,m,h,L}(\cdot,s)\|_{4,L}\sup_{s>0}\|D_{a_3}\om^0_{\ve,m,h,L}(\cdot,s)\|_{4,L} \\
\le \ M\|a(\cdot)\|_{2,L}  \sup_{s>0}\|D_{a_2}\om^0_{\ve,m,h,L}(\cdot,s)\|_{2,L}\sup_{s>0}\|D_{a_3}\om^0_{\ve,m,h,L}(\cdot,s)\|_{2,L}  \\
\le \la^{-2}  M\|a(\cdot)\|_{2,L} \|a_2(\cdot)\|_{2,L} \|a_3(\cdot)\|_{2,L}  \ , \quad a(\cdot)\in\ell_2(Q_L,\mathbb{R}^d) \ .
\end{multline}
It follows from (\ref{D4}), (\ref{F4}) that
\be  \label{G4}
\begin{array}{lcr}
\|k(\cdot,t)\|_{2,L} \ &\le& \ \la^{-2} M \|a_2(\cdot)\|_{2,L} \|a_3(\cdot)\|_{2,L} \ , \quad t>0, \\
\|g(\cdot,t)\|_{2,L} \ &\le &\ \la^{-3}  M \|a_2(\cdot)\|_{2,L} \|a_3(\cdot)\|_{2,L} \ , \quad t>0 \ .
\end{array}
\ee
We conclude from (\ref{C4}), (\ref{G4}) that
\be \label{H4}
|D^3_{a_1,a_2,a_3} q_{\ve,m,L}(h(\cdot))| \ \le \  \la^{-3}M  \|a_1(\cdot)\|_{2,L}\|a_2(\cdot)\|_{2,L} \|a_3(\cdot)\|_{2,L}  \ .
\ee
The inequality (\ref{AA1}) with $C(\la)=1/6\la^3$ follows from (\ref{H4}) and the identity
\begin{multline} \label{I4}
q_{\ve,m,L}(h(\cdot))-q_{\ve,m,L}(0)+\frac{1}{2\ve}\left\langle \ [h(\cdot),\na\phi(\cdot)]_L^2 \ \right\rangle_{\ve,m,0,L}  \\ 
= \ q_{\ve,m,L}(h(\cdot))-q_{\ve,m,L}(0)-\frac{d}{d\al} q_{\ve,m,L}(\al h(\cdot))\Big|_{\al=0} -\frac{1}{2}\frac{d^2}{d\al^2} q_{\ve,m,L}(\al h(\cdot))\Big|_{\al=0}  \\
 = \ \frac{1}{2}\int_0^1 d\al \ (1-\al)^2 \frac{d^3}{d\al^3}  q_{\ve,m,L}(\al h(\cdot)) \ = \  \frac{1}{2}\int_0^1 d\al \ (1-\al)^2 D^3_{h,h,h} q_{\ve,m,L}(\al h(\cdot))\ .
\end{multline}

The bounds we have obtained so far on the first three directional derivatives of the function $h(\cdot)\ra q_{\ve,m,L}(h(\cdot))$ are in terms of  Euclidean  $\ell_2$ norms. We may generalize these inequalities to inequalities involving $\ell_p$ norms with $p>1$ by applying the CZ theorem \cite{stein} to functions on $Q_L$, as we did in the proof of Theorem 1.1. First we note the following extension of Lemma 3.1 to $\ell_p$:
\begin{lem} Assume the function $V(\cdot)$ is holomorphic in a strip parallel to the real axis and satisfies the inequalities (\ref{D1}), (\ref{O1}).  Let $p>1$ and $\kappa_p$ be the constant in (\ref{R2}). Then
for all   $\eta>0$ satisfying $\la-\eta>1-1/\kappa_p$ and  $h,\xi\in \ell_2(Q_L,\mathbb{C}^d)$  with $\kappa_p\|\Im h\|_{p,L}, \ \|\Im \xi\|_{p,L}<[\kappa_p(\la-\eta)-(\kappa_p-1)]\del(\eta)$, the SDE (\ref{C3}) with initial condition $\xi$  has a  unique strong solution $t\ra \om^\xi_{\ve,m,h,L}(\cdot,t), \ t>0,$  globally in time. Furthermore, the inequality
$\sup_{t>0}\|\Im \om^\xi_{\ve,m,h,L}(\cdot,t)\|_{p,L}< \del(\eta)$ holds.
\end{lem}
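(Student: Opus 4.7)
The plan is to adapt the fixed-point argument from the proof of Lemma 3.1 so that the $\ell_2$ a priori bound on $\Im f$ is replaced by an $\ell_p$ bound, with the Calder\'on-Zygmund constant $\kappa_p$ from Theorem 2.1 absorbed into the smallness budget. Define the invariance class $\mathcal{S}_{\xi,\delta,T}$ of continuous $f:Q_L\times[0,T]\to\mathbb{C}^d$ with $f(\cdot,0)=\xi$, $\sup_{0\le t\le T}\|f(\cdot,t)\|_{2,L}<\infty$, and the modified constraint $\sup_{0\le t\le T}\|\Im f(\cdot,t)\|_{p,L}\le\delta$, and let $\mathcal{K}_T$ be the nonlinear operator in (\ref{BW3}). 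As in Lemma 3.1, once I show that $\mathcal{K}_T$ maps $\mathcal{S}_{\xi,\delta,T}$ into itself for a suitable $\delta<\delta(\eta)$, the existence of a unique strong solution to (\ref{C3}) globally in time together with the uniform bound $\sup_t\|\Im\om^\xi_{\ve,m,h,L}(\cdot,t)\|_{p,L}<\delta(\eta)$ follows from the contraction scheme of \cite{ks}, Chapter 5.

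The core estimate is the $\ell_p$ analog of (\ref{BY3}). Since $Q_L$ carries counting measure, the pointwise inequality $|\Im f(x,s)|\le\|\Im f(\cdot,s)\|_{p,L}$ holds for every $p\ge 1$, so as long as $f\in\mathcal{S}_{\xi,\delta,T}$ with $\delta<\delta(\eta)$, the strip hypothesis (\ref{O1}) is available pointwise. Using (\ref{H2}) together with the splitting $\Im[V'(\om)-\om]=[V''(\Re\om)-I_d]\Im\om+\bigl\{\Im V'(\om)-V''(\Re\om)\Im\om\bigr\}$ and the bound $\|V''(\Re\om)-I_d\|\le 1-\la$ from (\ref{D1}), I obtain the pointwise inequality $|\Im[V'(f)-f](x,s)|\le(1-\la+\eta)|\Im f(x,s)|$, hence
\[\|\Im[V'(f)-f](\cdot,s)\|_{p,L}\le(1-\la+\eta)\|\Im f(\cdot,s)\|_{p,L}.\]
Taking $\Im$ of (\ref{BW3}) and applying the $\ell_p$ triangle inequality together with the CZ estimate $\|\na(-\De+m^2)^{-1}\na^*\|_{\ell_p\to\ell_p}\le\kappa_p$ from Theorem 2.1, I then deduce
\[\|\Im\mathcal{K}_Tf(\cdot,t)\|_{p,L}\le e^{-t/2}\|\Im\xi\|_{p,L}+(1-e^{-t/2})\kappa_p\|\Im h\|_{p,L}+\kappa_p(1-\la+\eta)(1-e^{-t/2})\delta.\]
Since the right-hand side is a convex combination in $e^{-t/2}$, requiring it to be $\le\delta$ reduces to the two endpoint conditions $\|\Im\xi\|_{p,L}\le\delta$ and $\kappa_p\|\Im h\|_{p,L}+\kappa_p(1-\la+\eta)\delta\le\delta$, the second being $\kappa_p\|\Im h\|_{p,L}\le[\kappa_p(\la-\eta)-(\kappa_p-1)]\delta$.

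Under the hypothesis $\la-\eta>1-1/\kappa_p$ the bracket $\kappa_p(\la-\eta)-(\kappa_p-1)$ is strictly positive and less than one, and the stated $\ell_p$ smallness of $\Im\xi$ and $\Im h$ guarantees that any
\[\delta\in\Bigl(\max\bigl(\|\Im\xi\|_{p,L},\,\kappa_p\|\Im h\|_{p,L}/[\kappa_p(\la-\eta)-(\kappa_p-1)]\bigr),\,\delta(\eta)\Bigr)\]
satisfies both conditions. With such $\delta$ fixed, $\mathcal{K}_T$ stabilizes $\mathcal{S}_{\xi,\delta,T}$; the remaining contraction estimate and iteration on successive time intervals proceed exactly as in Lemma 3.1, since the stochastic term $\sqrt{\ve}k(\cdot,\cdot)$ in (\ref{BX3}) is real-valued and plays no role in the imaginary-part bookkeeping.

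The main subtlety is the doubly-dressed appearance of $\kappa_p$: it inflates both the forcing term $\Im h$ and the nonlinear self-interaction, so the effective Lipschitz constant for the imaginary part becomes $\kappa_p(1-\la+\eta)$, which must be strictly less than one for the closed loop to stabilize. This is exactly what produces the threshold $\la-\eta>1-1/\kappa_p$ and the constricted source region $[\kappa_p(\la-\eta)-(\kappa_p-1)]\del(\eta)$ in the statement; since $\lim_{p\to 2}\kappa_p=1$, the Lemma 3.1 hypotheses are recovered continuously.
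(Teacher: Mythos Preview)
Your proposal is correct and follows exactly the same approach as the paper: the paper's proof consists of a single sentence pointing to the $\ell_p$ analog of (\ref{BY3}), namely the inequality (\ref{J4}), which is precisely your core estimate with the integral $\tfrac{1}{2}\int_0^t e^{-(t-s)/2}\|\Im f(\cdot,s)\|_{p,L}\,ds$ left unevaluated rather than bounded by $(1-e^{-t/2})\delta$. Your write-up simply fills in the details the paper omits---the pointwise domination $|\Im f(x,s)|\le\|\Im f(\cdot,s)\|_{p,L}$ justifying the use of (\ref{O1}), the explicit invocation of the CZ bound $\kappa_p$, and the convex-combination argument extracting the two endpoint conditions on $\delta$.
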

\begin{proof}
The main point is that the inequality (\ref{BY3}) may be generalized to
\begin{multline} \label{J4}
\|\Im \mathcal{K}_Tf(\cdot,t)\|_{p,L} \ \le \ e^{-t/2}\|\Im\xi\|_{p,L} +\left\{1-e^{-t/2}\right\}\kappa_p\|\Im h\|_{p,L} \\
+ \ \frac{(1-\la+\eta)\kappa_p}{2} \int_0^te^{-(t-s)/2}\|\Im f(\cdot,s\|_{p,L} \ ds  \ , \quad 0<t<T \ ,
\end{multline}
\end{proof}
Next we obtain from Lemma 4.1 an extension of Proposition 3.4 to $\ell_p$:
\begin{proposition}
Assume the function $V(\cdot)$ satisfies the conditions of Lemma 4.1 and $p>1$. Let $g_{\ve,m,L}:\ell_2(Q_L,\mathbb{R}^d)\ra \ell_2(Q_L,\mathbb{R}^d)$ be defined by $g_{\ve,m,L}(h(\cdot))= \left\langle \  \na\phi(\cdot) \ \right\rangle_{\ve,m,h,L}$. Then $g_{\ve,m,L}$ extends analytically  to the strip $\{h\in \ell_p(Q_L,\mathbb{C}^d): \ \|\Im h(\cdot)\|_{p,L}<[(\la-\eta)-(1-1/\kappa_p)]\del(\eta)\}$ with the holomorphic function  
$g_{\ve,m,L}(\cdot)$ taking values in  $\ell_2(Q_L,\mathbb{C}^d)$.  Furthermore, one has that
\be \label{K4}
\lim_{T\ra\infty}\left\|\left\langle \  \om^0_{\ve,m,h,L}(\cdot,T) \ \right\rangle-g_{\ve,m,L}(h(\cdot)) \ \right\|_{2,L} \ =  0  , \ 
\ee
with uniform convergence in any region $\{h\in \ell_p(Q_L,\mathbb{C}^d): \ \|\Re h(\cdot)\|_{p,L}\le M, \ \|\Im h(\cdot)\|_{p,L}\le [(\la-\eta)-(1-1/\kappa_p)]\del\}, \ M>0, \ 0<\del<\del(\eta)$.
\end{proposition}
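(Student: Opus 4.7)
The plan is to follow the architecture of the proof of Proposition 3.4 line by line, replacing $\ell_2$ estimates with $\ell_p$ estimates and using Lemma 4.1 in place of Lemma 3.1. Since Lemma 4.1 already guarantees, under the hypothesis $\|\Im h\|_{p,L},\|\Im\xi\|_{p,L}<[(\la-\eta)-(1-1/\kappa_p)]\del(\eta)$, the existence and uniqueness of a global strong solution $t\mapsto\om^\xi_{\ve,m,h,L}(\cdot,t)$ with $\sup_{t>0}\|\Im\om^\xi_{\ve,m,h,L}(\cdot,t)\|_{p,L}<\del(\eta)$, the first half of the work is already done for us. The extra multiplicative factor $\kappa_p$ that appears in (\ref{J4}) is exactly what forces the strip in the statement of Proposition 4.1 to shrink from $(\la-\eta)\del(\eta)$ (the $p=2$ case) to $[(\la-\eta)-(1-1/\kappa_p)]\del(\eta)$; as $p\to 2$ we have $\kappa_p\to 1$ and this recovers Proposition 3.4.

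First I would construct an invariant probability measure $\mu$ on $\ell_2(Q_L,\mathbb{C}^d)$ for the dynamics (\ref{C3}) by the same Cesàro-averaging scheme as in (\ref{CD3})--(\ref{CE3}), using separability of $C_0(\ell_2(Q_L,\mathbb{C}^d))$ and the Riesz representation theorem to extract a weak limit $\mu$. To show $\mu$ is a probability measure (i.e.\ no mass escapes to infinity), I would repeat the Grönwall/maximal-ergodic argument of (\ref{CF3})--(\ref{CP3}) but applied to $\|\om^0_{\ve,m,h,L}(\cdot,t)\|_{p,L}$. The point is that the contractive estimate (\ref{CF3}) has a clean $\ell_p$ analogue obtained from Lemma 4.1 and Theorem 2.1: one gets
\[
\|\om^0_{\ve,m,h,L}(\cdot,t)\|_{p,L}\le \kappa_p\|h\|_{p,L}+\tfrac{(1-\la+\eta)\kappa_p}{2}\int_0^t e^{-(t-s)/2}\|\om^0_{\ve,m,h,L}(\cdot,s)\|_{p,L}\,ds+\sqrt{\ve}\,\|X(t)\|_{p,L}.
\]
Combined with Gaussian tail estimates on $X(t)$ and $Y(t)$ in $\ell_p$ (analogues of (\ref{CJ3})--(\ref{CN3})), the maximal ergodic theorem yields tightness of the Cesàro averages, hence $\mu(\ell_2(Q_L,\mathbb{C}^d))=1$.

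Next I would verify that $\mu$ is genuinely invariant by conditioning on the driving Brownian motion exactly as in (\ref{CQ3})--(\ref{CR3}); nothing there is $p$-specific, so that step goes through verbatim. Let $\xi\sim\mu$; by (\ref{BZ3}) of Proposition 3.3 together with the $\ell_p$-variant bound produced by Lemma 4.1, $\xi$ has finite second moments and one obtains $\ell_p$ analogues of (\ref{CS3}). Then repeating the derivative-in-initial-data argument (\ref{AF3})--(\ref{AM3}) with the operator $\mathcal{L}_T$ viewed on the complex Banach space of $f:Q_L\times[0,T]\to\mathbb{C}^d$ (with operator norm $\le 1-\la+\eta$ by Lemma 4.1), and combining with the variance identity (\ref{AQ3}), one deduces the quantitative convergence
\[
\bigl\|\langle\om^0_{\ve,m,h,L}(\cdot,T)\rangle-\langle\xi\rangle_{\ve,m,h,L}\bigr\|_{2,L}\le\frac{e^{-(\la-\eta)T/2}}{\la-\eta}\sqrt{\|h\|_{2,L}^2+\ve(\la-\eta)dL^d},
\]
the direct analogue of (\ref{CT3}). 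Setting $g_{\ve,m,L}(h)=\langle\xi\rangle_{\ve,m,h,L}$ gives (\ref{K4}) with the required uniform convergence. Holomorphy of $h\mapsto g_{\ve,m,L}(h)$ on the claimed strip follows, as in Proposition 3.4, from the fact that the fixed-point construction of Lemma 4.1 produces $\om^0_{\ve,m,h,L}(\cdot,T)$ as a holomorphic function of $h$, and holomorphy is preserved by the uniform limit.

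The main obstacle I anticipate is verifying the $\ell_p$ maximal-type estimates of (\ref{CK3})--(\ref{CN3}) for the Ornstein--Uhlenbeck-type process $X(t)$ and its stationary companion $Y(t)$: the process involves $\tilde\om_L(\cdot,t)=\na[-\De+m^2]^{-1/2}W(\cdot,t)$, and in $\ell_p$ one needs dimensional constants that may depend on $L$ through $L^{d/p}$ in place of $L^{d/2}$. This dependence is harmless because $L$ is fixed throughout the construction of $\mu$, but one must track it carefully and confirm that the Chebyshev/maximal step still gives tightness rather than escape of mass. Once that bookkeeping is done, every other step is a verbatim transcription of the $\ell_2$ argument.
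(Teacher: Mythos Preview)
Your proposal is correct but substantially more elaborate than the paper's argument. The paper's proof is essentially a one-line reduction to Proposition 3.4: since Lemma 4.1 gives $\sup_{t>0}\|\Im\om^0_{\ve,m,h,L}(\cdot,t)\|_{p,L}<\del(\eta)$, and since on the finite lattice $Q_L$ with counting measure one has $\sup_{x\in Q_L}|f(x)|\le\|f\|_{p,L}$ for every $p\ge1$, it follows immediately that $\sup_{t>0,\,x\in Q_L}|\Im\om^0_{\ve,m,h,L}(x,t)|<\del(\eta)$. This pointwise control of the imaginary part is precisely the input that makes $\|\mathbf{b}(\om^\xi_{\ve,m,h,L}(\cdot,s))\|\le 1-\la+\eta$ hold at every lattice site, whence $\|\mathcal{L}_T\|_{\mathcal{E}_T}\le 1-\la+\eta$ in the original $\ell_2$-based space $\mathcal{E}_T$; the entire tightness and convergence argument (\ref{CD3})--(\ref{CT3}) of Proposition 3.4 then applies without modification.

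Your plan to rerun the Ces\`aro tightness argument in $\ell_p$, establish $\ell_p$ maximal-ergodic estimates for $X(t)$ and $Y(t)$, and only switch back to $\ell_2$ for the final convergence estimate would work, but it creates difficulties the paper sidesteps entirely. In particular, the ``main obstacle'' you flag---$\ell_p$ tail bounds for the Ornstein--Uhlenbeck process---never arises in the paper's route. The $\ell_p$ hypothesis on $\Im h$ serves solely to confine the imaginary part of the trajectory to the strip where (\ref{O1}) applies; it is not meant to replace the ambient Hilbert-space analysis. Your remark that $\|\mathcal{L}_T\|\le 1-\la+\eta$ follows ``by Lemma 4.1'' is correct in spirit but for the wrong reason: Lemma 4.1 gives the $\ell_p$ operator bound $\kappa_p(1-\la+\eta)$; the sharper $\ell_2$ bound $1-\la+\eta$ comes from the pointwise sup-norm control just described.
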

\begin{proof}
The main point to observe is that from Lemma 4.1 we have the inequality $\sup_{t>0,x\in Q_L}|\Im \om^0_{\ve,m,h,L}(x,t)|\le \sup_{t>0}\|\Im \om^0_{\ve,m,h,L}(\cdot,t)\|_{p,L}< \del(\eta)$. Then we follow the proof of Proposition 3.4.
\end{proof}
\begin{proposition}
Assume $V(\cdot)$  satisfies the conditions of Lemma 4.1  and $p>1$.  Then the function  $h(\cdot)\ra q_{\ve,m,L}(h(\cdot))$,  defined for real $h\in\ell_p(Q_L,\mathbb{R}^d)$, extends analytically to complex $h\in \ell_p(Q_L,\mathbb{C}^d)$ satisfying $\|\Im h(\cdot)\|_{p,L}<[\la-\eta-(1-1/\kappa_p)]\del(\eta)$.  Furthermore, if $p'=p/(p-1)$ is the conjugate of $p$ then the following inequality holds:
\be \label{L4}
|D_aq_{\ve,m,L}(h(\cdot))| \ \le \ \frac{1}{[(\la-\eta)-(1-1/\kappa_p)]} \|a\|_{p',L}\|h\|_{p,L} \ , \quad  {\rm for \ } a\in\ell_{p'}(Q_L,\mathbb{C}^d) \ .
\ee
Let $q,q'>1$ be conjugate so  $1/q+1/q'=1$. Then second derivatives of the function $h(\cdot)\ra q_{\ve,m,L}(h(\cdot))$ satisfy the inequality
\be \label{M4}
|D^2_{a_1,a_2}q_{\ve,m,L}(h(\cdot))| \ \le \   \frac{1}{[(\la-\eta)-(1-1/\kappa_q)]}\|a_1\|_{q',L}\|a_2\|_{q,L}  \ , \quad  {\rm for \ } a_1,a_2:Q_L\ra\mathbb{C}^d \ .
\ee

Assume $V(\cdot)$, in addition to satisfying (\ref{D1}), (\ref{O1}), also satisfies (\ref{AB1}), and $q_1,q_2,q_3>1$ satisfy $1/q_1+1/q_2+1/q_3\ge1$. Then
\be \label{N4}
|D^3_{a_1,a_2,a_3}q_{\ve,m,L}(h(\cdot))| \ \le \   \frac{M_\eta}{\prod_{j=1}^3[(\la-\eta)-(1-1/\kappa_{q_j})]}\prod_{j=1}^3\|a_j\|_{q_j,L} \ , \quad  {\rm for \ } a_j:Q_L\ra\mathbb{C}^d, \ j=1,2,3 .
\ee
\end{proposition}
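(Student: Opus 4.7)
The plan is to follow the $\ell_2$ template of Section 3, replacing the Banach space $\mathcal{E}_T$ by its $\ell_p$ analogue $\mathcal{E}_{T,p}$ of continuous functions $f:Q_L\times[0,T]\to\mathbb{C}^d$ equipped with the norm $\|f\|_{\mathcal{E}_{T,p}}=\sup_{0\le t\le T}\|f(\cdot,t)\|_{p,L}$. On $\mathcal{E}_{T,p}$ the linear operator $\mathcal{L}_T$ defined in (\ref{O3}) satisfies $\|\mathcal{L}_T\|_{\mathcal{E}_{T,p}}\le \kappa_p(1-\la+\eta)$: the pointwise multiplication by $\mathbf{b}(\om^0)$ has operator norm at most $1-\la+\eta$ whenever $\Im \om^0$ stays in the strip $|\Im \om|<\del(\eta)$ (ensured by Lemma 4.1), while the CZ estimate (\ref{R2}) contributes the factor $\kappa_p$. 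Under the hypothesis $(\la-\eta)>1-1/\kappa_p$ this operator norm is strictly less than one, so every affine equation $f=\mathcal{L}_T f+k$ in $\mathcal{E}_{T,p}$ is uniquely solvable with $\|f\|_{\mathcal{E}_{T,p}}\le \|k\|_{\mathcal{E}_{T,p}}/[1-\kappa_p(1-\la+\eta)]$, and the algebraic identity $1-\kappa_p(1-\la+\eta)=\kappa_p[(\la-\eta)-(1-1/\kappa_p)]$ will drive all the subsequent bounds. The analytic extension of $q_{\ve,m,L}(\cdot)$ to the $\ell_p$ strip is immediate from Proposition 4.2 combined with the line-integral formula (\ref{CU3}).

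For (\ref{L4}), differentiating (\ref{CU3}) in direction $a$ and invoking Proposition 4.2 yields $D_a q_{\ve,m,L}(h)=[a,g_{\ve,m,L}(h)]_L$. I then imitate the proof of (\ref{X3}) in Proposition 3.1: the first variation $f_\al=d\om^0_{\ve,m,\al h,L}/d\al$ solves the affine equation (\ref{Y3}), whose driving term has $\ell_p$ norm at most $\kappa_p\|h\|_{p,L}$. Solving in $\mathcal{E}_{T,p}$ gives $\|f_\al(\cdot,T)\|_{p,L}\le \|h\|_{p,L}/[(\la-\eta)-(1-1/\kappa_p)]$, and Hölder against $a\in\ell_{p'}$ combined with the mean value theorem and Proposition 4.2 gives (\ref{L4}). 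For (\ref{M4}) I use (\ref{CZ3}), which extends to the complex strip by the argument at the end of the proof of Theorem 1.2 with Proposition 4.2 in place of Proposition 3.4. The function $D_{a_2}\om^0$ solves (\ref{DA3}) with driving term of $\ell_q$ norm at most $\kappa_q\|a_2\|_{q,L}$; solving in $\mathcal{E}_{T,q}$ gives $\|D_{a_2}\om^0(\cdot,T)\|_{q,L}\le \|a_2\|_{q,L}/[(\la-\eta)-(1-1/\kappa_q)]$, and Hölder against $a_1\in\ell_{q'}$ yields (\ref{M4}).

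The bound (\ref{N4}) is where I expect the main obstacle. Starting from the third-derivative extension of (\ref{C4}), $D^3_{a_1,a_2,a_3}q_{\ve,m,L}(h)=\lim_{T\to\infty}\langle[a_1,g(\cdot,T)]_L\rangle$ with $g=D^2_{a_2,a_3}\om^0$ solving (\ref{D4}) and forcing $k$ given by (\ref{E4}), I apply Hölder to obtain $|\langle[a_1,g(\cdot,T)]_L\rangle|\le \|a_1\|_{q_1,L}\|g(\cdot,T)\|_{q_1',L}$, then bound $\|g(\cdot,T)\|_{q_1',L}\le \|k(\cdot,T)\|_{q_1',L}/[1-\kappa_{q_1}(1-\la+\eta)]$ by solving in $\mathcal{E}_{T,q_1'}$, using the CZ duality identity $\kappa_{q_1'}=\kappa_{q_1}$ that follows from the formal self-adjointness of $\na[-\De+m^2]^{-1}\na^*$. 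For $\|k(\cdot,T)\|_{q_1',L}$ I test against arbitrary $a\in\ell_{q_1}$ and apply the trilinear Hölder inequality to the sum in (\ref{E4}), using (\ref{AB1}) for the $M_\eta$ bound on $V'''$, the CZ estimate $\|\na[-\De+m^2]^{-1}\na^*a\|_{q_1,L}\le \kappa_{q_1}\|a\|_{q_1,L}$ for the first factor, and the $\ell_{r_j}$ analogue of (\ref{M4}) for the $D_{a_j}\om^0$ factors, with exponents chosen so that $1/q_1+1/r_2+1/r_3=1$ and $r_j\ge q_j$ for $j=2,3$ (possible because $\sum 1/q_j\ge 1$). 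The decisive algebraic point is that the factor $\kappa_{q_1}$ produced by the CZ estimate cancels the $\kappa_{q_1}^{-1}$ hidden in the denominator $\kappa_{q_1}[(\la-\eta)-(1-1/\kappa_{q_1})]$ coming from the resolvent, leaving the clean form (\ref{N4}). The main technical subtlety is handling the strict case $\sum 1/q_j>1$, where one must use the embedding $\|a_j\|_{r_j,L}\le \|a_j\|_{q_j,L}$ on the counting-measure lattice $Q_L$ (valid for $r_j\ge q_j$) and choose $r_j$ close enough to $q_j$ that the resulting $\kappa_{r_j}$ factors in the $D_{a_j}\om^0$ estimates do not spoil the stated denominator.
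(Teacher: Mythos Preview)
Your proposal is correct and follows essentially the same route as the paper: work in $\mathcal{E}_{T,p}$, use $\|\mathcal{L}_T\|_{\mathcal{E}_{T,p}}\le\kappa_p(1-\la+\eta)$, solve the affine first-variation equations (\ref{Y3}), (\ref{DA3}), (\ref{D4}) by the contraction mapping bound, and invoke the duality $\kappa_{q_1}=\kappa_{q_1'}$ to cancel the stray CZ factor in (\ref{N4}).

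One simplification: your worry at the end about the case $\sum_j 1/q_j>1$ is over-engineered. You propose choosing $r_j\ge q_j$ with $\sum_j 1/r_j=1$, embedding $\|a_j\|_{r_j,L}\le\|a_j\|_{q_j,L}$, and then solving (\ref{DA3}) in $\mathcal{E}_{T,r_j}$---which introduces the unwanted $\kappa_{r_j}$. The cleaner fix (and what the paper implicitly does, by writing the H\"older step (\ref{O4}) directly with the exponents $q_j$) is to reverse the order: first solve (\ref{DA3}) in $\mathcal{E}_{T,q_j}$ to get $\sup_s\|D_{a_j}\om^0(\cdot,s)\|_{q_j,L}\le\|a_j\|_{q_j,L}/[(\la-\eta)-(1-1/\kappa_{q_j})]$, and only then apply the counting-measure embedding $\|D_{a_j}\om^0\|_{r_j,L}\le\|D_{a_j}\om^0\|_{q_j,L}$. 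Equivalently, on the lattice with counting measure the trilinear H\"older inequality $\sum_y|f_1f_2f_3|\le\prod_j\|f_j\|_{q_j,L}$ holds whenever $\sum_j 1/q_j\ge 1$, so no auxiliary exponents $r_j$ are needed and the denominators $\kappa_{q_j}$ come out exactly as stated.
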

\begin{proof}
We conclude from (\ref{CU3}) and  Proposition 4.1 that  the function  $h(\cdot)\ra q_{\ve,m,L}(h(\cdot))$  defined for real $h\in\ell_p(Q_L,\mathbb{R}^d)$ extends analytically to complex $h\in \ell_p(Q_L,\mathbb{C}^d)$ satisfying $\|\Im h(\cdot)\|_{p,L}<[(\la-\eta)-(1-1/\kappa_p)]\del(\eta)]$.   To prove (\ref{L4}) we proceed similarly to the proof of (\ref{A4}).   The operator
$\mathcal{L}_{T,\al}$ in (\ref{Y3}) acts on the Banach space $\mathcal{E}_{T,p}$ of continuous functions  $f:Q_L\times [0,T]\ra\mathbb{C}^d$ with norm $\|f(\cdot,\cdot)\|_{\mathcal{E}_{T,p}}=\sup_{0\le t\le T}\|f(\cdot,t)\|_{p,L}$ has norm $\|\mathcal{L}_T\|_{\mathcal{E}_{T,p}}\le \kappa_p\{1-\la+\eta\}$.  Hence the solution to (\ref{Y3}) satisfies the inequality $\|f_\al\|_{\mathcal{E}_{T,p}}\le\|h\|_{p,L}/[(\la-\eta)-(1-1/\kappa_p)]$. The proof of (\ref{M4}) proceeds similarly on using the  equation (\ref{DA3}).

To prove (\ref{N4}) we observe that similarly to (\ref{F4}) we have the inequality
\begin{multline} \label{O4}
|[a(\cdot),k(\cdot,t)]_L| \ \le  \  M_\eta\kappa_{q_1}\|a(\cdot)\|_{q_1,L}  \sup_{s>0}\|D_{a_2}\om^0_{\ve,m,h,L}(\cdot,s)\|_{q_2,L}\sup_{s>0}\|D_{a_3}\om^0_{\ve,m,h,L}(\cdot,s)\|_{q_3,L} \\
\le \ \frac{ M_\eta\kappa_{q_1}\|a(\cdot)\|_{q_1,L}  }{\prod_{j=2}^3[(\la-\eta)-(1-1/\kappa_{q_j})]}\prod_{j=2}^3\|a_j\|_{q_j,L}  \ ,
\end{multline}
where $M_\eta$ is the constant of (\ref{AB1}).
It follows from (\ref{O4}) that if $1/q_1+1/q'_1=1$ then
\be \label{P4}
\|k(\cdot,t)\|_{q'_1,L} \ \le \ \frac{ M_\eta\kappa_{q_1}  }{\prod_{j=2}^3[(\la-\eta)-(1-1/\kappa_{q_j})]}\prod_{j=2}^3\|a_j\|_{q_j,L}  \ .
\ee
The solution to (\ref{D4}) satisfies the inequality
\be \label{Q4}
\|g(\cdot,t)\|_{q'_1,L} \ \le \ \frac{1}{\kappa_{q'_1}[(\la-\eta)-(1-1/\kappa_{q'_1})]} \sup_{s>0}\|k(\cdot,s)\|_{q'_1,L} \ .
\ee
Now (\ref{N4}) follows from (\ref{P4}), (\ref{Q4}) on using the identity $\kappa_q=\kappa_{q'}$ for all conjugate $q,q'>1$. 
\end{proof}
\begin{proof}[Proof of Theorem 1.3]
We use the identity (\ref{I4}) and estimate the RHS using (\ref{N4}) with $a_1=a_2=a_3=h$ and $q_1=q_2=q_3=p$. 
\end{proof}
\begin{proof}[Proof of Theorem 1.4]
We first consider the case $d\ge 3$. For  two functions $h,h':Q_L\ra\mathbb{C}^d$ we may from (\ref{I1}) represent the covariance of  the
variables  $\exp\{-[h(\cdot),\na\phi(\cdot)]_L/\ve\}$ and  $\exp\{-[h'(\cdot),\na\phi(\cdot)]_L/\ve\}$  with respect to the measure 
$\langle\cdot\rangle_{\ve,m,0,L}$ in terms of the function $q_{\ve,m,L}(\cdot)$ by
\begin{multline} \label{R4}
{\rm cov}_{\ve,m,0,L}\left\{   \exp\left[   -\frac{ [h(\cdot),\na\phi(\cdot)]_L }{\ve}  \right] ,  \  \exp\left[   -\frac{ [h'(\cdot),\na\phi(\cdot)]_L }{\ve}  \right] \ \ \right\}  \ = \\
\left\langle \  \exp\left[   -\frac{ [h(\cdot),\na\phi(\cdot)]_L }{\ve}  \right] \ \ \right\rangle_{\ve,m,0,L}  \ \left\langle \  \exp\left[   -\frac{ [h'(\cdot),\na\phi(\cdot)]_L }{\ve}  \right] \ \ \right\rangle_{\ve,m,0,L}
 \ \times \\
  \left\{\exp\left[-\frac{q_{\ve,m,L}(h(\cdot)+h'(\cdot))-q_{\ve,m,L}(h(\cdot))-q_{\ve,m,L}(h'(\cdot))+q_{\ve,m,L}(0)}{\ve}\right]-1\right\}  \ .
\end{multline}
From Taylor's theorem we have that
\begin{multline} \label{S4}
q_{\ve,m,L}(h(\cdot)+h'(\cdot))-q_{\ve,m,L}(h(\cdot))-q_{\ve,m,L}(h'(\cdot))+q_{\ve,m,L}(0) \\
= \ \int_0^1\int_0^1 d\al  \ d\beta  \ D^2_{h,h'} q_{\ve,m,L}(\al h(\cdot)+\beta h'(\cdot))  \\
 = \  D^2_{h,h'} q_{\ve,m,L}(0)+\int_0^1\int_0^1\int_0^1 d\al  \ d\beta  \ d\ga \ D^3_{h,h',\al h+\beta h'} q_{\ve,m,L}(\ga[\al h(\cdot)+\beta h'(\cdot)]) \ \ .
\end{multline}

We choose $h(\cdot),h'(\cdot)$ in (\ref{R4}), depending on $L$, so that the limit of (\ref{R4}) as $L\ra\infty,m\ra0$ yields the covariance in  (\ref{AN1}).  To do this we note that the periodic 
Green's function $G_{\nu,L}(\cdot)$  on $Q_L$ corresponding to the Green's function $G_\nu(\cdot)$ on $\mathbb{Z}^d$ defined by (\ref{AE1}) is given by 
\be \label{T4}
G_{\nu,L}(x) \ = \ \sum_{n\in\mathbb{Z}^d}G_\nu(x+Ln)\  , \quad x\in Q_L \ .
\ee
For $x\in Q_L$ we define $h_{x,\nu,L}:Q_L\ra\mathbb{R}^d$ by $h_{x,\nu,L}(y)=\na G_{\nu,L}(y-x), \ y\in Q_L$.  We then have for any $\rho\in\mathbb{C}$ that
\begin{multline} \label{U4}
\lim_{m\ra0}\lim_{L\ra\infty} {\rm cov}_{\ve,m,0,L}\left\{   \exp\left[   \frac{\rho [h_{x,\nu,L}(\cdot),\na\phi(\cdot)]_L }{\ve}  \right] ,  \  \exp\left[   -\frac{ \rho[h_{0,\nu,L}(\cdot),\na\phi(\cdot)]_L }{\ve}  \right] \ \ \right\}  \\
= \  {\rm cov}_\ve\left\{   \exp\left[   \frac{\rho [h_{x,\nu}(\cdot),\na\phi(\cdot)]_\infty }{\ve}  \right] ,  \  \exp\left[   -\frac{ \rho[h_{0,\nu}(\cdot),\na\phi(\cdot)]_\infty }{\ve}  \right] \ \ \right\}  \ .
\end{multline}
To see (\ref{U4}) we use the identity
\begin{multline} \label{V4}
 \left\langle \  \exp\left[   -\frac{ [h(\cdot)+a(\cdot),\na\phi(\cdot)]_L }{\ve}  \right] \ \ \right\rangle_{\ve,m,0,L}- \left\langle \  \exp\left[   -\frac{ [h(\cdot),\na\phi(\cdot)]_L }{\ve}  \right] \ \ \right\rangle_{\ve,m,0,L} \\
  = \ -\frac{1}{\ve}\int_0^1d\al \ \left\langle  [a(\cdot),\na\phi(\cdot)]   \exp\left\{   -\frac{1}{\ve}[h(\cdot)+\al a(\cdot),\na\phi(\cdot)]   \right\}       \right\rangle_{\ve,m,0,L}  \ ,
  \quad h(\cdot),a(\cdot):Q_L\ra\mathbb{C}^d \ .
\end{multline}
 Applying the Schwarz inequality to the RHS of (\ref{V4}) followed by the BL inequality, the limit (\ref{U4}) is a consequence of  (\ref{AD1}).  In view of (\ref{AF1}), if we take the limit  as $\nu\ra0$ of the covariance on the RHS of (\ref{U4})  we obtain the covariance in (\ref{AN1}). 
 We have from (\ref{P2}) that the first term on the RHS of (\ref{S4}) is
\be \label{W4}
D^2_{h,h'} q_{\ve,m,L}(0) \ = \ -\ve^{-1} \left\langle   \    [h(\cdot),\na\phi]_L, \ [h'(\cdot),\na\phi(\cdot)]_L \                         \right\rangle_{\ve,m,0,L} \ .
\ee
Choosing $h=-\rho h_{x,\nu,L},  \ h'=\rho h_{0,\nu,L}$,  and taking the limits $L\ra\infty,m\ra0,\nu\ra0$ on the RHS of (\ref{W4})  we see similarly to  before that the RHS of (\ref{W4}) converges to  
$\ve^{-1}\rho^2\langle \phi(x)\phi(0)\rangle_\ve$. 

We bound the third derivative term on the RHS of (\ref{S4}) when  $h=-\rho h_{x,\nu,L},  \ h'=\rho h_{0,\nu,L}$, uniformly as $L\ra\infty,m\ra0,\nu\ra0$.  To do this we use the identity (\ref{C4}).  Comparing  (\ref{D4}), (\ref{E4}) to  (\ref{P3}), we see from (\ref{S3}) that
\begin{multline} \label{X4}
D^3_{a_1,a_2,a_3} q_{\ve,m,L}(h(\cdot)) \ = \ -\lim_{T\ra\infty} \left\langle \ [a_1(\cdot), \{I-\mathcal{L}_T\}^{-1} k(\cdot,T)]_L \ \right\rangle \\
= \ -\lim_{T\ra\infty}\left\{ \left\langle \  \lim_{n\ra\infty} \frac{1}{2}\int_0^Te^{-(T-t)/2} [a_n(\cdot,t,T),g(\cdot,t)]_L  \ dt \ \right\rangle  \ \right\}  \ ,
\end{multline}
where $g(\cdot,t)$ is defined from(\ref{E4})  by
\begin{multline} \label{Y4}
  {\rm For \ all \  }  a(\cdot)\in\ell_2(Q_L,\mathbb{R}^d) \ , \quad  \quad [a(\cdot),g(\cdot,t)]_L \ = \\   \sum_{y\in Q_L}  \
 V'''\left(         \om^0_{\ve,m,h,L}(y,t)                 \right)\left[ \na[-\De+m^2]^{-1}\na^*a(y),    D_{a_2}\om^0_{\ve,m,h,L}(y,t),          D_{a_3}\om^0_{\ve,m,h,L}(y,t)       \right] \ ,
\end{multline}
and $a_n(\cdot,t,T)$ is given by (\ref{T3}) with $a(\cdot)\equiv a_1(\cdot)$.  We conclude from (\ref{X4}), (\ref{Y4}) that
\begin{multline} \label{Z4}
D^3_{a_1,a_2,a_3} q_{\ve,m,L}(h(\cdot)) \ = \ - \lim_{T\ra\infty}\lim_{n\ra\infty} \frac{1}{2}\int_0^T dt \ e^{-(T-t)/2}\sum_{y\in Q_L}  \\
 V'''\left(         \om^0_{\ve,m,h,L}(y,t)                 \right)\left[ \na[-\De+m^2]^{-1}\na^*a_n(y,t,T),    D_{a_2}\om^0_{\ve,m,h,L}(y,t),          D_{a_3}\om^0_{\ve,m,h,L}(y,t)       \right]  \ .
\end{multline}

For the purposes of estimating the RHS of (\ref{S4}) we take $a_1=-\rho h_{x,\nu,L},  \ a_2=\rho h_{0,\nu,L},  \ a_3=\al a_1+\beta a_2, \ h=\ga a_3$.  Next observe that we may consider the operator 
$\na[-\De+m^2]^{-1}\na^*$, which occurs in the expression (\ref{Z4}),  as acting on functions $a:\mathbb{Z}^d\ra\mathbb{C}^d$. In the case of (\ref{Z4}) these functions are periodic on $\mathbb{Z}^d$  with $Q_L$ as their fundamental domain. Similarly  the random function $ \om^0_{\ve,m,h,L}(\cdot,t)$ is periodic on $\mathbb{Z}^d$.  Hence the operator $\mathcal{L}_T$ of (\ref{O3}) is simply an operator  on periodic functions on $\mathbb{Z}^d$, which we may then extend to an operator on non-periodic functions.  Thus we can solve (\ref{DA3}) with $a=\rho h_{0,\nu,L}$ replaced by $a=\rho h_{0,\nu}$.
Carrying this out in all the terms that occur in (\ref{Z4}), we obtain a function $D^{3,{\rm approx}}_{a_1,a_2,a_3} q_{\ve,m,L}(h(\cdot))$. It is easy to see that
\be \label{AA4}
\lim_{L\ra\infty} \left[D^3_{a_1,a_2,a_3} q_{\ve,m,L}(h(\cdot))-D^{3,{\rm approx}}_{a_1,a_2,a_3} q_{\ve,m,L}(h(\cdot))\right] \ = \ 0 \ .
\ee 

We estimate $D^{3,{\rm approx}}_{a_1,a_2,a_3} q_{\ve,m,L}(h(\cdot))$ with $a_1=-\rho h_{x,\nu},  \ a_2=\rho h_{0,\nu},  \ a_3=\al a_1+\beta a_2, \ h=\ga a_3$, using weighted norm inequalities for singular integrals. For a function $a:\mathbb{Z}^d\ra\mathbb{C}^d$  we define the weighted $p$ norm of $a(\cdot)$ 
with weight $w(\cdot)$ by
\be \label{AB4}
\|a\|_{p,w} \ = \ \left[\sum_{y\in\mathbb{Z}^d} |a(y)|^pw(y) \ \right]^{1/p} \ ,
\ee
whence we have that
\be \label{AC4}
|a(y)| \ \le \ \frac{\|a\|_{p,w}}{w(y)^{1/p}} \ , \quad y\in\mathbb{Z}^d \ .
\ee
It is well known there exists a constant $C_d$, independent of $\nu>0$, such that
\be \label{AD4}
|h_{0,\nu}(y)|\le \frac{C_d}{|y|^{d-1}+1} \ , \quad |h_{x,\nu}(y)|\le \frac{C_d}{|x-y|^{d-1}+1} \ , \quad x,y\in\mathbb{Z}^d \ .
\ee
For $\al,p$ satisfying $0<\al<1/2, \ 1<p<\infty$, we define a weight $w_{\al,p}(\cdot)$ on $\mathbb{Z}^d$ by $w_{\al,p}(y)=[1+|y|^{d-1-\al}]^p, \ y\in\mathbb{Z}^d$.  We have then from (\ref{L3}) that 
$\|h_{0,\nu}\|_{p,w_{\al,p}}\le C_{\al,p}$ for a constant independent of $\nu>0$ if $p\al>d$.  Furthermore, if $p\al>d$ then $w_{\al,p}(\cdot)$ also satisfies the discrete Muckenhoupt $A_p$ condition \cite{stein1} on  $\mathbb{Z}^d$.  It follows  there is a constant $\kappa_{p,\al}$, independent of $m>0$,  such that
\be \label{AE4}
\|\na(-\De+m^2)^{-1}\na^*a(\cdot)\|_{p,w_{\al,p}} \ \le \  \kappa_{p,\al} \|a(\cdot)\|_{p,w_{\al,p}} \ , \quad a:\mathbb{Z}^d\ra\mathbb{C}^d \ . 
\ee

We assume now that  $\rho\in\mathbb{C}$ satisfies the inequality $|\Im\rho|\le (\la-\eta)\del(\eta)/\sup_{0<\nu<1}\|h_{0,\nu}\|_{2,\infty}$, so that the conditions of Lemma 3.1 are satisfied for sufficiently large $L$.  It follows from (\ref{AE4}) that if $\kappa_{p,\al}(1-\la+\eta)<1$ then the solution $ D_{a_2}\om^{0,{\rm approx}}_{\ve,m,h,L}(y,\cdot), \ y\in\mathbb{Z}^d,$  of (\ref{DA3}) with $a=\rho h_{0,\nu}$ is bounded  by
\be \label{AF4}
\sup_{0<t\le T} \left\|    D_{a_2}\om^{0,{\rm approx}}_{\ve,m,h,L}(\cdot,t)               \right\|_{p,w_{\al,p}} \ \le \ \frac{\rho\|h_{0,\nu}\|_{p,w_{\al,p}}}{\la-\eta-(1-1/\kappa_{p,\al})} \ .
\ee
We conclude from (\ref{AC4})-(\ref{AF4}) that
\be \label{AG4}
\sup_{0<t\le T} \left|    D_{a_2}\om^{0,{\rm approx}}_{\ve,m,h,L}(y,t)    \right| \ \le \    \frac{\rho C_{\la,\eta,\al,d}}{|y|^{d-1-\al}+1} \ , \quad y\in\mathbb{Z}^d \ ,
\ee
for some constant $C_{\la,\eta,\al,d}$ depending only on $\la,\eta,\al,d$. By a similar argument we also have that
\be \label{AH4}
\sup_{0<t\le T} \left|    D_{a_3}\om^{0,{\rm approx}}_{\ve,m,h,L}(y,t)    \right| \ \le \     \frac{\rho C_{\la,\eta,\al,d}}{|y|^{d-1-\al}+1} + \frac{\rho C_{\la,\eta,\al,d}}{|x-y|^{d-1-\al}+1}  \ .
\ee
Observe now from (\ref{T3}) that  for any $x\in\mathbb{Z}^d$, 
\be \label{AI4}
 \left\|    \na[-\De+m^2]^{-1}\na^*a^{\rm approx}_n(\cdot,t,T)    \right\|_{p,\tau_xw_{\al,p}} \ \le \    \exp\left[      \frac{\kappa_{\al,p}(1-\la+\eta)(T-t)}{2}            \right]\kappa_{\al,p}\|a\|_{p,\tau_xw_{\al,p}}  \ ,
\ee
where $\tau_xw_{\al,p}$ is the translation of the weight function $w_{\al,p}$ by $x$.  Taking $a=-\rho h_{x,\nu}$ in (\ref{AI4}) we then have that
\begin{multline} \label{AJ4}
\frac{1}{2}\int_0^T dt \ e^{-(T-t)/2}  \left\|    \na[-\De+m^2]^{-1}\na^*a^{\rm approx}_n(\cdot,t,T)    \right\|_{p,\tau_xw_{\al,p}} \\
 \le \ \frac{\rho\|h_{x,\nu}\|_{p,\tau_xw_{\al,p}}}{\la-\eta-(1-1/\kappa_{p,\al})} \ .
\end{multline}
It follows from (\ref{AC4}), (\ref{AJ4}) that
\be \label{AK4}
\frac{1}{2}\int_0^T dt \ e^{-(T-t)/2}  \left|    \na[-\De+m^2]^{-1}\na^*a^{\rm approx}_n(y,t,T)    \right| \ \le \  \frac{\rho C_{\la,\eta,\al,d}}{|x-y|^{d-1-\al}+1}  \ .
\ee
We then conclude from  (\ref{AB1}), (\ref{Z4}), (\ref{AG4}), (\ref{AH4}), (\ref{AK4})  that
\begin{multline} \label{AL4}
\left|      D^{3,{\rm approx}}_{a_1,a_2,a_3} q_{\ve,m,L}(h(\cdot))               \right|  \le \\  2M_\eta\sum_{y\in\mathbb{Z}^d} \frac{|\rho|^3 C_{\la,\eta,\al,d}^3}{\{|x-y|^{d-1-\al}+1\}\{|y|^{d-1-\al}+1\}^2} \
\le \ \frac{C_dM_\eta|\rho|^3 C_{\la,\eta,\al,d}^3}{|x|^{d-1-\al}+1} \ ,
\end{multline}
where the constant $C_d$ depends only on $d$. 

We have from (\ref{AF1}), (\ref{R4})-(\ref{W4}), (\ref{AA4}), (\ref{AL4}) that
\begin{multline} \label{AM4}
{\rm cov}_\ve\left\{   \exp\left[   \frac{\rho \phi(x) }{\ve}  \right] ,  \  \exp\left[   -\frac{ \rho\phi(0)}{\ve}  \right] \ \ \right\} \\
= \ \left\langle \  \exp\left[   -\frac{\rho \phi(0) }{\ve}  \right] \ \ \right\rangle_\ve^2\left\{\exp\left[       \frac{ -\ve^{-1}\rho^2\langle \phi(x)\phi(0)\rangle_\ve   +\rho^3{\rm Error}_\ve (x)}{\ve}                  \right] 
-1\right\} \ ,
\end{multline}
where $|{\rm Error}_\ve (x)|\le C/|x|^{d-1-\al}$ at large $|x|$.  From Theorem 1.1 of \cite{cf} we have that for some $\al>0$, 
\be \label{AN4}
\left|\ve^{-1}\langle \phi(x)\phi(0)\rangle_\ve  -G_{\mathbf{a}_{\ve,{\rm hom}}}(x)  \right| \ \le \ \frac{C}{|x|^{d-2+\al}} \ , \quad {\rm for \ }  |x|\ge 1 \ ,
\ee
where the constant $C$ is independent of $\ve>0$. Equation (\ref{AN1}) is a consequence of (\ref{AM4}), (\ref{AN4}). 

In the case $d=2$ we may derive equation (\ref{AO1}) in a similar way by using Theorem 1.3. Thus we choose $h=-\rho h_{x,\nu,L}+\rho h_{0,\nu,L}$ in (\ref{AC1}) and note that for any $p>2$ one has
$\limsup_{L\ra\infty,\nu\ra0}\| h_{0,\nu,L}\|_{p,L}\le C_p$,  where the constant $C_p$ depends only on $p$.  We may then take the limits $L\ra\infty,m\ra0,\nu\ra0$ as in the case $d\ge 3$ to obtain (\ref{AO1}) from (\ref{AC1}). 
\end{proof}
\begin{rem}
It is possible to obtain a bound on the covariance in (\ref{AN1}), which holds for any $\la>0$, provided $\rho\in\mathbb{R}$. This follows from the first identity in (\ref{S4})  by using the discrete Aronson estimate \cite{aronson, gos} for Green's functions for parabolic equations. The utility  of the Aronson inequality in the context of bounding correlation functions in Euclidean field theory was first observed in \cite{ns1}. 
\end{rem}

\end{document}